\definecolor{green}{rgb}{0.0, 0.5, 0.5}
\definecolor{yellow}{rgb}{0.5, 0.5, 0}
\definecolor{lgray}{gray}{0.9}
\definecolor{llgray}{gray}{0.95}
\definecolor{lllgray}{gray}{0.975}
\numberwithin{equation}{section}
\newtheorem{definition}{Definition}[section]
\newtheorem{lemma}[definition]{Lemma}
\newtheorem{proposition}[definition]{Proposition}
\newtheorem{theorem}[definition]{Theorem}
\newtheorem{corollary}[definition]{Corollary}
\theoremstyle{remark}%
\newtheorem{remark}{Remark}%
\numberwithin{remark}{section}
\def\cN{\mathcal{N}}
\def\a0{{\rm a}_0}
\newcommand{\vertiii}[1]{{\left\vert\kern-0.25ex\left\vert\kern-0.25ex\left\vert #1 
		\right\vert\kern-0.25ex\right\vert\kern-0.25ex\right\vert}}
\newcommand{\inn}[2]{\left\langle#1,\,#2\right\rangle}
\DeclareMathOperator{\supp}{supp}
\newcommand{\Lap}{\Delta}
\newcommand{\di}{\partial}
\newcommand{\br}[1]{\bigl\langle#1\bigr\rangle}
\newcommand{\si}{\sigma}
\newcommand{\eps}{\epsilon}
\newcommand{\g}{\gamma}
\newcommand{\al}{\alpha}
\newcommand{\Zb}{\mathbb{Z}}
\newcommand{\Rb}{\mathbb{R}}
\newcommand{\N}{\mathcal{N}}
\newcommand{\cL}{\mathcal{L}}
\newcommand{\cU}{\mathcal{U}}
\newcommand{\W}{\mathcal{W}}
\newcommand{\om}{\omega}
\renewcommand{\l}{\lambda} % renewed from slash l
\newcommand{\sbr}[1]{\left[#1\right]}
\newcommand{\Set}[1]{\left\{#1\right\}}
\newcommand{\md}[6]{\ensuremath{
		\ifinner
		\tfrac{\partial{^{#2}}#1}{\partial{#3^{#4}}\partial{#5^{#6}}}
		\else
		\tfrac{\partial{^{#2}}#1}{\partial{#3^{#4}}\partial{#5^{#6}}}
		\fi
}}
\newcommand{\del}[1]{\left(#1\right)}
\newcommand{\thmref}[1]{Theorem~\ref{#1}}
\newcommand{\secref}[1]{Section~\ref{#1}}
\newcommand{\lemref}[1]{Lemma~\ref{#1}}
\newcommand{\propref}[1]{Proposition~\ref{#1}}
\newcommand{\remref}[1]{Remark~\ref{#1}}
\DeclareMathOperator{\dG}{\mathrm{d}\Gamma}
\newcommand{\wndel}[1]{\inn{\psi}{{#1}\psi}}
\newcommand{\wtdel}[1]{\inn{\psi_t}{{#1}\psi_t}}
\newcommand{\xzdel}[1]{\inn{\xi_0}{{#1}\xi_0}}
\renewcommand{\cp}{\mathrm{c}}
\newcommand{\Fl}{\mathcal{F}_{\perp \varphi_t}^{\leq N}}
\newcommand{\cO}{\mathcal{O}}
\newcommand{\Int}{{\mathrm{int}}}
\title[ Bose-Einstein condensates in  disordered media]{On Bose-Einstein condensates in  disordered media}
\author{Marius Lemm}
\address{Marius Lemm, Department of Mathematics, University of T\"ubingen, 72076 T\"ubingen, Germany }
\email{marius.lemm@uni-tuebingen,de}
\author{Simone Rademacher}
\address{Simone Rademacher, Department of Mathematics, LMU Munich,  Theresienstrasse 39, 80333 Munich, Germany }
\email{simone.rademacher@math.lmu.de}
\author{Jingxuan Zhang}
\address{Yau Mathematical Sciences Center\\
	Tsinghua University\\
	Haidian District\\
	Beijing 100084, China }
\email{jingxuan@tsinghua.edu.cn}
\begin{document}
	\begin{abstract}
		We consider the quantum dynamics of interacting bosons in the mean-field regime when they are subjected to a disordered potential, which is either random or quasi-periodic. Starting from a spatially localized Bose-Einstein condensate, we prove that fluctuations around the condensate propagate with a small velocity due to the disorder. This provides an example of a disordered many-body system with provably slow transport behavior in any spatial dimension. The main technical novelty is an interaction picture analysis relative to the Anderson-localized one-body  dynamics. 
		%We study slow down effect for the propagation of quantum fluctuation for mean-field bosons in disordered quantum many-body system. Assuming the $1$-body state and the Hartree state both satisfy suitable dynamical localization estimate, we prove corresponding slow propagation bounds for the spreading of quantum fluctuation. The proof is based on analysis in the interaction picture for the fluctuation dynamics around the Hartree states. 
	\end{abstract}
	
	\maketitle 
	\date{\today}
	
	\section{Introduction}

	Disorder affects quantum  dynamics profoundly in ways that have no classical analog. The most famous example is that of Anderson localization \cite{anderson}, the complete absence of all  transport for a single quantum particle, which  occurs in any spatial dimension at sufficiently strong disorder. Rigorous proofs of this are landmarks works in mathematical physics: in the random case by Fr\"ohlich-Spencer \cite{frohlich1983absence} and by Aizenman-Molchanov \cite{aizenman1993localization}, and in the quasi-periodic case by Bourgain-Goldstein-Schlag \cite{BGS} and by Bourgain \cite{Bou}.  Both directions have grown into large subfields of mathematical physics  and we refer to the books \cite{bourgain2005green,aizenman2015random} and the review \cite{marx2017dynamics} for additional background. Pertinent directions for us are that a weaker form of dynamical localization  has also been proved for the nonlinear Schr\"odinger equation \cite{WZ2008,CSZ2020,CSW,CSWa} and that Anderson localization was extended to systems of multiple particles  \cite{chulaevsky2009multi,fauser2015multiparticle} with bounds that deteriorate as the particle number $N$ grows large. 
	
	In contrast to this large body of literature, the effect of disorder  on quantum \textit{many-body} systems is far less understood.
	By a many-body system, we mean a system with arbitrarily large particle number $N$, also known as positive-density systems. Understanding the quantum effects of disorder uniformly in $N$ remains a central problem for both mathematicians and condensed-matter physicists. 
	A particularly famous instance of this problem is to establish many-body localization (MBL) in  spatial dimension $d=1$ when the disorder is sufficiently strong compared to the interaction strength. There is a great deal of physics literature on MBL, but from the mathematical side, there are only a few rigorous results. In particular, for strongly disordered 1D spin chains, \cite{imbrie2016many} proved MBL under a level repulsion assumption and \cite{de2024absence} established  sub-diffusion. Other rigorous studies focused on 1D chains in spectral subspaces of low energy \cite{mastropietro2017localization,beaud2017low,elgart2018manifestations,elgart2024slow,elgart2024localization} or chains with heavy-tailed disorder \cite{gebert2022lieb,baldwin2023disordered,baldwin2025subballistic}. Other recent works considered stability of the MBL phase  under sparse perturbations \cite{toniolo2024stability,toniolo2025logarithmic} and MBL in hyperbolic lattices \cite{yin2024eigenstate}. 
	
	By contrast, in dimensions $d>1$, it is now commonly believed  that the  MBL phase is dynamically unstable \cite{de2017stability,luitz2017small}. Still, it is physically expected  that the coherence-destroying effects of disorder should lead to a substantial slowdown effect on the quantum many-body dynamics also for $d>1$. This leads to:\\

	\textit{Problem 1:} Show that disorder slows down many-body dynamics in any dimension.\\

	On the one hand, Problem 1 is related to studying MBL because it is also concerned with deriving bounds on transport from disorder. On the other hand, the focus is different: First, Problem 1 seeks to treat   any dimension. Second, a ``slow-down'' could be even a very modest decrease in propagation velocity.
	Methodologically, this problem is challenging and important because the existing analytical toolkit to prove any kind of dynamical slow-down is extremely limited. The reason for this is that all standard techniques for deriving ballistic upper bounds on quantum many-body transport, most notably Lieb-Robinson bounds \cite{lieb1972finite}, are completely insensitive to disorder. In other words, the  robustness of Lieb-Robinson bounds to the system details turns from a blessing to a curse when one is tasked with exploiting special features such as disorder. Indeed, exploiting disorder to improve Lieb-Robinson bounds has recently become an actively researched topics in 1D spin chains \cite{gebert2022lieb,baldwin2023disordered,toniolo2024stability,baldwin2025subballistic}. The recent work \cite{mcdonough2025non} then established the first general positive answer to Problem 1 by deriving a Lieb-Robinson bound with non-perturbatively small velocity in any spatial dimension. 
	
	In the present paper, we investigate Problem 1 for a paradigmatic model of quantum many-body dynamics --- interacting bosons in the mean-field regime \cite{hepp1974classical,ginibre1979classical,RS,lewin2015fluctuations}. We work on a lattice and the novel twist is that we place this system in a disordered environment modeled by a potential that is either random or quasi-periodic.\footnote{We mention in passing that discrete Bose gases in such disordered environments are physically realized in modern optical lattice experiments to study MBL \cite{choi2016exploring,sierant2018many}.} 
	We consider an initial state that is given by a spatially localized Bose-Einstein condensate and prove that its propagation is extremely slow: after time $t\gg 1$, the entire system is approximately contained within distances less than $\epsilon t$ from its initial location, where the speed $\epsilon\to 0$ with increasing disorder strength. 
	This shows strong slow-down due to disorder and thus addresses Problem 1 for mean-field bosons.\\ 
	
	We now continue the introduction with a slightly more precise, but still informal summary of the setup and main result. The mathematically precise versions are then given in Section \ref{sec:setup}. Afterwards, in Subsection \ref{sec:discussion} we discuss further how the result fits into the literature.

	\subsection{Summary of setup and main result }
	%In this section, we present a largely informal summary of the setup, the relevant literature and our main theorem. The mathematically precise formulations are deferred to Section \ref{sec:setup}.
	We consider $N$ bosons on $\mathbb Z^d$ in the mean-field regime. The bosonic nature is encoded in the choice of Hilbert space as the permutation-symmetric  elements\footnote{Here and in the following, all elements of $\ell^2$-spaces are take to be complex-valued functions.} of $\ell^2((\Zb^d)^N)$, i.e.,
	\[
	\psi_{N,t}(x_{\pi(1)},\ldots,x_{\pi(N)})=\psi_{N,t}(x_1,\ldots,x_N),\qquad \forall \pi \in S_N.
	\]
	The system's energy is encoded in the Hamiltonian, a linear self-adjoint operator of the form 
	\begin{equation}\label{eq:Hintro}
		H_N := \sum_{i=1}^N (-\Delta_{x_i}+\lambda v_{\omega}(x))+\frac{1}{N}\sum_{1\leq i,j\leq N} V(x_i-x_j).
	\end{equation}
	Here, $-\Delta_{x_i}$ denotes the discrete Laplacian acting in the $x_i$ variable,  $v_\omega(x_i)$ are multiplication operators that are, e.g., generated by a family of independent and identically distributed (i.i.d.) random variables, and $\lambda\in\mathbb R$ will be  a large parameter called the disorder strength. For the interaction, we consider for simplicity the ``on-site'' form  $V(x_i-x_j)=U\delta_{x_i,x_j}$,  with some $\mathcal{O}(1)$ constant $U\in\Rb$, but the proof generalizes to any $V:\mathbb Z^d\to\mathbb R$  of  compact support. The main challenge is that all bounds should be uniform in the particle number parameter $N$, which in particular means that the dimension of the Hilbert space is extremely large. The disorder strength $\lambda$ will be a large constant independent of $N$. 
	
	We study the quantum many-body dynamics associated to $H_N$. As a self-adjoint operator, $H_{N}$ generates the associated unitary operator $e^{-itH_{N}}$, which is the  solution operator to
	the $N$-body Schr\"odinger equation
	\begin{align}
		\label{eq:Schroe}
		i \partial_t \psi_{N,t} = H_N \; \psi_{N,t}.
	\end{align}
	The standard initial states for the dynamic \eqref{eq:Schroe} describe  uncorrelated Bose-Einstein condensates. Mathematically, this means one considers initial data given by a  product function of the form
	\begin{align}
		\label{eq:intro_initial}
		\psi_{N,0}(x_1,\ldots,x_N)=\prod_{i=1}^N\varphi_0(x_i) ,
	\end{align}
	for a fixed $\varphi_0\in \ell^2(\Zb^d)$. Due to the interactions, the product structure is not preserved under time evolution. A large subfield of mathematical physics  is concerned with showing that the many-body evolution approximately remains a product state; see \cite{hepp1974classical,ginibre1979classical,RS,lewin2015fluctuations} for results about mean-field dynamics and \cite{napiorkowski2023dynamics} for a review of work on the more singular Gross-Pitaevskii regime. The ``mean-field approximation'' is  mathematically captured through the $1$-particle reduced density matrix (or $1$-marginal in probabilistic terms) defined by
	\[
	\gamma_{\psi_{N,t}}(x,y)=\sum_{x_2,\ldots,x_N\in\mathbb Z^d}\psi_{N,t}(x,x_2,\ldots,x_N)\overline{\psi_{N,t}(y,x_2,\ldots,x_N)}.
	\]
	From the spectral theorem, one sees that $\gamma_{\psi_{N,t}}$ is comprised of the one-body wave functions in $\ell^2(\mathbb Z^d)$ that make up the many-body state $\psi_{N,t}$. Bose-Einstein condensation is mathematically associated with $\gamma_{\psi_{N,t}}$ having an eigenvalue proportional to $N$. For instance, at time $t=0$, the form of the initial state \eqref{eq:intro_initial} implies
	\[
	\gamma_{\psi_{N,0}}(x,y)=N\varphi_0(x)\overline{\varphi_0(y)}.
	\]
	Typical results, e.g.\ \cite{RS}, show that Bose-Einstein condensation is \textit{globally} preserved in the sense that for any bounded operator $O:\ell^2(\mathbb Z^d)\to \ell^2(\mathbb Z^d)$,
	\begin{equation}\label{eq:global}
		\big\vert \Tr \del{\big(\tfrac1N\gamma_{\psi_{N,t}} -   \vert \varphi_t \rangle \langle \varphi_t \vert\big)  O} \big\vert \leq C \|O \|_{\rm op}\frac{e^{Kt}}{N^{1/2}},
	\end{equation}
	where $\varphi_t\in \ell^2(\Zb^d)$ solves a suitable nonlinear Schr\"odinger equation with initial data $\varphi_0$ and this describes the dynamics of the highly populated condensate part. In situations where dispersive estimates hold, the $e^{Kt}$-factor can be removed \cite{DL}.
	
	We propose to consider \eqref{eq:global} and similar bounds to be of ``global'' nature, because they are independent of any spatial locality properties of either the condensate or the observable.

	\begin{figure}[H]
		\centering
		\begin{tikzpicture}[scale=.5]
			% Draw the circle labeled B_r
			\draw (0,0) circle (1) node at (0,-.5) {$B_r$};
			
			\filldraw (0,0) circle (0.05);
			
			\draw[dashed] (0,0) circle (3.5*1.5);
			
			%			\draw[dashed] (0,0) circle (6);

			%	% Support region (hashed area)
			%	\foreach \i in {0.5, 1, 1.5} {
				%		\draw[thick] (6.5, \i) -- (7.5, \i-1);
				%	}

			% label for rho
			%				\draw[->] (1,0) to node[above] {$\rho$} (3.5,0);
			\draw[thick,<->] (0.7071,0.7071) to (2.47*1.5,2.47*1.5);
			
			\node at (0,2) {$B_R$};
			
			\node at (3.1,1.2) {$\rho=R-r$};
			
			%	% Arrowheads pointing at B_r
			%	\draw[<-] (1.2, 0.8) -- (1,0);
			%	\draw[<-] (1.2, -0.8) -- (1,0);
			
			%			% Draw a filled rectangle without a boundary
			%			\fill[gray!88] (6.5,-3) rectangle (9.5,3);
			
			% Draw a filled rectangle without a boundary
			%			\draw[fill=gray!88]  plot[shift={(-8,-2)},xscale=.6,smooth, tension=.7] coordinates {(-3,0.5) (-2.5,2.5) (-.5,3.5) (1.5,3) (3,3.5) (4,2.5) (4,0.5) (2,-2) (0,-1.5) (-2.5,-2) (-3,0.5)};
			
			\filldraw[gray!30] plot[shift={(-3.4,-1)},scale = 1.3, smooth, tension=1.2] coordinates {(-5.8,1.2) (-4.2,1.3) (-3,0) (-3.6,-1.8) (-6,-2) (-6.25,-0.2)};
			
			%									\filldraw[gray!30] (-9.5,-1.5) circle (3);
			
			%												\filldraw (-9.5,-1.5) circle (0.05);
			%												\node at (-9.5,-2) {$x_0$};
			
			% Labels for supports
			%			\node at (8, 1.5) {$\supp\varphi_0$};
			\node at (-9.5,- 1.5) {$\supp\varphi_0$ };
			%			\node at (-5.4,- 2) { fluctuation};

			\draw[dashed] (-9.5,-1.5) circle (3);
		\end{tikzpicture}
		\caption{The geometry in the main result (see \thmref{thm2}). We probe the system at time $t$ with a local observable $O$ acting on a small ball $B_r$ around the origin.    We assume that the initial condensate is supported outside of the larger ball $B_R$. In \cite{LRZa}, we prove that for \textit{any} external potential, the mean-field approximation is enhanced when $\rho> vt$ for an $\cO(1)$ constant $v$ independent of the potential.  Here we prove that for a disordered potential,   the mean-field approximation is enhanced when $\rho> \epsilon t$ for a small constant $\epsilon>0$ determined by the disorder strength $\l$.}\label{figBall1}
		
	\end{figure}
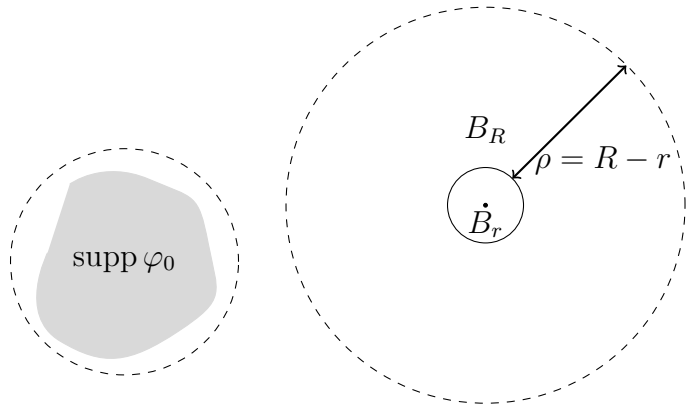
	
	In our recent work \cite{LRZa}, we developed a new perspective on mean-field dynamics by exploring its \textit{spatial locality structure}. For this, we considered a geometric setup as depicted in Figure \ref{figBall1} in which we probe the system at time $t$ with a \textit{local} bounded observable $O$ that only acts on a small ball $B_r$, centered at the origin for simplicity. We assume that the initial state \eqref{eq:intro_initial} characterized by $\varphi_0\in \ell^2(\mathbb Z^d)$ is supported at distance $\rho>0$ from the region where $O$ acts. In \cite{LRZa}, we considered the Hamiltonian \eqref{eq:Hintro} \textit{without} disordered environment ($v_\omega=0$) in the  geometric setup shown in Figure \ref{figBall1}  and we derived a local  enhancement of the mean-field approximation:
	\begin{equation}\label{eq:intro_old}
		\big\vert \Tr \big(\big(\tfrac{1}{N}\gamma_{\psi_{N,t}} -  \vert \varphi_t \rangle \langle \varphi_t \vert\big)  O\big) \big\vert \leq 
		\frac{C_n\norm{O}_{\mathrm{op}}	}{N}
		\frac{ 1  }{\rho^n},   \qquad \textnormal{for } t\le \frac\rho v.
	\end{equation}
	where $n\geq 1$ can be taken arbitrarily large.  
	We see that the right-hand side decays super-polynomially in the distance parameter $\rho$ as long as $\rho>   vt$. Thus \eqref{eq:intro_old} can be interpreted as a \textit{ballistic} propagation bound on the quantum many-body dynamics: Within time $t$, fluctuations around the mean-field behavior (i.e., the Hartree states) propagate at most a distance $\rho\sim vt$ up to super-polynomially small errors. Here, $v>0$ is an $\mathcal O(1)$ constant  that plays the role of a velocity bound. The result of \cite{LRZa} thus establishes that the physical principle of locality that information spreads at most with a finite system-dependent speed also applies to mean-field bosons and to fluctuations around the condensate. This can also be viewed as the existence of an effective light cone with slope $v>0$; see Figure \ref{fig:lccompare}.
	To obtain \eqref{eq:intro_old}, \cite{LRZa} developed a new adaptation of the ASTLO method (which was originally developed to control the dynamics of Bose-Hubbard Hamiltonians without mean-field scaling and $\mathcal O(1)$ local particle density \cite{FLS,FLSa,LRSZ,LRZ,lemm2025quantum}) that could handle the particle non-conserving generator of the fluctuation dynamics.
	
	In the present paper, we consider the same geometric setup as depicted in Figure \ref{figBall1}. We add disorder and prove a qualitatively significantly stronger propagation bound. Indeed, we prove that for any $\epsilon>0$, one can take the disorder strength sufficiently large so that
	\begin{equation}
		\big\vert \Tr \big(\big(\tfrac{1}{N}\gamma_{\psi_{N,t}} -  \vert \varphi_t \rangle \langle \varphi_t \vert\big)  O\big) \big\vert\leq  {C\norm{O}_{\mathrm{op}}} \left(\frac{e^{K(t-\rho/\eps)}}{N}+\frac{e^{Kt}}{N^2}\right),\qquad \textnormal{for } t \leq  \frac{\rho}{ \epsilon}.
	\end{equation}
	To understand the result and compare it to prior ones, we consider large $N$ and we take $\rho$ and $t$ to be large numbers, but independent of $N$. The second term is $\mathcal O(N^{-2})$ and can therefore be ignored for large $N$ for comparison purposes. 
	The first term is of the form $N^{-1}$ times the crucial new factor $e^{K(t-\rho/\eps)}$, which is exponentially small for  $\rho >\epsilon t$. The condition $\rho >\epsilon t$ corresponds to the outside of a very slim light cone; see Figure \ref{fig:lccompare} for the comparison to the prior light cone established in \cite{LRZa}.

	\begin{figure}
		\resizebox{10cm}{!}{
			\begin{tikzpicture}
				\begin{axis}[
					axis lines = middle,
					xlabel={$\abs{x}$},
					xmin=0, xmax=5, ymin=0, ymax=5,
					xtick=\empty, ytick=\empty,
					legend pos=north west
					]
					
					% Define the shaded region manually (now rotated)
					\addplot[fill=gray!30, opacity=0.5]
					coordinates {
						% Points on t = x
						(0,0) (2,1) (4,2) (5,2.5) (5,5) (0,5) (0,0)
					} --cycle;
					
					% Darker gray region: now flatter in vertical direction
					\addplot[fill=gray!90, opacity=0.5]
					coordinates {
						(0,0)
						(0.5,5)
						(0,5) (0,0)
					} --cycle;
					
					% Green line with small slope (now dx/dt = 0.1 → t = 10x)
					\addplot[thick, domain=0:0.5] {10*x};
					
					% Second line: t = 0.5x (so slope = 2 originally)
					\addplot[thick, domain=0:5] {0.5*x};

				\end{axis}
				% Labels
				\node at (4.8,4) {$\mathcal{O}(1)$ slope};
				\node[] at (4.3,3.6) {(prior work \cite{LRZa})};
				\node[left] at (-.5,4.2) {slope $\eps$};
				\node[left] at (-.5,3.8) {(Thm.~\ref{thm1})};
				\draw [->] (-.5,4) to (.3,4);
				\node[left] at (0,5.5) {time $t$};
			\end{tikzpicture}
			
		}
		\caption{Our main result establishes the effective space time light cone $\abs{x}\le \eps t$ with $\eps>0$ arbitrarily small for sufficiently strong disorder (darker shaded region), outside of which  propagation of the condensate and quantum fluctuations around it are exponentially suppressed. }
		\label{fig:lccompare}
	\end{figure}
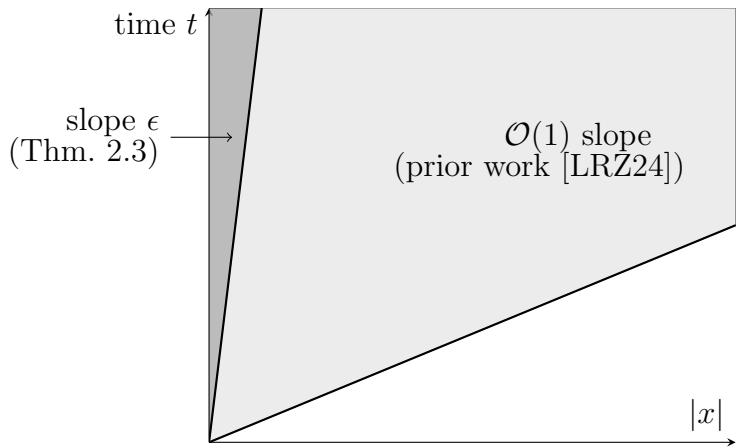

	We note that the constant $K$ in the second term is exactly the same constant as in the global bound by Rodnianski-Schlein \cite{RS}. However, for us the exponential is multiplied with $N^{-2}$ instead of $N^{-1}$, so the global term is rather harmless now. In contrast to \cite{DL,LRZa}, we need to use the global bound on the fluctuation number from \cite{RS} within an error term because dispersive estimate on the nonlinear evolution are  no longer available to us 
%	(and in fact provably false) 
	because the  disorder prevents the nonlinear Schr\"odinger equation from dispersing.

	\subsection{Organization of the paper}
	In Section \ref{sec:setup}, we formulate the precise mathematical setup and state the main theorems. 
	
	In Section \ref{sec:flucdyn}, we recall the standard formalism of the so-called fluctuation dynamics \cite{RS, LNSS15}. 
	
	In Section \ref{sec:proofthm1}, we prove the crucial bound Theorem \ref{thm1} on the local fluctuation number. This section is the heart of the paper and contains the key methodological novelty, the development of a suitable interaction picture relative to the Anderson-localized one-body dynamics.
	
	In Section \ref{secPfThm2}, we briefly review how to conclude the main result on the mean-field approximation, Theorem \ref{thm2}, from Theorem \ref{thm1} by standard methods.

	\section{Setup and main results}\label{sec:setup}
	
	We study the quantum many-body dynamics of bosonic systems on lattices. %Our goal is to prove that the mean-field approximation is enhanced at positive distances from the initial Bose-Einstein condensate (BEC)
	We formulate the problem as usual in terms of the algebraic operator formalism of creation and annihilation operators, i.e., we consider the permutation-symmetric Fock space 
	\begin{align}
		\label{def:Fockspace}
		\mathcal{F}(\ell^2(\mathbb Z^d))
		=\mathbb C\oplus \bigoplus_{N\geq 1} (\ell^2(\mathbb Z^d))^{\otimes_s N},
	\end{align}
	with $d\geq 1$ representing the spatial dimension and $\otimes_s N$ denoting the $N$-fold symmetrized tensor product. On the Fock space, we consider self-adjoint many-body Hamiltonian operator given by 
	\begin{align}\label{HNdef}
		H_N=	H_{N } ^{\l,\om,U}=  \sum_{\substack{x,y \in \Zb^d  }  } (h^{\l,\om}_{x,y}) \;  a_x^*a_y+ {\frac{U}{ 2N}}\sum_{\substack{x \in \Zb^d  }} \; a_x^*a_x^*a_xa_x \; . 
	\end{align}
	Here, the operator $h^{\l,\om}=-\Lap+\l v_\om$ is a disordered $1$-body Hamiltonian acting on $\ell^2(\Zb^d)$, where   $-\Lap$ is the discrete Laplacian, and  $v_\om\in\ell^\infty(\Zb^d)$ is a disordered potential subject to Assumptions \ref{C1} and \ref{C2} below. The disorder strength $\l>0$  will be a large constant for us. Explicitly, denoting by $x\sim y$ the nearest neighbors on $\mathbb Z^d$, i.e.  $\sum_{i=1}^d\abs{x_i-y_i}=1$, we have
	\begin{align}
		\label{hDef}
		h^{\l,\om}_{x,y} = \begin{cases}
			-1 & \text{if} \quad x\sim y,\\
			\l	v_\om(x) & \text{if}\quad  x=y,\\
			0& \text{otherwise}.
		\end{cases}
	\end{align}
	In the second term in \eqref{HNdef},   $U\in\Rb$ is an order $1$ constant measuring the interaction strength, and   $N$ is a large positive constant representing the particle number.  The prefactor $\frac{1}{N}$ in front corresponds to the paradigmatic \textit{mean-field scaling}.

	We consider the Schr\"odinger evolution \eqref{eq:Schroe} of an initially purely factorized state 
	\begin{align}
		\label{def:initial}
		\psi_{N,0} = \varphi_0^{\otimes N} \quad \text{with}\quad   \varphi_0 \in \ell^2( \Zb^d),\,\norm{\varphi_0}_{\ell^2}=1 .  
	\end{align}
	Due to interaction, the Schr\"odinger evolution 
	does not preserve the factorization for $t>0$. However, a macroscopic fraction of the particles is well-described by a $1$-particle nonlinear effective evolution given by the Hartree equation\begin{align}
		\label{NLS}
		i \partial_t \varphi_t = (h^{\l,\om}+  { U\vert \varphi_t \vert^2})\varphi_t  .
	\end{align}
	The main goal in this paper is to show that the propagation of quantum fluctuation slows down in the presence of sufficient disorder. In particular, the propagation speed for quantum fluctuation around Hartree states can be made arbitrarily slow. \\

	\paragraph{\textbf{Notation.}}	For $t\geq 0$, we denote by $\varphi_t$  the solution to the Hartree equation \eqref{NLS} with initial data $\varphi_0$,    $\psi_{N,t}$ the solution to the Schr\"odinger equation \eqref{eq:Schroe} with initial data $\psi_{N,0}$, and $\gamma_{\psi_{N,t}}$ the associated reduced $1$-particle  density matrix satisfying 
	$$\Tr  (\gamma_{\psi_{N,t}}   O )
	=  \langle \psi_{N,t}, \dG (O) \psi_{N,t} \rangle,$$
	for any $1$-particle operator $O$, where $\dG$ denotes the second quantization map.
	(In particular, $\Tr \g_{\psi_{N,t}} \equiv N$.)
	We abbreviate ${\ell^p}\equiv{\ell^p(\Zb^d)}$ for $1\le p\le \infty$,  $B_r:=\Set{x\in\Zb^d:\abs{x}\le r}$ for $r>0$, $B_r^\cp:=\Set{x\in\Zb^d:\abs{x}> r}$, and  $\norm{\cdot}_{\mathrm{op}}$  for the operator norm  on $\ell^2$. 
	
	\subsection{Assumptions}
	Our main results hold under the following  localization conditions of the associated $1$-body and nonlinear dynamics. We show in \secref{sec13} how these conditions  {are 
		verified for concrete disordered potentials in arbitrary dimension $d\ge1$, using existent results in the literature.}

	\begin{enumerate}[label=$\mathrm{(C1)}$]
		\item  \label{C1} We assume  the $1$-body Hamiltonian  $h=h^{\l,\om}$ from \eqref{hDef} satisfies  the semi-uniform dynamical localization (SUDL) condition as follows: 
		There exists $\l_0>0$ such that for all $\l\ge \l_0$, there exists $\g=\g(\l)\ge1$ such that for any $0<b\le1$, there exists $C_b>0$ such 
		\begin{align}
			\label{SUDL}
			\sup_{t\ge0}\abs{\br{e^{-ith} \delta_x, \delta_y}}\le C_b e^{b\abs{x}-\g\abs{x-y}}.
		\end{align}
		Furthermore, the decay rate satisfies $\lim_{\l\to\infty}\g(\l)=\infty$.
	\end{enumerate}	
	\begin{enumerate}[label=$\mathrm{(C2)}$]
		\item  \label{C2} We assume the NLS  \eqref{NLS} satisfies the  long-time Anderson localization as follows: Let $R\ge 2r>0$ and $\varphi_0(x)=0$ for $\abs{x}\le R$. Then, for any $\eps,\,M>0$, there exists $\l_1=\l_1(\eps, M,R-r)>0$ such that for all $\l\ge \l_1$, 
		\begin{align}
			\label{}
			\label{NLSDL}
			\sup_{0\le t\le (R-r)/\eps}	\sum_{\abs{x}\le r} \abs{\varphi_t(x)}^2 < e^{-M (R-r)/\eps}.
		\end{align}

		% }
	
\end{enumerate}

Concerning Condition  (C1), we mention that even in $d=1$, SUDL is  known to be delicate;  see Theorem 1.8 in \cite{JLM} and the classical results \cite{dRJLS,JL}. The growth in $x$ in SUDL \eqref{SUDL} is unavoidable if one wants to avoid averages and instead aims for statements that hold with high probability, as we do here. This is because any region has a small chance of being a rare region in which the potential behaves uncharacteristically.
Fortunately, since  we are only interested in $x$ from a given ball region,  the prefactor in front of \eqref{SUDL} can be made uniform in terms of the ball radius. I.e., the chance that the specific region where we look behaves uncharacteristically is small.

In \ref{C2}, the constant $\eps \ll 1$ is the effective velocity of the nonlinear evolution and it can be made arbitrarily small. This is an important ingredient for us, as a basis for controling the many-body fluctuations.  Finally, 
$M>0$ is an $\cO(1)$ constant.

\subsection{Main results}

Our first main result is a locally enhanced  mean-field approximation  for local observables that confirms the existence of dynamical slow-down due to disorder.

\begin{theorem}[Main result --- slow spreading of mean-field error] \label{thm2} Let $r>0$,  $R\ge 2r$, and $\rho=R-r.$
	Assume  that Conditions \ref{C1} and \ref{C2} hold and assume further that
	\begin{itemize}
		\item[(i)]  $\psi_{N,0}= \varphi_0^{\otimes N}$ is purely factorized;
		\item[(ii)] $\varphi_0$ has compact support, is normalized to $\norm{\varphi_0}_{\ell^2}=1$, and satisfies
		\begin{align}
			\varphi_0 (x) = 0  \quad \text{for} \quad \abs{x}\le R\; . 
		\end{align}
	\end{itemize}
	
	Then, given any $\eps>0$, there exist  $\l_*=\l_*(\eps, \rho,d)>0$,  $C = C(r,d)>0$, and $K=K(U)>0$  such that for all $\l\ge\l_*$,  all $0\le t \le \rho/\eps$, and any bounded local operator $O$ acting on $\ell^2$ with kernel satisfying 
	\begin{align}
		\label{ass:O}
		O(x,y) = \mathds{1}_{x \in B_r} O(x,y) \mathds{1}_{y \in B_r}  ,
	\end{align}
	it holds that
	\begin{align}\label{114}
		\big\vert \Tr \del{\big(\tfrac1N\gamma_{\psi_{N,t}} -   \vert \varphi_t \rangle \langle \varphi_t \vert\big)  O} \big\vert \leq  {C \|O \|_{\rm op}} \del{ \frac{e^{K(t-\rho/\eps)}}{N}+\frac{e^{Kt}}{N^2}}.
	\end{align}
\end{theorem}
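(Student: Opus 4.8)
\emph{Proof proposal for Theorem \ref{thm2}.} The plan is to derive Theorem \ref{thm2} from the local fluctuation-number estimate of Theorem \ref{thm1} via the reduced-density-matrix calculus of the fluctuation dynamics recalled in Section \ref{sec:flucdyn}. Write $\psi_{N,t}$, relative to the Hartree orbital $\varphi_t$, through its fluctuation vector $\xi_{N,t}$; then the one-particle density matrix has the standard decomposition
\begin{align*}
\gamma_{\psi_{N,t}}=\big(N-\langle\xi_{N,t},\mathcal{N}\xi_{N,t}\rangle\big)\vert\varphi_t\rangle\langle\varphi_t\vert+\vert\varphi_t\rangle\langle h_t\vert+\vert h_t\rangle\langle\varphi_t\vert+\gamma_{\xi_{N,t}}\,,
\end{align*}
where $h_t\in\{\varphi_t\}^{\perp}$ is the off-diagonal (``cross'') vector, determined by $\langle g,h_t\rangle=\langle\xi_{N,t},\sqrt{N-\mathcal{N}}\,a(g)\xi_{N,t}\rangle$ for $g\perp\varphi_t$, and $\gamma_{\xi_{N,t}}$ is the one-particle density matrix of $\xi_{N,t}$. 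Since $\Tr(\vert\varphi_t\rangle\langle\varphi_t\vert O)=\langle\varphi_t,O\varphi_t\rangle$ is exactly the leading condensate term, it cancels upon subtraction, so that
\begin{align*}
\Tr\Big(\big(\tfrac1N\gamma_{\psi_{N,t}}-\vert\varphi_t\rangle\langle\varphi_t\vert\big)O\Big)=-\tfrac1N\langle\xi_{N,t},\mathcal{N}\xi_{N,t}\rangle\langle\varphi_t,O\varphi_t\rangle+\tfrac2N\Re\langle h_t,O\varphi_t\rangle+\tfrac1N\Tr(\gamma_{\xi_{N,t}}O)\,,
\end{align*}
up to the standard technical remainders of order $e^{Kt}/N^2$ incurred when passing between the fixed-$N$ state and the Fock-space fluctuation vector. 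One then estimates the three displayed terms using throughout that $O=\mathds{1}_{B_r}O\mathds{1}_{B_r}$.

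The fluctuation term is immediate: splitting $O$ into its self-adjoint and anti-self-adjoint parts and using the operator monotonicity of $\dG$ (i.e.\ $\pm\dG(A)\le\|A\|_{\rm op}\dG(\mathds{1}_{B_r})$ for $A=A^*$ supported in $B_r$) yields $\tfrac1N|\Tr(\gamma_{\xi_{N,t}}O)|=\tfrac1N|\langle\xi_{N,t},\dG(O)\xi_{N,t}\rangle|\le\tfrac{C\|O\|_{\rm op}}{N}\langle\xi_{N,t},\dG(\mathds{1}_{B_r})\xi_{N,t}\rangle$, which is $C\|O\|_{\rm op}/N$ times the quantity bounded in Theorem \ref{thm1}, hence $\le C\|O\|_{\rm op}(e^{K(t-\rho/\eps)}/N+e^{Kt}/N^2)$. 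For the diagonal correction, I would fix the $\mathcal O(1)$ parameter $M$ in Condition \ref{C2} so that $M\ge K$ (this, together with the thresholds from \ref{C1}, \ref{C2}, and Theorem \ref{thm1}, fixes $\l_*$); then, since $O$ acts in $B_r$ and $0\le t\le\rho/\eps$, Condition \ref{C2} gives $|\langle\varphi_t,O\varphi_t\rangle|\le\|O\|_{\rm op}\sum_{\abs{x}\le r}\abs{\varphi_t(x)}^2<\|O\|_{\rm op}e^{-M\rho/\eps}$, while $\langle\xi_{N,t},\mathcal{N}\xi_{N,t}\rangle\le Ce^{Kt}$ is the Rodnianski--Schlein global bound \cite{RS}; multiplying and using $M\ge K$, this term is $\le C\|O\|_{\rm op}e^{K(t-\rho/\eps)}/N$. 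The $e^{Kt}/N^2$ term in \eqref{114} originates here and, more importantly, from the use of the global bound \cite{RS} inside an error term in the proof of Theorem \ref{thm1}, where dispersive estimates are unavailable.

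The main obstacle is the cross term $\tfrac2N\Re\langle h_t,O\varphi_t\rangle$: it is exactly this contribution that limits the global mean-field rate \eqref{eq:global} to $N^{-1/2}$, so obtaining the localized rate $N^{-1}$ of \eqref{114} forces us to exploit the geometry. Since $O\varphi_t=O\mathds{1}_{B_r}\varphi_t$ is supported in $B_r$, the vector $h_t$ is only seen through its restriction to $B_r$, and Cauchy--Schwarz gives, for $g$ supported in $B_r$, $|\langle h_t,g\rangle|\le\sqrt N\,\|g\|_{\ell^2}\,\langle\xi_{N,t},\dG(\mathds{1}_{B_r})\xi_{N,t}\rangle^{1/2}$; the required $O(N^{-1})$ bound then follows by feeding in the exponential smallness $\|\mathds{1}_{B_r}\varphi_t\|_{\ell^2}\le e^{-M\rho/(2\eps)}$ from Condition \ref{C2} together with the bound of Theorem \ref{thm1} on $\langle\xi_{N,t},\dG(\mathds{1}_{B_r})\xi_{N,t}\rangle$ and, where needed, the fact that the Hartree-subtracted fluctuation dynamics carries no term linear in $a,a^*$, so that the $B_r$-component of $h_t$ is an extra factor $N^{-1/2}$ smaller than the naive bound suggests. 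The delicate step throughout is to play the powers of $N$ against the spatial decay so that this term, and not only the diagonal and fluctuation terms, is controlled by $e^{K(t-\rho/\eps)}/N+e^{Kt}/N^2$; once that is done, summing the three contributions gives \eqref{114}. Apart from Theorem \ref{thm1}, everything here is routine fluctuation-dynamics bookkeeping, which is why the present section is brief.
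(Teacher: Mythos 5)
Your overall route is the same as the paper's: split the mean-field error into a condensate-diagonal term, a cross term linear in the fluctuation field, and a fluctuation-density term; bound the last one by \thmref{thm1}, and kill the $B_r$-localized factors of $\varphi_t$ with Condition \ref{C2} combined with the global moment bound of \lemref{lem23}. The paper does exactly this, except that it imports the decomposition already in processed form, namely the inequality \eqref{115} quoted from \cite[Lemma 6.1]{LRZa}, in which the cross contribution appears as $\| \varphi_t \|_{\ell^2( B_r)} \int_0^t ds \, e^{\vert U \vert \int_s^t d\sigma \| \varphi_\sigma \|_{\ell^\infty}^2} \langle \mathcal{N}^+ + 1 \rangle_{(t;s)}$ --- crucially, with no factor of $\sqrt N$.

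The genuine gap in your proposal is precisely at this cross term. The estimate you actually write down, $\vert\langle h_t, O\varphi_t\rangle\vert\le\sqrt N\,\|O\varphi_t\|_{\ell^2}\,\langle\xi_{N,t},\dG(\mathds{1}_{B_r})\xi_{N,t}\rangle^{1/2}$, gives after division by $N$ a quantity of size $N^{-1/2}e^{-M\rho/(2\eps)}\bigl(e^{K(t-\rho/\eps)/2}+e^{Kt/2}N^{-1/2}\bigr)$. For fixed $t,\rho,\eps$ this decays only like $N^{-1/2}$ as $N\to\infty$, while \eqref{114} requires $N^{-1}$; since $M$, $C$, $K$ cannot depend on $N$, no choice of $M$ repairs this, so the naive Cauchy--Schwarz route cannot close (it only yields \eqref{114} in the regime $N\lesssim e^{M\rho/\eps}$). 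You acknowledge the issue and appeal to ``an extra factor $N^{-1/2}$'' coming from the absence of linear terms in the fluctuation generator, but that is exactly the nontrivial statement that has to be proved: one must run a Duhamel/Gronwall argument through the fluctuation dynamics, using that the terms linear in $b,b^*$ are cancelled by the Hartree equation and that the cross term vanishes at $t=0$ for the product initial state, and one must also check that the spatial localization of $q_tO\varphi_t$ in $B_r$ survives the time integration (this is why the third term of \eqref{115} still carries the factor $\|\varphi_t\|_{\ell^2(B_r)}$, which is then beaten by \ref{C2} upon choosing $M$ a sufficiently large multiple of $\abs{U}+K$). In the paper this work is the content of the cited \cite[Lemma 6.1]{LRZa}; in your write-up it is asserted in a clause and never carried out, so as it stands the argument does not establish \eqref{114}. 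The diagonal and fluctuation-density terms are handled correctly and coincide with the paper's treatment of the first two terms of \eqref{115}.
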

This theorem is proved in \secref{secPfThm2}.   Typically, the larger the disorder strength $\l$ is, the smaller  the effective velocity $\eps$ can be made. For certain concrete models, the relationship between $\eps$ and $\l$ can be made explicit. For example, for the quasi-periodic models considered in \secref{sec231}, $\eps$ can be chosen to be   $\cO(\frac{1}{\log\l})$. See \remref{remSize} for details.

A corollary of \thmref{thm2} and assumption \ref{C2} is an improved estimate on  $\Tr(\tfrac{1}{N}\g_{\psi_{N,t}}O)$, which is a priori of order $\cO(1)$.

\begin{corollary}[Many-body propagation bound]
	Let the assumptions of  \thmref{thm2} hold. Then,  given any $\eps,\,M>0$, there exists  $\l_*=\l_*(\eps,M, \rho,d)>0$   such that for all $\l\ge\l_*$,  all $0\le t \le \rho/\eps$, and any   local $1$-body observable $O$ satisfying \eqref{ass:O}, there holds
	\begin{equation}\label{2111}
		\big\vert \Tr \big(\tfrac{1}{N}\gamma_{\psi_{N,t}} O\big) \big\vert 
		\leq   {C \|O \|_{\rm op}} \del{e^{-M\rho/\eps}+ \frac{e^{K(t-\rho/\eps)}}{N}+\frac{e^{Kt}}{N^2}}.
	\end{equation}
	Here $C$ and $K$ are as in \eqref{114}.
\end{corollary}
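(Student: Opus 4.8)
The plan is to derive \eqref{2111} by the triangle inequality from \eqref{114} together with the localization bound \ref{C2} applied to the nonlinear evolution. First I would write
\[
\Tr\big(\tfrac1N\g_{\psi_{N,t}}O\big)
=\Tr\del{\big(\tfrac1N\g_{\psi_{N,t}}-\vert\varphi_t\rangle\langle\varphi_t\vert\big)O}
+\langle\varphi_t,O\varphi_t\rangle,
\]
which is valid since $\Tr\del{\vert\varphi_t\rangle\langle\varphi_t\vert\,O}=\langle\varphi_t,O\varphi_t\rangle$. The first term on the right is bounded by $C\|O\|_{\rm op}\big(e^{K(t-\rho/\eps)}/N+e^{Kt}/N^2\big)$ directly from \thmref{thm2}, provided $\l\ge\l_*(\eps,\rho,d)$.

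Second I would bound the Hartree term $\langle\varphi_t,O\varphi_t\rangle$. Because $O$ satisfies the locality condition \eqref{ass:O}, i.e.\ $O=\1_{B_r}O\1_{B_r}$, we have $\langle\varphi_t,O\varphi_t\rangle=\langle\1_{B_r}\varphi_t,O\,\1_{B_r}\varphi_t\rangle$, hence
\[
\abs{\langle\varphi_t,O\varphi_t\rangle}\le \|O\|_{\rm op}\,\|\1_{B_r}\varphi_t\|_{\ell^2}^2
=\|O\|_{\rm op}\sum_{\abs{x}\le r}\abs{\varphi_t(x)}^2.
\]
Now invoke Condition \ref{C2}: given the prescribed $\eps,M>0$ and the fact that $\varphi_0(x)=0$ for $\abs{x}\le R$ with $R\ge 2r$ and $\rho=R-r$, there exists $\l_1=\l_1(\eps,M,\rho)>0$ so that for all $\l\ge\l_1$ and all $0\le t\le\rho/\eps$ one has $\sum_{\abs{x}\le r}\abs{\varphi_t(x)}^2<e^{-M\rho/\eps}$. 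Combining, for $\l\ge\l_*:=\max\{\l_*(\eps,\rho,d),\,\l_1(\eps,M,\rho)\}$ — which we rename $\l_*(\eps,M,\rho,d)$ — we get
\[
\abs{\Tr\big(\tfrac1N\g_{\psi_{N,t}}O\big)}
\le C\|O\|_{\rm op}\del{e^{-M\rho/\eps}+\frac{e^{K(t-\rho/\eps)}}{N}+\frac{e^{Kt}}{N^2}},
\]
absorbing the constant from the Hartree term into $C$ (which may now also depend on $r,d$ through \thmref{thm2}, but is still independent of $N,t,\l$). This is exactly \eqref{2111}, with the same $C$ and $K$ as in \eqref{114}.

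There is essentially no obstacle here: the corollary is a routine combination of the main theorem with the standing assumption \ref{C2}, and the only points that require minor care are (a) tracking that the two thresholds $\l_*(\eps,\rho,d)$ from \thmref{thm2} and $\l_1(\eps,M,\rho)$ from \ref{C2} can be chosen simultaneously by taking their maximum, so the final $\l_*$ depends on all of $\eps,M,\rho,d$; (b) noting that the constant $C$ in the displayed bound may pick up an additive contribution $1$ from the Hartree term but this is harmlessly absorbed; and (c) verifying that the hypotheses of \ref{C2} ($R\ge 2r$, $\varphi_0$ vanishing on $B_R$) are precisely those already assumed in \thmref{thm2}. The statement that $\Tr(\tfrac1N\g_{\psi_{N,t}}O)$ is ``a priori of order $\cO(1)$'' is just the crude bound $\abs{\Tr(\tfrac1N\g_{\psi_{N,t}}O)}\le\|O\|_{\rm op}$ coming from $\Tr\g_{\psi_{N,t}}=N$ and $0\le\g_{\psi_{N,t}}\le N$, so the content of the corollary is that disorder upgrades this to the exponentially small bound $e^{-M\rho/\eps}$ (plus the $N$-suppressed terms) in the slim-light-cone regime $t\le\rho/\eps$.
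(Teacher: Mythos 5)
Your proposal is correct and matches the paper's own proof: the paper likewise splits $\Tr(\tfrac1N\gamma_{\psi_{N,t}}O)$ by the triangle inequality into the mean-field error, bounded via \eqref{114}, and the Hartree term $\abs{\inn{\varphi_t}{O\varphi_t}}\le\norm{O}_{\mathrm{op}}\sum_{\abs{x}\le r}\abs{\varphi_t(x)}^2$, bounded by $e^{-M\rho/\eps}$ using \ref{C2} and the locality \eqref{ass:O}. Your extra bookkeeping (taking the maximum of the two thresholds for $\l_*$) is exactly the implicit content of the paper's statement that $\l_*$ depends on $\eps,M,\rho,d$.
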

\begin{proof}
	We commence with
	\begin{align}
		\label{}
		\abs{\Tr(\tfrac{1}{N}\g_{\psi_{N,t}}O)} \le& \big\vert \Tr \del{\big(\tfrac1N\gamma_{\psi_{N,t}} -   \vert \varphi_t \rangle \langle \varphi_t \vert\big)  O} \big\vert+\big\vert \Tr \del{  \vert \varphi_t \rangle \langle \varphi_t \vert   O}\big\vert.
	\end{align}
	The first term in the r.h.s.~is bounded as in \eqref{114}.  By assumption \eqref{NLSDL}, for $t\le \rho/\eps$ and $O$ localized in $B_r$, the second term is bounded as $\big\vert\Tr \del{  \vert \varphi_t \rangle \langle \varphi_t \vert   O}\big\vert =\abs{\inn{\varphi_t}{O\varphi_t}}\le \norm{O}_{\mathrm{op}}\sum_{\abs{x}\le r} \abs{\varphi_t(x)}^2 < \norm{O}_{\mathrm{op}} e^{-M (R-r)/\eps}$. Thus the claim follows. 
\end{proof}

By standard techniques, the proof of	\thmref{thm2} reduces to a propagation bound on fluctuations, which we now state  as a separate theorem, in part because it holds for a more general class of initial states.
Define the number of fluctuations orthogonal to the condensate,
\begin{align}\label{N+def}
	\mathcal{N}^+ (t) = \sum_{i=1}^N q_t^{(i)},
\end{align}
where $q_t^{(i)}$ is the operator on  $\ell_s^2( \Zb^{dN}):=[\ell^2(\Zb^d)]^{\otimes _s N}$, acting as $q_t^{(i)} = 1 - \vert \varphi_t \rangle \langle \varphi_t \vert$ on the $i$-th particle and as identity on all other particles.

Our second theorem says that the \textit{local} number of excitations in a ball $B_r$ of radius $r>0$ around the origin,
\begin{align}
	\mathcal{N}_{B_r}^+ (t) = \sum_{i=1}^N  \big( q_t \mathds{1}_{B_r} q_t \big)^{(i)},\label{Nr} 
\end{align}
satisfies a propagation bound with arbitrarily small propagation speed $\epsilon$. In contrast to \thmref{thm2}, for the next result, we do not require the initial state to be a pure product state as in \eqref{def:initial}.

\begin{theorem}[Slow propagation of fluctuations] \label{thm1}

Let $r>0$,  $R\ge 2r$, and $\rho=R-r.$
	Assume that Conditions \ref{C1} and \ref{C2} hold and assume further that
	\begin{itemize}
		\item[(i)] $\psi_{N,0}$ has no fluctuations in $B_{R}$, i.e.,
		\begin{align}
			\label{psi0Cond}
			\big\langle \psi_{N,0}, \; \mathcal{N}_{B_{R}}^+ (0) \;  \psi_{N,0} 
				\big\rangle  
				=0.
			\end{align}
			\item[(ii)] $\varphi_0$ has compact support, is normalized to  $\norm{\varphi_0}_{\ell^2}=1$, and satisfies 
			\begin{align}\label{phi0cond}
				\varphi_0 (x) = 0  \quad \text{for} \quad \abs{x}\le R\; . 
			\end{align}
		\end{itemize}

		Then, given any $\eps>0$, there exist  $\l_*=\l_*(\eps, \rho,d)>0$,  $C = C(r,d)>0$, and $K=K(U)>0$  such that for all $\l\ge\l_*$ (see \eqref{hDef}) and all $0\le t \le \rho/\eps$,
		\begin{align}
			\label{slowPE}
			\br{\psi_t, \cN^+_{B_r}(t)\psi_t} \le 	 & C\sbr{ \del{e^{K(t-\rho/\eps) }+\frac{e^{Kt}}{N}}\br{(\cN_{B_R^\cp}+1)^2 }_0+{e^{-\g\rho}\br{\cN_{B_R^\cp}}_0}}.
		\end{align}
		Here $\g=\g(\l)>0$ is as in \ref{C2}.
	\end{theorem}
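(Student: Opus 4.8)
\emph{Reduction to a fluctuation-number estimate.} The plan is to prove \eqref{slowPE} by a propagation estimate for the local fluctuation number carried out in an interaction picture relative to the disordered one-body propagator $e^{-ish}$ --- the feature that distinguishes this from Lieb--Robinson/ASTLO arguments, which are insensitive to disorder. First I would invoke the fluctuation-dynamics formalism of \secref{sec:flucdyn}: passing to the excitation vector $\xi_s$ attached to $\psi_{N,s}$ and $\varphi_s$, one has $\br{\psi_{N,s},\cN^+_{B_r}(s)\psi_{N,s}}=\br{\xi_s,\dG(q_s\mathds{1}_{B_r}q_s)\xi_s}$ with $q_s=1-\vert\varphi_s\rangle\langle\varphi_s\vert$ built into the excitation Fock space, and $\xi_s$ solves $i\partial_s\xi_s=\cL_N(s)\xi_s$ for a self-adjoint generator $\cL_N(s)=\dG(h)+\cL_{\mathrm{int}}(s)$ with
\[ \cL_{\mathrm{int}}(s)=\dG\!\big(q_sW_sq_s\big)+\tfrac12\!\int\!\big(K_s(x,y)\,a_x^*a_y^*+\hc\big)+\cE_N(s), \]
where $W_s$, $K_s$ are built from $U\abs{\varphi_s}^2$, $U\varphi_s\otimes\varphi_s$ (up to $q_s$-projections), and $\cE_N(s)$ gathers the cubic term ($N^{-1/2}$ prefactor and a factor of $\varphi_s$), the quartic term ($N^{-1}$ prefactor), and the $\leq N$-truncation corrections. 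The structural fact to be used throughout: apart from the $\cO(N^{-1})$ quartic piece, every coefficient of $\cL_{\mathrm{int}}(s)$ carries at least one factor of $\varphi_s$, hence --- by \ref{C2}, after adjusting its internal parameters --- is $\leq e^{-M\rho/\eps}$ on any ball concentric with and slightly larger than $B_r$.

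\emph{A backward-propagated observable and an exact cancellation.} I would then study
\[ G(s):=\br{\xi_s,\dG\!\big(q_s\,\widetilde\chi_s\,q_s\big)\xi_s},\qquad \widetilde\chi_s:=e^{i(t-s)h}\,\mathds{1}_{B_r}\,e^{-i(t-s)h}, \]
the localized fluctuation number whose cutoff has been propagated \emph{backwards} by the one-body dynamics (equivalently: in the interaction picture $\eta_s=\Gamma(e^{ish})\xi_s$ this becomes $\br{\eta_s,\dG(e^{ith}\mathds{1}_{B_r}e^{-ith})\eta_s}$, a fixed observable evolved by the interaction generator alone). The point of this choice is an exact cancellation: since $\partial_s\widetilde\chi_s=-i[h,\widetilde\chi_s]$ and, by the Hartree equation \eqref{NLS}, $\partial_sq_s=-i[h,q_s]-iU[\abs{\varphi_s}^2,q_s]$, all occurrences of $[h,\,\cdot\,]$ in $\frac{d}{ds}G(s)=\br{\xi_s,(\dG(\partial_s(q_s\widetilde\chi_sq_s))+i[\cL_N(s),\dG(q_s\widetilde\chi_sq_s)])\xi_s}$ cancel, leaving
\[ G'(s)=\br{\xi_s,\big(\dG(\rho_s)+i[\cL_{\mathrm{int}}(s),\dG(q_s\widetilde\chi_sq_s)]\big)\xi_s},\qquad \rho_s=-iU\big([\abs{\varphi_s}^2,q_s]\,\widetilde\chi_sq_s+\hc\big), \]
with no free-propagation term. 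To estimate the right-hand side I would use: (a) by SUDL \ref{C1}, the kernel of $\widetilde\chi_s$ satisfies $\abs{\widetilde\chi_s(x,y)}\le C\,e^{-(\g-b)(\dist(x,B_r)+\dist(y,B_r))}$ for any fixed $0<b<\g$, so $\widetilde\chi_s$ is effectively supported near $B_r$, at distance $\gtrsim R-r=\rho$ from $\supp\varphi_s\subseteq B_R^{\cp}$; (b) by \ref{C2}, the part of $W_s,K_s,\rho_s$ reaching that effective support is $\le e^{-M\rho/\eps}$. A near/far decomposition of the kernels, Cauchy--Schwarz, and the bound $\norm{a(f)\zeta}\le\norm{f}_{\ell^2}\norm{\cN^{1/2}\zeta}$ then yield
\[ \abs{G'(s)}\le C\big(\kappa+\tfrac1N\big)\,\br{\xi_s,(\cN+1)^2\xi_s},\qquad \kappa=\kappa(\rho,\l)\le C\big(e^{-\g\rho}+e^{-M\rho/\eps}\big). \]

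\emph{Integrating, and the boundary term at $s=0$.} Now I would feed in the Rodnianski--Schlein a priori bound --- which, in contrast to \cite{DL,LRZa}, must be absorbed into an error term here because the disorder prevents \eqref{NLS} from dispersing --- in the form $\br{\xi_s,(\cN+1)^2\xi_s}\le Ce^{Ks}\br{(\cN_{B_R^{\cp}}+1)^2}_0$ (hypothesis (i) lets one discard the $B_R$-contribution to $\cN^+(0)$), and integrate over $s\in[0,t]$:
\[ G(t)\le G(0)+C\big(\kappa+\tfrac1N\big)\,e^{Kt}\,\br{(\cN_{B_R^{\cp}}+1)^2}_0. \]
Choosing $\l$ large enough that $\g(\l)\rho\ge K\rho/\eps$ (possible since $\g(\l)\to\infty$) and the parameters in \ref{C2} so that $M\rho/\eps\ge K\rho/\eps$ makes the second term $\le C(e^{K(t-\rho/\eps)}+e^{Kt}/N)\br{(\cN_{B_R^{\cp}}+1)^2}_0$. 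For the boundary term, $G(0)=\br{\xi_0,\dG(q_0\,e^{ith}\mathds{1}_{B_r}e^{-ith}\,q_0)\xi_0}$: since $\xi_0$ carries no excitations in $B_R$ (hypothesis (i)) while $q_0e^{ith}\mathds{1}_{B_r}e^{-ith}q_0\le e^{ith}\mathds{1}_{B_r}e^{-ith}$ has a kernel decaying as in (a), a Cauchy--Schwarz in the position label gives $G(0)\le Ce^{-\g\rho}\br{\cN_{B_R^{\cp}}}_0$ --- the last term of \eqref{slowPE}. Since $\widetilde\chi_t=\mathds{1}_{B_r}$ we have $\br{\psi_{N,t},\cN^+_{B_r}(t)\psi_{N,t}}=G(t)$, so collecting the three contributions yields \eqref{slowPE}.

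\emph{Where the difficulty lies.} The heart of the matter --- and the reason the $\cO(1)$ velocity of \cite{LRZa} can be beaten at all --- is that the transport generated by the discrete Laplacian, which any commutator-with-a-cutoff estimate would only control at unit speed, has to be shifted \emph{entirely} into the one-body propagator $e^{-ish}$ (exactly what the backward-propagated observable and the cancellation above accomplish), after which its effect is limited solely by the non-spreading in \ref{C1}, whose rate $\g(\l)$ can be made arbitrarily large by increasing $\l$. The genuinely delicate step is then (b) above: one must verify that $\dG(\rho_s)+i[\cL_{\mathrm{int}}(s),\,\cdot\,]$ really does have coefficients that are exponentially small where $\widetilde\chi_s$ lives, which requires meshing the ball-localization of \ref{C2} with the tail decay of \ref{C1} across the $e^{\pm ith}$-conjugations; once that is in place, the rest --- the near/far splittings, and the cubic, quartic, and truncation error terms --- is routine bookkeeping.
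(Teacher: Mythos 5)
Your proposal is correct in outline and takes essentially the same route as the paper: an interaction picture relative to $H_0=\dG(h)$, with \ref{C1} controlling the kernel of the one-body-conjugated cutoff, \ref{C2} (applied with adjusted radius/velocity parameters via a near/far geometric splitting) controlling the $\varphi_s$-dependent coefficients of the fluctuation generator, the Rodnianski--Schlein moment bound closing the Duhamel integral, and hypothesis (i) producing the boundary term $e^{-\g\rho}\br{\cN_{B_R^\cp}}_0$. The only difference is organizational: you differentiate the backward-propagated observable $\dG(q_s\widetilde\chi_s q_s)$ along the fluctuation flow (handling $\partial_s q_s$ via the Hartree equation), whereas the paper factors $\tau_t=\tau_t^{\Int}\circ\tau_t^{(0)}$, first dominates $\tau_t^{(0)}(\cN_{B_r})$ by a fixed weighted number operator using \ref{C1}, and then runs the same Heisenberg/commutator estimate for $\tau^{\Int}$ on that fixed operator.
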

	This theorem is proved in \secref{secPfThm1}. 	We interpret \eqref{slowPE} as saying that for sufficiently large $N$ and $\rho$,  the local fluctuation around the Hartree states in the region $\abs{x}\le r$ near the origin is suppressed for times $t\le \rho/\eps$. This effectively says that the fluctuation propagates at arbitrarily slow speed for sufficiently strong disorder strength.

	\begin{remark}
		\begin{itemize}
			\item[(i)] If the initial state is purely factorized, i.e., if \eqref{psi0Cond} holds for all $R$, then \eqref{slowPE} simplifies to
			\begin{align}
				\label{slowPEsim}
				\br{\psi_t, \cN^+_{B_r}(t)\psi_t} \le 	 & C{ \del{e^{K(t-\rho/\eps) }+\frac{e^{Kt}}{N}}  }.
			\end{align}
			\item[(ii)]
			The condition $R\ge2r$ in Thms.~\ref{thm2} and \ref{thm1} can be replaced by $R\ge \al r$ for any $\al\in(0,2]$, in which case the threshold $\l_*$ will depend  in addition on $\al$. 
		\end{itemize}
		
	\end{remark}

	\subsection{Examples}\label{sec13}

	In this subsection, we present  concrete examples for which Conditions \ref{C1} and \ref{C2} have been proved. We consider two types of models ---  random potentials and quasi-periodic potentials, both in arbitrary dimension $d\ge1$.

	\subsubsection{Quasi-periodic potential}
	\label{sec231}
	
	Consider quasi-periodic Sch\"odinger operators of the form
	$h=-\Lap+\l v_{\theta,\al}(x)$ on $\Zb^d$, where
	\begin{align}
		\label{vCan}	v_{\theta,\al}(x)= \cos(2\pi (\theta+x\cdot \al)),\quad (\theta,\al)\in [0,1]\times[0,1]^d.
	\end{align}
	For $\al$ satisfying the Diophantine condition,  the following proposition is a direct consequence of the exponential dynamical localization  for \eqref{vCan} proved in \cite{GYZa} (see also \cite{CSZc,CSZd}), together with the long-time Anderson localization for \eqref{NLS} with quasi-periodic potential proved in \cite{CSW} (see also \cite{GYZ}). 	See \secref{secPfEx1} for details.
	
	\begin{proposition}\label{Ex1}
		For any $\delta_*>0$, there exists a subset of $(\theta,\al)$ of measure at least $1-\delta_*$ such that Conditions \ref{C1} and \ref{C2} hold.
	\end{proposition}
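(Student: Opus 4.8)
The plan is to verify Conditions \ref{C1} and \ref{C2} separately, each on a set of parameters whose complement has measure $<\delta_*/3$, and then to intersect. Since the Diophantine frequencies have full Lebesgue measure, I would first discard an $\al$-set of measure $<\delta_*/3$ so as to fix a uniform Diophantine constant: assume $\dist(x\cdot\al,\Zb)\ge c\,\abs{x}^{-\tau}$ for all $x\in\Zb^d\setminus\{0\}$, with $c=c(\delta_*)>0$ and $\tau=\tau(d)>0$. All remaining work is then carried out in the single phase variable $\theta\in[0,1]$, and the final set $\mathcal S$ will be the intersection of this Diophantine set with the two $\theta$-sets produced below; by subadditivity $\abs{\mathcal S}\ge 1-\delta_*$.

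\emph{Condition \ref{C1}.} For a Diophantine $\al$ as above and $\l$ large, the exponential dynamical localization of \cite{GYZa} (see also \cite{CSZc,CSZd}) applies to $h=-\Lap+\l v_{\theta,\al}$: there is $\l_0>0$ so that for all $\l\ge\l_0$ one obtains a rate $\g=\g(\l)\ge1$ with $\g(\l)\to\infty$ as $\l\to\infty$ (quantitatively $\g(\l)\sim\log\l$, consistent with \remref{remSize}), together with a \emph{quantitative} localization bound --- e.g.\ of moment type
\[
\int_0^1 \sup_{t\ge 0}\abs{\br{e^{-ith}\delta_x,\delta_y}}\,d\theta \;\le\; C\,e^{-2\g\abs{x-y}},\qquad x,y\in\Zb^d .
\]
From this I would extract SUDL by a Borel--Cantelli argument over dyadic resolutions. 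For each integer $k\ge1$, Chebyshev's inequality in $\theta$ bounds the measure of $\Set{\theta : \sup_{t\ge0}\abs{\br{e^{-ith}\delta_x,\delta_y}}> C_k\,e^{\abs{x}/k-\g\abs{x-y}}\text{ for some }y}$ by $C_k^{-1}C\,e^{-\abs{x}/k}\sum_y e^{-\g\abs{x-y}}\le C_k^{-1}C'(d,\g)\,e^{-\abs{x}/k}$; summing over $x\in\Zb^d$ gives $C_k^{-1}C''(d,\g,k)$, which is $\le 2^{-k}\delta_*/3$ once $C_k$ is taken large enough. Intersecting over $k$ leaves a set of $\theta$ of measure $\ge 1-\delta_*/3$ on which, for every $0<b\le1$ (choose $k$ with $1/k\le b$), $\sup_{t\ge0}\abs{\br{e^{-ith}\delta_x,\delta_y}}\le C_k\,e^{b\abs{x}-\g\abs{x-y}}$ --- that is, exactly \eqref{SUDL}. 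This is the mechanism flagged after \ref{C2}: the slowly growing prefactor $e^{b\abs{x}}$ is precisely the cost of a statement valid on a large set of phases rather than with an $x$-uniform constant.

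\emph{Condition \ref{C2}.} Here I would invoke the long-time Anderson localization for the quasi-periodic Hartree equation \eqref{NLS} established in \cite{CSW} (see also \cite{GYZ}). For the fixed Diophantine $\al$ and for $\theta$ outside a further exceptional set of measure $<\delta_*/3$, that result gives, for any $\eps,M>0$, a threshold $\l_1=\l_1(\eps,M,R-r)$ such that for all $\l\ge\l_1$ every solution $\varphi_t$ of \eqref{NLS} with $\varphi_0$ supported in $\abs{x}>R$ stays concentrated away from $B_r$ up to time $(R-r)/\eps$, in the sense of $\sum_{\abs{x}\le r}\abs{\varphi_t(x)}^2<e^{-M(R-r)/\eps}$; if the cited statement carries a weaker exponent or an additional polynomial factor, these are absorbed into the exponent by slightly enlarging $\l$. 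This is \eqref{NLSDL}. Intersecting the Diophantine set, the set from \ref{C1}, and the set just obtained yields $\mathcal S$ with $\abs{\mathcal S}\ge1-\delta_*$ on which both Conditions hold, proving the claim.

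\emph{Main obstacle.} The logical structure is routine; the real work is matching the precise outputs of \cite{GYZa} and \cite{CSW} to the exact statements \ref{C1} and \ref{C2}. For \ref{C1} one must check that \cite{GYZa} indeed supplies a \emph{quantitative} localization estimate (a moment bound, or a bound on the exceptional $\theta$-measure that is summable in $x$), rather than only qualitative a.e.\ localization, so that the extraction of the semi-uniform prefactor $e^{b\abs{x}}$ actually goes through; and one must confirm that the rate $\g(\l)$ genuinely diverges as $\l\to\infty$, which is what ultimately lets the effective velocity $\eps$ be sent to $0$ by increasing $\l$. For \ref{C2} the delicate point is that \cite{CSW} must control the $\ell^2$-mass of $\varphi_t$ inside $B_r$ over the \emph{entire} time window $[0,(R-r)/\eps]$ with the stated exponent $M(R-r)/\eps$; reconciling the time normalization, the localization length, and the exponent in the cited theorem with the form required in \eqref{NLSDL} is the main bookkeeping step. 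The complete verification is carried out in \secref{secPfEx1}.
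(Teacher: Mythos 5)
Your verification of \ref{C1} is essentially correct and close to the paper's own route: the paper also starts from the $\theta$-averaged exponential dynamical localization of \cite{GYZa} (with rate $\tilde\gamma\ge\tfrac34\log\lambda$, whence $\gamma(\lambda)\to\infty$) and extracts \eqref{SUDL} by the weighted-sum argument of \cite{dRJLSa}, integrating $Q(\theta)=\sum_{x,y}(1+|x|)^{-(d+1)}e^{\gamma|x-y|}\sup_t\abs{\br{e^{-ith}\delta_x,\delta_y}}$ over $\theta$ to get $Q(\theta)<\infty$ for a.e.\ $\theta$ and then absorbing the resulting polynomial prefactor $(1+\abs{x})^{d+1}$ into $C_be^{b\abs{x}}$; your dyadic Chebyshev/Borel--Cantelli variant reaches the same conclusion (on a set of measure $1-\delta_*/3$ rather than full measure) and is fine.

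The genuine gap is in \ref{C2}. You invoke \cite{CSW} as if it directly delivered \eqref{NLSDL} with a threshold in $\lambda$, but the cited theorem (Theorem \ref{thmCSZ}) is stated for the normalized equation \eqref{CSWeq}, i.e.\ an $\mathcal O(1)$ quasi-periodic potential perturbed by a \emph{small} hopping $a_1\Lap$ and a \emph{small} nonlinearity $a_2\abs{\phi}^2\phi$, with localization guaranteed only on a finite window $T\lesssim \delta\, a^{-\rho^{1/2}}$ and with a conclusion of doubling type (initial tail $\delta$ stays $\le 2\delta$). The Hartree equation \eqref{NLS} with large $\lambda$ is not of that form, so one cannot ``slightly enlarge $\lambda$'' to absorb the discrepancy: the necessary step, which is the actual content of the paper's proof, is the rescaling $a_1=\lambda^{-1}$, $a_2=U\lambda^{-1}$, $\varphi_{t(T)}=\phi_T$ with $t=T/\lambda$ (dividing \eqref{NLS} by $\lambda$), which converts strong disorder into the perturbative small-$a$ regime where \cite{CSW} applies. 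One must then choose $\delta=\tfrac12 e^{-M\rho/\eps}$ so the $2\delta$ conclusion produces the exponent in \eqref{NLSDL}, and verify that the guaranteed window, which in the physical time variable is $t\le\delta\,\lambda^{\rho^{1/2}-1}$, actually covers $[0,\rho/\eps]$; this is the inequality $e^{-M\rho/\eps}\lambda^{\rho^{1/2}-1}\ge 2\rho/\eps$, and it is exactly what fixes the threshold \eqref{eCond}. The check is not vacuous, because the exponentially small $\delta$ you need shortens the admissible time window, and the time rescaling costs an extra factor $\lambda$; your proposal flags this reconciliation as ``the main bookkeeping step'' but leaves it undone, so the $\lambda$-threshold $\lambda_1(\eps,M,R-r)$ asserted in your argument is never actually produced.
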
 
	%	In \secref{secPfEx1} we explicitly show how this proposition follows directly from the relevant results in \cite{JL,CSW}.

	%	We have not been able to locate a reference establishing SUDL in quasi-periodic models in $d>1$ and this is the sole reason why we need to restrict $d=1$. In particular, the proof of (C2) in \cite{CSW} applies in any dimension $d$.
	%Condition	(C2) for any $d\ge1$ is established in \cite{CSW}, building on earlier works of Wang-Zhang (\cite{WZ2008}) and Cong-Shi-Zhang \cite{CSZ2020} in the $1$-dimensional random setting. The potentials considered in \cite{CSW}  are more general and include, in addition to \eqref{vCan}, multi-frequency cosine series satisfying certain arithmetic non-degeneracy conditions. 

	\subsubsection{Random potential}
	Next, we consider random Sch\"odinger operators of the form
	$h=-\Lap+ v_\omega(x)$ on $\Zb^d$, where  $v_\omega(x)$ acts as multiplication by a sequence of independent identically distributed (i.i.d.) random variables $\omega\in [0,1]^{\Zb^d}$, with uniform distribution. 
	\begin{proposition}\label{Ex2}
		For any $\delta_*>0$, there exists a subset of $\om$ of measure at least $1-\delta_*$ such that Conditions \ref{C1} and \ref{C2} hold.
	\end{proposition}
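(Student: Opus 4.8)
The plan is to verify Conditions \ref{C1} and \ref{C2} separately for the i.i.d.\ random model and then intersect the two resulting events of measure $\ge1-\delta_*/2$. The input for \ref{C1} will be the classical fractional-moment analysis of the Anderson model at large disorder; the \emph{semi-uniform} form \eqref{SUDL} is then extracted from it by a Markov-plus-union-bound argument. Condition \ref{C2} is obtained on that same event via a Duhamel bootstrap, i.e.\ the long-time Anderson localization for the random NLS. The bookkeeping parallels the proof of Proposition~\ref{Ex1}.

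\textbf{Step 1 (Condition \ref{C1}).} Since $v_\om$ is bounded and i.i.d.\ with absolutely continuous single-site distribution, the Aizenman--Molchanov method \cite{aizenman1993localization,aizenman2015random} provides $\l_0>0$ so that for every $\l\ge\l_0$ there are $s\in(0,1)$ and $\mu=\mu(\l)>0$, with $\mu(\l)\to\infty$ as $\l\to\infty$, satisfying $\sup_{E}\mathbb E\big[\abs{\br{\delta_x,(h^{\l,\om}-E-i0)^{-1}\delta_y}}^{s}\big]\le C e^{-\mu\abs{x-y}}$. The standard passage to the eigenfunction-correlator bound (see \cite{aizenman2015random}) upgrades this, at large disorder where the whole spectrum is localized, to $\mathbb E\big[\sup_{t\ge0}\abs{\br{e^{-ith^{\l,\om}}\delta_x,\delta_y}}\big]\le C' e^{-\mu'\abs{x-y}}$ for some $\mu'=\mu'(\l)$ with $\mu'(\l)\to\infty$. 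Now fix $0<b\le1$, put $\g=\g(\l):=\mu'(\l)/2$, and let $C_b>0$ be free. By Markov's inequality and a union bound,
\[
\mathbb P\Big(\exists\,x,y\in\Zb^d:\ \sup_{t\ge0}\abs{\br{e^{-ith^{\l,\om}}\delta_x,\delta_y}}>C_b\,e^{b\abs{x}-\g\abs{x-y}}\Big)\ \le\ \frac{C'}{C_b}\sum_{x,y\in\Zb^d}e^{-b\abs{x}}\,e^{-(\mu'-\g)\abs{x-y}}.
\]
Since $\mu'-\g=\mu'/2>0$ and $b>0$, the double sum is a finite $S_d(b)$ (using $\sum_{x}e^{-b\abs x}\lesssim b^{-d}$), so taking $C_b:=2C'S_d(b)/\delta_*$ makes the probability $<\delta_*/2$; on the complementary event \eqref{SUDL} holds, and $\g(\l)\to\infty$ by construction. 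The growth factor $e^{b\abs x}$ is exactly what renders $\sum_x e^{-b\abs x}$ summable, so that the union bound over \emph{all} possible rare regions converges at the cost of a $\l$-independent constant $C_b$.

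\textbf{Step 2 (Condition \ref{C2}).} On the event where \ref{C1} holds we derive \eqref{NLSDL} by Duhamel iteration, as in the long-time Anderson localization for the random NLS \cite{WZ2008,CSZ2020}. Write $\varphi_t=e^{-ith}\varphi_0-iU\int_0^t e^{-i(t-s)h}\big(\abs{\varphi_s}^2\varphi_s\big)\,ds$. For the linear term, $\supp\varphi_0\subseteq B_R^\cp$ and $R\ge2r$ force $\abs{x-y}\ge R-r=\rho$ for all $x\in B_r$, $y\in\supp\varphi_0$, and $\abs x\le r\le\rho$; hence \eqref{SUDL} with Cauchy--Schwarz gives $\sum_{\abs x\le r}\abs{\br{\delta_x,e^{-ith}\varphi_0}}^2\lesssim e^{-\g\rho/2}\le e^{-M\rho/\eps}$ once $\g(\l)\ge 2M/\eps$. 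For the nonlinear term one runs the continuity/bootstrap argument underlying the cited NLS-localization results: mass conservation $\norm{\varphi_t}_{\ell^2}=1$ controls the nonlinearity via $\norm{\abs{\varphi_s}^2\varphi_s}_{\ell^2}\le\norm{\varphi_s}_{\ell^\infty}^2\le1$, and choosing $\eps$ small relative to $1/\g(\l)$ keeps the amplitude accumulated inside $B_r$ over the window $[0,\rho/\eps]$ below $e^{-M\rho/\eps}$. This yields \eqref{NLSDL} with $\l_1=\l_1(\eps,M,\rho)$. (Invoking \cite{WZ2008} as a black box instead introduces its own large-probability event, which we intersect with the one from Step~1.) Intersecting the events from Steps~1 and~2 gives a subset of $\om$ of measure $\ge1-\delta_*$ on which both \ref{C1} and \ref{C2} hold, which proves the proposition.

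\textbf{Main obstacle.} The hard part will be Step~2: while the linear contribution is immediate from \eqref{SUDL}, the nonlinearity continuously transports mass, and one must rule out that the self-interaction builds a non-negligible amplitude inside $B_r$ over the \emph{long} time scale $\rho/\eps$. Closing this bootstrap is exactly where the quantitative statement $\g(\l)\to\infty$ is used, and where one must track the spatial profile of $\varphi_t$ (e.g.\ through weighted $\ell^2$-norms or a telescoping family of nested balls) rather than merely its total mass; this is the content of the NLS-localization theorems we invoke.
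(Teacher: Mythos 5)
Your Step~1 is essentially the paper's argument for \ref{C1}: averaged dynamical localization for the strongly disordered Anderson model (fractional moments / eigenfunction correlators, \cite{aizenman1993localization,aizenman2015random}) upgraded to the semi-uniform pointwise bound \eqref{SUDL} by a Markov-plus-summation step, which is exactly the combination of Theorem 10.2 in \cite{aizenman2015random} with the del Rio--Jitomirskaya--Last--Simon extraction argument (Theorem 7.6 in \cite{dRJLSa}) that the paper invokes. Two small bookkeeping points there: your event is built for one fixed $b$, whereas \ref{C1} requires the bound for \emph{all} $0<b\le1$ on a single event (fix this either by a countable intersection over $b_k=1/k$, using that \eqref{SUDL} for smaller $b$ implies it for larger $b$, or by the weighted-sum trick of \cite{dRJLSa}, which handles all $b$ at once through a polynomial weight); and you should record that the quantitative rate $\g(\l)\to\infty$ must come with the correlator bound, since \ref{C1} is used later through the requirement $\g\gtrsim M/\eps$.

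The genuine gap is in Step~2. The Duhamel bootstrap you sketch does not close: mass conservation and $\norm{\varphi_s}_{\ell^\infty}\le 1$ give no smallness whatsoever for the nonlinear Duhamel contribution $\int_0^t \mathds{1}_{B_r}e^{-i(t-s)h}\bigl(\abs{\varphi_s}^2\varphi_s\bigr)\,ds$, because the source $\abs{\varphi_s}^2\varphi_s$ is not supported away from $B_r$ a priori --- the spatial profile of $\varphi_s$ near $B_r$ is precisely what is to be proved, and controlling the cubic feedback to accuracy $e^{-M\rho/\eps}$ over the long window $[0,\rho/\eps]$ is the content of the long-time Anderson localization theorems for the random NLS, whose proofs require substantially more than dynamical localization of $h$ (spectral/normal-form input), not a continuity argument; you acknowledge this yourself in your ``main obstacle'' paragraph, so the primary route cannot stand as a proof. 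The paper's (and the correct) route is your parenthetical fallback: invoke the random-NLS localization as a black box --- but then you need the $d\ge1$ version \cite{CSWa} (the result \cite{WZ2008} you lean on is one-dimensional), intersect its large-probability event with the one from Step~1, and carry out the same rescaling and parameter bookkeeping as in the proof of Proposition~\ref{Ex1}: the cited theorems are formulated for small hopping/nonlinearity, so one passes to \eqref{NLS} via $t=\l^{-1}T$, chooses $\delta=\tfrac12 e^{-M\rho/\eps}$, and checks that the guaranteed localization time exceeds $\rho/\eps$, which is what determines the threshold $\l_1(\eps,M,\rho)$ in \eqref{NLSDL}. Your sketch replaces this necessary step by the unproved heuristic ``choose $\eps$ small relative to $1/\g(\l)$.''
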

	
	For the proof, we refer to Aizenman's classical result for random Schr\"odinger operators, and a recent long-time Anderson localization result for random NLS \cite{CSWa}. See \secref{secPfEx1} for details.

	\subsection{Discussion}\label{sec:discussion}
	In this work, we established that Bose-Einstein condensates in the mean-field regime propagate arbitrarily slowly if they are subjected to sufficiently strong disorder.
	
	The general theme of our work bears similarity to the recent work \cite{mcdonough2025non} because both works tackle Problem 1 formulated in the introduction. However, we would like to emphasize two important differences between our work and \cite{mcdonough2025non}. First, the physical systems are completely different: The localized Bose-Einstein condensate we consider has all of its degree of freedom localized together in space, while the quantum spin systems studied in \cite{mcdonough2025non} have their the degrees of freedom distributed in a spatially homogeneous way across the whole system.  Second, the analytical methods employed are also completely different: Here we develop an interaction picture analysis relative to the Anderson-localized one-body dynamics (more on that later), while \cite{mcdonough2025non} develops a combination of Lieb-Robinson bounds and local Schrieffer-Wolf transformations based on a level repulsion assumption on the non-interacting Hamiltonian.

	Indeed, in contrast to \cite{mcdonough2025non}, in the present work we heavily rely on prior developments in two traditional areas of mathematical physics: Firstly, localization results on the one-body and nonlinear dynamics which we formulated as Conditions \ref{C1} and \ref{C2} associated with the works \cite{BGS,Bou,WZ2008,CSZ2020,CSW} and, secondly, the perturbative analysis of mean-field Bose dynamics \cite{hepp1974classical,ginibre1979classical,RS,lewin2015fluctuations}. Our work thus connects these two long-standing areas in mathematical physics to study the effects of disorder on many-body dynamics.

	This work starts from the local perspective on mean-field dynamics that we 
	developed in  \cite{LRZa},  which allowed us to prove that the Bose gas spreads at most ballistically in general settings (i.e., without disorder). To achieve this, in \cite{LRZa} we adapted the ASTLO method to control the generator of the fluctuation dynamics, thus treating a particle non-conserving generator which created different technical challenges compared to \cite{FLS,FLSa,LRSZ,LRZ,lemm2025quantum}. Ballistic propagation bounds also played a crucial role in the study of mean-field dynamics for high-density Fermi gases \cite{fresta2023effective,fresta2024effective}.  
	
	Here, we make a new methodological step forward beyond \cite{LRZa} by identifying a way to incorporate and exploit  disorder, a  notoriously difficult task in interacting systems. Our main technical contribution is a suitable interaction picture analysis of the fluctuation dynamics that allows us to incorporate  localization results \ref{C1} and \ref{C2} in the many-body setting. We emphasize that using the strength of Anderson localization properties of Anderson type in a perturbative argument is of course a natural idea, but a naive approach is usually obstructed by the fact that the interactions are simply not a sufficiently small perturbation. Our point is that such a perturbation argument can indeed be implemented through a suitable locality scheme (loosely related to the ASTLO method) when one is asking a local question about fluctuations in the mean-field regime. 
	
	Finally, as an open problem, we mention extending the results presented here to the more singular Gross-Pitaevskii regime; see \cite{napiorkowski2023dynamics} for  a recent review.

	\section{Fluctuations around the Hartree evolution} 
	\label{sec:flucdyn}

	This section is devoted to an overview of the properties of the dynamics governing the quantum fluctuations around the Hartree states. For details and the proof, we refer to \cite{LRZa}.

	The fluctuation dynamics is defined on the truncated Fock space 
	\begin{align}\label{FlDef}
		\mathcal{F}_{\perp \varphi_t}^{\leq N},   :=   \bigoplus_{j=0}^N \ell^2_{\perp \varphi_t}( \mathbb{Z}^d)^{\otimes_s j}  
	\end{align}
	and formulated w.r.t.~to the excitation map $\mathcal{U}_{N,t} : \ell^2_s ( \mathbb{Z}^{dN} ) \mapsto \mathcal{F}_{\perp \varphi_t}^{\leq N}$, which acts on any $\psi_N \in \ell^2_s ( \mathbb{Z}^{dN} ) $ with the unique decomposition 
	\begin{align}
		\psi_N = \sum_{j=0}^N \varphi_t^{\otimes (N-j)} \otimes_s \xi_t^{(j)}, \quad \text{where} \quad \xi_t^{(j)} \in \ell^2_{\perp \varphi_t} ( \mathbb{Z}^d)^{\otimes_s j}, 
	\end{align}
	on the corresponding excitation vector, i.e.  {$\mathcal{U}_{N,t} \psi_N = \lbrace \xi_t^{(0)}, \dots, \xi_{t}^{(N)} \rbrace $.}   Based on this excitation map, the fluctuation dynamics is defined by 
	\begin{align}
		\mathcal{W}_N (t;s) = \mathcal{U}_{N,t} e^{-iH_N (t-s)}\mathcal{U}_{N,s}^* \label{def:W} \; .  
	\end{align}
	A straight forward computation, using the properties of the unitary $\mathcal{U}_{N,t}$ (see for example \cite{LNSS15}), shows that the fluctuation dynamics satisfies 
	\begin{align}\label{Weq}
		i \partial_t \mathcal{W}_N (t;s) = \mathcal{L}_N (t) \mathcal{W}_N (t;s) 
	\end{align}
	and the generator $\mathcal{L}_N (t)$ is given by 
	\begin{align}
		\label{eq:L-sum}
		\mathcal{L}_N (t) := \mathbb{H} + \sum_{j=1}^N \mathcal{R}_{N,t}^{(j)} \; . 
	\end{align}
	The leading order term $\mathbb{H}$ of the generator is quadratic in modified creation and annihilation operators, which are defined on $\mathcal{F}_{\perp \varphi_t}^{\leq N}$ for any $f \in \ell^2( \mathbb{Z}^d)$ by 
	\begin{align}
		\label{def:b}
		b^*(f) = a^*(f) \sqrt{1- \mathcal{N}/N}, \quad \text{resp.} \quad b(f) = \sqrt{1- \mathcal{N}/N} a(f) \; . 
	\end{align}
	With these notations, the leading order contribution of the generator reads 
	\begin{align}
		\mathbb{H}  = \dG \big( h^{\l,\om}+  { U\vert \varphi_t \vert^2} + U \widetilde{K}_{1,t} \big) + \frac{U}{2} \sum_{ x \in\Zb^d}  \big[ \widetilde{K}_{2,t} (x) b_x^*b_x^* + \overline{\widetilde{K}}_{2,t} (x) b_x b_x \big] \label{211}.
	\end{align}
	Here, with $J: \ell^2( \Zb^d) \rightarrow \ell^2( \Zb^d)$ denoting the anti-linear operator $Jf = \overline{f}$, we write
	\begin{align}
		\label{tKdef}
		\widetilde{K}_{1,s} =& q_s K_{1,s} q_s,\quad \widetilde{K}_{2,s} = (Jq_sJ) K_{2,s} q_s, \\
		K_{1,s}(x) =& \varphi_s (x) \overline{\varphi}_s (x), \quad K_{2,s} (x) = \varphi_s (x) \varphi_s (x) . \label{tkDef2}
	\end{align}
	The remainder terms will be sub-leading for our analysis in the large particle limit and are given by 
	\begin{align}
		\label{def:Ri}
		\mathcal{R}_{N,t}^{(1)} =& \frac{U}{2} \dG \big( q_t \big[\vert \varphi_t \vert^2 \varphi_t + \widetilde{K}_{1,t} {-\mu_t}\big] q_t \big)  \frac{1- \mathcal{N}^+ (t)}{N} + U {\frac{\mathcal{N}^+(t) }{\sqrt{N}} b(q_t \vert \varphi_t \vert^2 \varphi_t )} + {\rm h.c.},  \ \\
		\mathcal{R}_{N,t}^{(2)} =& \frac{U}{\sqrt{N}} \sum_{x \in \Zb^d} \varphi_t (x) a^*(q_{t,x} ) a(q_{t,x} ) b (q_{t,x} ) + {\rm h.c.},   \\
		\mathcal{R}_{N,t}^{(3)} =& \frac{U}{N}\sum_{x \in \Zb^d}  a^*(q_{t,x} ) a^*(q_{t,x} )a(q_{t,x} ) a(q_{t,x} ) \; . \label{213}
	\end{align}
	In \eqref{def:Ri}, we set \begin{align}
		\label{mutDef}
		2\mu_t := \sum_{x \in \mathbb{Z}^d} \vert \varphi_t (x) \vert^2 \; \vert \varphi_t (y) \vert^2.
	\end{align} 
	We note that on the truncated Fock space it is convenient to work with the modified creation and annihilation operators $b^*(f),b(f)$ as defined in \eqref{def:b}, in contrast to the standard creation and annihilation operators, as the former leave the truncated Fock space $\mathcal{F}_{\perp \varphi_t}^{\leq N}$ invariant. However this comes with the price of modified canonical commutation relations  
	\begin{align}
		[b(f), b^*(g)] = \bigg( 1 - \frac{\mathcal{N}^+ (t)}{N} \bigg) \langle g,f \rangle - \frac{1}{N} a^*(g) a(f) , \quad [b^*(f), b^*(g)] = [b(f), b(g)] = 0 \; . 
	\end{align}
	
	Our results utilize estimates of (powers of) the number of excitations $\mathcal{N}^+ (t)$  w.r.t.~to the solution $\psi_{N,t}$ of the Schr\"odinger equation \eqref{eq:Schroe}. On the truncated Fock space $\mathcal{F}_{\perp \varphi_t}^{\leq N}$, we have $\mathcal{U}_{N,t} \mathcal{N}^+ (t) \mathcal{U}_{N,t}^* = \mathcal{N}^+ (t) = \mathcal{N}$, where $\cN$ denotes the global particle number    operator 
	\begin{align}
		\mathcal{N} := \sum_{z \in \Zb^d} n_z, \quad n_z = a^*_z a_z .
	\end{align}  See \cite{LRZa} for more details. Therefore, we have the relation
	\begin{align}
		\langle \psi_{N,t}, \mathcal{N}^+ (t) \psi_{N,t} \rangle = \langle \mathcal{W}_{N}(t;0) \mathcal{U}_{N,0} \psi_{N,0}, \mathcal{N} \mathcal{W}_{N}(t;0) \mathcal{U}_{N,0} \psi_{N,0} \rangle .
	\end{align}

	The next Lemma, proven in \cite{LRZa}, shows a bound on $\cN$. We introduce for any operator $\mathcal{A}$ on $\mathcal{F}_{\perp \varphi_t}^{\leq N}$ and any state $\mathcal{U}_{N,s} \psi_{N,s} \in \mathcal{F}_{\perp \varphi_s}^{\leq N}$  the shorthand notation 
	\begin{align}
		\label{def:Ast}
		\langle \mathcal{A} \rangle_{(t;s)} =  \langle \mathcal{W}_{N}(t;s) \mathcal{U}_{N,s} \psi_{N,s}, \mathcal{A} \mathcal{W}_{N}(t;s) \mathcal{U}_{N,s} \psi_{N,s} \rangle \; . 
	\end{align}

	%Estimates for $\br{\cN}_{(t;s)}$ can be obtained in terms of given condensate dynamics,  as stated in the next lemma:
	\begin{lemma}
		[\cite{LRZa}, {Lem.~4.2}]
		\label{lem23} Let $\varphi_t,\,t\ge0$ be a solution to the Hartree equation \eqref{NLS} with initial data $\varphi_0 \in \ell^2$, $\norm{\varphi_0}_{\ell^2}\le1$.
		%		,  and $\mathcal{U}_{N,s} \psi_{N,s} \in \mathcal{F}_{\perp \varphi_s}^{\leq N}$. 
		Then for $j=1,2,3$,  there exists  $C_* = C_* (j)>0$ such that 
		\begin{align}
			\langle (\mathcal{N} + 1)^j \rangle_{(t;s)} \leq \langle (\mathcal{N} + 1 )^j \rangle_{(s;s)} \; \exp \bigg( C_* \abs{U} \int_s^t d\si \| \varphi_\si \|_{\ell^\infty} \bigg),\quad 0\le s\le t. \label{eq:moment-bound}
		\end{align}
		
	\end{lemma}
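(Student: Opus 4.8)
\emph{Strategy.} The plan is to prove \eqref{eq:moment-bound} by a Grönwall argument for $f_j(t):=\langle(\mathcal N+1)^j\rangle_{(t;s)}=\langle\Psi_t,(\mathcal N+1)^j\Psi_t\rangle$, where $\Psi_t:=\mathcal W_N(t;s)\mathcal U_{N,s}\psi_{N,s}\in\mathcal F^{\le N}_{\perp\varphi_t}$, along the lines of Rodnianski--Schlein \cite{RS}. I would first note that on the truncated Fock space \eqref{FlDef} one has $\mathcal N\le N$, so that $(\mathcal N+1)^j$ is a \emph{bounded} operator and all manipulations below are legitimate without domain considerations; the point of the lemma is that the resulting bound is $N$-independent, hence far better than the trivial $f_j\le(N+1)^j$. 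Differentiating using \eqref{Weq} gives
\[
\tfrac{d}{dt}f_j(t)=\big\langle\Psi_t,\,i[\mathcal L_N(t),(\mathcal N+1)^j]\,\Psi_t\big\rangle.
\]
Since $(\mathcal N+1)^j$ is a function of the number operator, it commutes with every particle-number--conserving term of $\mathcal L_N(t)$. Inspecting \eqref{211} and \eqref{def:Ri}--\eqref{213} (and recalling that $\mathcal N^+(t)=\mathcal N$ on $\mathcal F^{\le N}_{\perp\varphi_t}$), the surviving contributions are only: the off-diagonal quadratic piece $Q(t):=\tfrac U2\sum_x[\widetilde K_{2,t}(x)b_x^*b_x^*+\overline{\widetilde K}_{2,t}(x)b_xb_x]$ of $\mathbb H$; the term $U\tfrac{\mathcal N^+(t)}{\sqrt N}b(q_t|\varphi_t|^2\varphi_t)+\mathrm{h.c.}$ inside $\mathcal R^{(1)}_{N,t}$; and all of $\mathcal R^{(2)}_{N,t}$.

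\emph{Commutator estimates.} The workhorse is the pull-through identity: since $[\mathcal N,b^*(f)]=b^*(f)$ (immediate from \eqref{def:b}), one has $(\mathcal N+1)^jb_x^*b_x^*=b_x^*b_x^*(\mathcal N+3)^j$, hence $[(\mathcal N+1)^j,b_x^*b_x^*]=b_x^*b_x^*\big((\mathcal N+3)^j-(\mathcal N+1)^j\big)$ with $0\le(\mathcal N+3)^j-(\mathcal N+1)^j\le C_j(\mathcal N+1)^{j-1}$ for $j=1,2,3$, and analogously for $b_xb_x$. Combined with (i) the elementary bounds $\|b(f)\Phi\|,\|b^*(f)\Phi\|\le\|f\|_{\ell^2}\|(\mathcal N+1)^{1/2}\Phi\|$ on $\mathcal F^{\le N}$ (valid because $\|\sqrt{1-\mathcal N/N}\|\le1$ there) together with the standard bounds for quadratic expressions in $b,b^*$ in terms of the operator norm of their coefficient; and (ii) the operator bounds $\|\widetilde K_{1,t}\|_{\mathrm{op}}\le\|K_{1,t}\|_{\mathrm{op}}=\|\varphi_t\|_{\ell^\infty}^2$ and $\|\widetilde K_{2,t}\|_{\mathrm{op}}\le\|K_{2,t}\|_{\mathrm{op}}=\|\varphi_t\|_{\ell^\infty}^2$, which follow from \eqref{tKdef}--\eqref{tkDef2} and the contractivity of $q_t$ and $Jq_tJ$; a Cauchy--Schwarz distributing the $(\mathcal N+1)$-powers symmetrically over the two creation/annihilation factors yields $|\langle\Psi_t,i[Q(t),(\mathcal N+1)^j]\Psi_t\rangle|\le C_j|U|\,\|\varphi_t\|_{\ell^\infty}^2\,f_j(t)$. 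The two remainder contributions are treated the same way: the prefactors $\mathcal N^+/\sqrt N$ and $1/\sqrt N$ are absorbed \emph{exactly} by splitting $\mathcal N^+/\sqrt N=(\mathcal N^+/N)^{1/2}(\mathcal N^+)^{1/2}$ and using $\mathcal N^+/N\le1$ only to dispose of the explicit $N$-powers (not to gain size), while $\|q_t|\varphi_t|^2\varphi_t\|_{\ell^2}\le\|\varphi_t\|_{\ell^\infty}^2\|\varphi_t\|_{\ell^2}$ and the $x$-sum in $\mathcal R^{(2)}_{N,t}$ is controlled by $\|\varphi_t\|_{\ell^2}$ and $\|\varphi_t\|_{\ell^\infty}$; this gives $|\langle\Psi_t,i[\mathcal R^{(1)}_{N,t}+\mathcal R^{(2)}_{N,t},(\mathcal N+1)^j]\Psi_t\rangle|\le C_j|U|\,\|\varphi_t\|_{\ell^\infty}\,f_j(t)$. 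Finally, since the $\ell^2$-norm is conserved under \eqref{NLS} and $\|\varphi_0\|_{\ell^2}\le1$, we have $\|\varphi_t\|_{\ell^\infty}\le\|\varphi_t\|_{\ell^2}\le1$, so the quadratic $\|\varphi_t\|_{\ell^\infty}^2$ above may be replaced by $\|\varphi_t\|_{\ell^\infty}$ at the cost of a constant.

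\emph{Conclusion and main obstacle.} Collecting the above yields the differential inequality $\tfrac{d}{dt}f_j(t)\le C_*(j)\,|U|\,\|\varphi_t\|_{\ell^\infty}\,f_j(t)$, and Grönwall's inequality on $[s,t]$, together with $f_j(s)=\langle(\mathcal N+1)^j\rangle_{(s;s)}$ (since $\mathcal W_N(s;s)=\mathrm{Id}$), gives \eqref{eq:moment-bound}. The step I expect to be genuinely laborious — and which carries the real content — is the commutator bound for the higher moments $j=2,3$: one must keep careful track of the pull-through remainders $(\mathcal N+3)^j-(\mathcal N+1)^j$ and of the reordering corrections produced when moving $(\mathcal N+1)$-powers past the modified operators $b,b^*$, and then distribute those powers so that \emph{every} resulting term is bounded by a constant multiple of $f_j(t)$ with no leftover positive power of $\mathcal N$. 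This bookkeeping is precisely what is carried out in \cite[Lem.~4.2]{LRZa} (and, for the leading quadratic part, already in \cite{RS}).
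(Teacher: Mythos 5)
Your proposal is correct and follows essentially the same route as the paper's source for this lemma: the paper does not reprove it but cites \cite{LRZa} (Lem.~4.2), whose argument is exactly the Rodnianski--Schlein-type Gr\"onwall scheme you describe — differentiate $\langle(\mathcal N+1)^j\rangle_{(t;s)}$ along the fluctuation dynamics, note that only the non-number-conserving pieces of $\mathcal L_N(t)$ contribute, bound their commutators with $(\mathcal N+1)^j$ via pull-through identities and symmetric distribution of number-operator powers (absorbing the $N^{-1/2}$ prefactors with $\mathcal N\le N$), and close with Gr\"onwall using $\|\varphi_t\|_{\ell^\infty}\le\|\varphi_t\|_{\ell^2}\le1$. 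The bookkeeping you defer for $j=2,3$ is indeed the content of the cited lemma, and your sketch identifies the correct mechanism for it.
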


	\section{Proof of \thmref{thm1}}\label{sec:proofthm1}
	We consider the fluctuation dynamics, defined in \eqref{def:W}, in the interaction picture. 
	
	Write $h=h^{\l,\om}=-\Lap+\l v_\om$ (see \eqref{HNdef}), $H_0:=\dG(h)\equiv \dG(-\Lap+\l v_\om)$, 
	\begin{align}
		\label{31}
		\tau _t :=&\tau_t^\Int\circ \tau _t ^{(0)},\quad \\
		\tau_t^\Int(A) :=&\mathcal{W}_N(0;t) e^{-itH_0}Ae^{itH_0} \mathcal{W}_N (t;0),\label{tauIntDef}\\
		\tau^{(0)}(A):=&e^{itH_0}Ae^{-itH_0}.
	\end{align} 
	It is shown in \cite{NS} that the map $t\mapsto \tau_t(A)$ is strongly differentiable for any bounded operator $A$ on $\Fl$.

	Our goal now is to prove dynamical control on the fluctuation dynamics $\tau_t$. There are two basic steps in the proof:
	\begin{enumerate}
		\item	Control $\tau_t^{(0)}$ through known results about 1-body dynamical localization;
		\item 	Extend to $\tau_t^\Int$ via commutator expansion.
	\end{enumerate}
	
	The key observation for both step is that $$e^{-itH_0}a_x=a(e^{-ith}\delta_x),$$ in which the $1$-body probability distribution evolving according to the $1$-body Schr\"odinger evolution $e^{-ith}$  satisfies SUDL \eqref{SUDL} by assumption. Specifically, setting $b=1$ in \eqref{SUDL} yields 
	\begin{align}
		\label{1partDL}
		\sup_t \abs{\br{e^{-ith}\delta_x,\delta_y}}\le C_1    e^{  \abs{x}}e^{-\g|x-y|},\quad x,\,y\in\Zb^d .
	\end{align} 
	We seek to establish a suitable  form of many-body localization via \eqref{1partDL} at the level of  fluctuations according to Steps 1-2 above. 
	
	\subsection{Estimate for $\tau_t^{(0)}$}
	We begin with the following bound on the free evolution $\tau^{(0)}$:
	\begin{lemma}\label{lem31}
		If \eqref{1partDL} holds, then there exists $C=C(d)>0$ such that for any $r>0$ and all $t$, the following operator inequality holds on $\Fl$: \begin{align}
			\label{MBL1}
			{ \tau_t^{(0)}(\N_{B_{r}})}\le  C \sum_{\abs{w}\le r} {\sum_z  e^{2\abs{w}-\g\abs{z-w}} {n_z}}.
		\end{align}
	\end{lemma}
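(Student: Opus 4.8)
The plan is to push the bound through the second-quantization map $\dG$, thereby reducing the Fock-space inequality \eqref{MBL1} to a one-body operator inequality, and then to dominate the resulting one-body operator by a diagonal one using SUDL \eqref{1partDL}. Two structural facts will be used repeatedly: $\dG$ is linear and order-preserving (if $0\le A\le B$ on $\ell^2(\Zb^d)$ then $\dG(A)\le\dG(B)$, by restricting to each $j$-particle sector, where $\dG(A)=\sum_{i=1}^jA^{(i)}$), and $\dG(\diag(c(\cdot)))=\sum_z c(z)\,n_z$.

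\emph{Step 1 (free evolution of an on-site number).} Using the identity $e^{-itH_0}a_x=a(e^{-ith}\delta_x)$ noted above, equivalently $e^{itH_0}a(f)e^{-itH_0}=a(e^{ith}f)$, and writing $g_w:=e^{ith}\delta_w\in\ell^2(\Zb^d)$, one gets
\[
\tau_t^{(0)}(n_w)=e^{itH_0}a_w^*a_w\,e^{-itH_0}=a^*(g_w)\,a(g_w)=\dG\big(\vert g_w\rangle\langle g_w\vert\big),
\]
hence $\tau_t^{(0)}(\N_{B_r})=\dG\big(\sum_{\abs{w}\le r}\vert g_w\rangle\langle g_w\vert\big)$ by linearity. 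So it remains to bound the one-body operator $\sum_{\abs{w}\le r}\vert g_w\rangle\langle g_w\vert$ by an explicit diagonal operator and apply $\dG$; in particular \eqref{MBL1} will first be obtained on the full Fock space $\mathcal F(\ell^2(\Zb^d))$ and then restricts to $\Fl$.

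\emph{Step 2 (one-body domination).} Since $h=h^{\l,\om}$ is real and symmetric, $\vert g_w(z)\vert=\vert\langle e^{-ith}\delta_w,\delta_z\rangle\vert$, so SUDL \eqref{1partDL} gives $\vert g_w(z)\vert\le C_1 e^{\abs{w}}e^{-\g\abs{z-w}}$ uniformly in $t$. Next, the elementary Cauchy--Schwarz inequality $\big\vert\sum_z\overline{g(z)}\psi(z)\big\vert^2\le\big(\sum_z\abs{g(z)}\big)\big(\sum_z\abs{g(z)}\abs{\psi(z)}^2\big)$ says that $\vert g\rangle\langle g\vert\le\|g\|_{\ell^1}\,\diag(\abs{g(\cdot)})$ as quadratic forms. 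Because $\g\ge1$ by \ref{C1}, $\|g_w\|_{\ell^1}\le C_1 e^{\abs{w}}\sum_{u\in\Zb^d}e^{-\g\abs{u}}\le C_1C_d\,e^{\abs{w}}$ with $C_d:=\sum_u e^{-\abs{u}}<\infty$ depending only on $d$. Combining, $\vert g_w\rangle\langle g_w\vert\le C_1^2C_d\,\diag\big(e^{2\abs{w}-\g\abs{\cdot-w}}\big)$; summing over $\abs{w}\le r$ and applying $\dG$ gives
\[
\tau_t^{(0)}(\N_{B_r})\le C_1^2 C_d\sum_{\abs{w}\le r}\sum_z e^{2\abs{w}-\g\abs{z-w}}\,n_z,
\]
which is \eqref{MBL1} with $C=C_1^2C_d=C(d)$ (the SUDL prefactor $C_1$ being a fixed model constant).

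\emph{Main point rather than obstacle.} There is no serious obstacle: the entire content is the identity $\tau_t^{(0)}(n_w)=\dG(\vert e^{ith}\delta_w\rangle\langle e^{ith}\delta_w\vert)$, which turns one-body dynamical localization into many-body localization of the free-evolved on-site number operators automatically. The only point needing care is the bookkeeping of the SUDL weight $e^{\abs{w}}$, which enters \emph{twice} --- once through $\|g_w\|_{\ell^1}$ and once through the pointwise bound on $\abs{g_w(z)}$ --- producing the $e^{2\abs{w}}$ in \eqref{MBL1}; and using reality of $h$ to place the localization center $w$, rather than $z$, in the exponential prefactor of \eqref{1partDL}, together with $\g\ge1$ to make $\sum_u e^{-\g\abs{u}}$ a finite, purely $d$-dependent constant.
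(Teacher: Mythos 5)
Your proof is correct and follows essentially the same route as the paper: expand $\tau_t^{(0)}(\N_{B_r})$ through the free one-body evolution of $\delta_w$, then combine Cauchy--Schwarz with the SUDL bound \eqref{1partDL} (at $b=1$, using $\g\ge1$ to sum the exponential tails), arriving at the same constant $C_1^2C_d$. The only difference is packaging: you perform the Cauchy--Schwarz domination at the one-body level ($\vert g_w\rangle\langle g_w\vert\le\|g_w\|_{\ell^1}\diag(\abs{g_w})$) and invoke monotonicity of $\dG$, whereas the paper estimates $\abs{\langle\psi,a_x^*a_y\psi\rangle}\le\langle n_x\rangle^{1/2}\langle n_y\rangle^{1/2}$ directly in the state — the quantitative content is identical.
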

	\begin{proof}
		Using the automorphism property $\tau_t^{(0)}(n_x)=
		\tau_t^{(0)}(a_x^* a_x)=\tau_t^{(0)}(a_x^* )\tau_t^{(0)}(a_x)$ and linearity of the mapping $f\mapsto a^
		*(f)$ (resp.\ anti-linearity of $f\mapsto a(f)$),
		we get
		\[
		\tau_t^{(0)}(\mathcal N_{B_r})
		=\sum_{|w|\leq r} a^*(e^{-ith}\delta_w)a(e^{ith}\delta_w)
		=\sum_{|w|\leq r}\sum_{x,y}  u_{x,w}(t) \bar u_{y,w}(t) a_x^* a_y,
		\]
		where $$u_{x,w}(t)=\langle e^{-ith}\delta_w,\delta_x\rangle$$ satisfies,  according to \eqref{1partDL},
		\begin{align}
			\label{uEst}
			\abs{u_{x,w}(t)}\le C_1e^{\abs{w}} e^{-\g\abs{x-w}} .
		\end{align}
		
		For any $\psi\in\Fl$, we compute, using triangle and Cauchy-Schwarz inequalities,
		\begin{align}
			\label{37}
			\abs{\wndel{\tau_t^{(0)}(\mathcal N_{B_r})  }} =&  \abs{\wndel{\sum_{|w|\leq r}\sum_{x,y}  u_{x,w}(t) \bar u_{y,w}(t) a_x^* a_y}}\notag\\
			\le&\sum_{|w|\leq r}\sum_{x,y}\abs{u_{x,w}(t)}\abs{u_{y,w}(t)}\abs{\wndel{a_x^*a_y}}\notag\\
			\le&\sum_{|w|\leq r}\sum_{x,y}\abs{u_{x,w}(t)}\abs{u_{y,w}(t)}\wndel{n_x}^{1/2}\wndel{n_y}^{1/2}.
		\end{align}
		For each fixed $w$, using estimate \eqref{uEst} for $\abs{u_{y,w}(t)}$ and Cauchy-Schwarz again    yields
		\begin{align}
			\label{}
			\sum_y\abs{u_{y,w}(t)}\wndel{n_y}^{1/2}\le &	 C_1e^{\abs{w}}\sum_y e^{-\g\abs{y-w}} \wndel{n_y}^{1/2}\notag\\
			\le& C_1e^{\abs{w}}\del{\sum_y e^{-\g\abs{y-w}} }^{1/2} \del{\sum_y  e^{-\g\abs{y-w}}\wndel{n_y}}^{1/2}, 
		\end{align}
		and a similar bound holds for $\sum_x\abs{u_{x,w}(t)}\wndel{n_x}^{1/2}$. Plugging these back to \eqref{37} leads to 
		\begin{align}
			\label{39'}
			\abs{\wndel{\tau_t^{(0)}(\mathcal N_{B_r})  }} \le C_1^2C_{d} \sum_{\abs{w}\le r}  e^{2\abs{w}} {\sum_z  e^{-\g\abs{z-w}}\wndel{n_z}} ,
		\end{align}
		where  
		\begin{align} 
			\label{CrDef}
			C_{d}=  \sum_{z\in \Zb^d} e^{-\g\abs{z}}
		\end{align} 
		is bounded independent of $\g$ for any $\g\ge1$. 
		Setting $C=C_1^2C_d$ in \eqref{39'}  gives the desired result.
		
	\end{proof}

	\subsection{Estimate for $\tau_t^\Int$}
	
	The interaction dynamics is estimated using the following commutator bound:
	\begin{lemma}\label{lemMain}
		
		Let $T:\Zb^d \rightarrow \mathbb{R} $. Then, we have   for any $\psi\in\Fl$, with $\| \psi \|=1 $ and with the notation $\psi_t := e^{-itH_0} \psi $,
		\begin{align}\label{412}
			& \abs{\wndel{\big[H_{\rm int} (t), \sum_{z \in \Zb^d} T(z) n_z \big]}} \notag \\
			&\leq  C  \sum_{x,y,z \in \Zb^d}  \vert T(z) \vert \; \vert  u_{x,z} (t) \vert \; \vert u_{y,z} (t) \vert \notag \\
			&\hspace{0.5cm} \times  \bigg( \vert \varphi_t (y) \vert^2 \; \| n_x^{1/2} \psi_t \| \; \|( n_y + 1)^{1/2} \psi_t \|\notag \\
			& \hspace{1cm} +  \frac{1}{\sqrt{N}} \bigg[ \vert \varphi_t (x) \vert^3  \| \mathcal{N} \psi_t \|    \; \|n_y^{1/2} \psi_t \| + \vert \varphi_t (x) \vert \; \|n_y^{1/2} \psi_t \| \; \|(n_x+1) \psi_t \| +  \vert \varphi_t (x) \vert \| n_y^{1/2} \psi_t \| \bigg]  \notag \\
			& \hspace{1cm} + \frac{1}{N}  \|(n_x(n_x+1))^{1/2} \psi_t \|\; \| n_y^{1/2} ( \mathcal{N}+1)^{1/2} \psi_t \|  \bigg) ,
		\end{align}
		where %
		\begin{align}
			\label{HintDef}
			H_{\rm int} (t) = e^{itH_0} \big( \mathcal{L}_N (t) - H_0 \big) e^{-it H_0},
		\end{align} and 
		\begin{align}
			\label{def:u}
			u_{x,z} (t) = \langle \delta_z, e^{-ith} \delta_x \rangle.
		\end{align} 
	\end{lemma}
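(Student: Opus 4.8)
The plan is to compute the commutator in \eqref{412} explicitly, monomial by monomial, after conjugating the free flow $e^{\pm itH_0}$ out of $H_{\rm int}(t)$ (cf.\ \eqref{HintDef}). First I would substitute the explicit form of the generator from \eqref{211}, \eqref{def:Ri}--\eqref{213},
\[
	\cL_N(t)-H_0 = \dG\big(U\abs{\varphi_t}^2 + U\widetilde{K}_{1,t}\big) + \frac U2\sum_{x\in\Zb^d}\big[\widetilde{K}_{2,t}(x)\,b_x^*b_x^* + \overline{\widetilde{K}}_{2,t}(x)\,b_xb_x\big] + \sum_{j=1}^3 \cR_{N,t}^{(j)},
\]
and split the commutator accordingly. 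Since $H_0=\dG(h)$ generates the one-body flow $e^{\pm itH_0}a^{\#}(f)e^{\mp itH_0}=a^{\#}(e^{\pm ith}f)$ (and the same for the modified operators $b^{\#}$, as $\cN$ commutes with $H_0$), moving these factors past $\sum_z T(z)n_z$ and using $e^{-itH_0}n_z e^{itH_0}=a^*(e^{-ith}\delta_z)a(e^{-ith}\delta_z)$ together with the notation $u_{x,z}(t)$ of \eqref{def:u} reduces the left-hand side of \eqref{412} to
\[
	\sum_z T(z)\,\Big\langle\psi_t,\big[\cL_N(t)-H_0,\ \textstyle\sum_{x,y}u_{x,z}(t)\,\overline{u_{y,z}(t)}\,a_x^*a_y\big]\psi_t\Big\rangle,\qquad \psi_t:=e^{-itH_0}\psi,
\]
where the two kernel factors come from the annihilation and creation legs of $e^{-itH_0}n_ze^{itH_0}$. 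It then remains to bound, monomial by monomial, the expectation in $\psi_t$ of the commutator of each term of $\cL_N(t)-H_0$ with $\sum_{x,y}u_{x,z}(t)\overline{u_{y,z}(t)}a_x^*a_y$, with $T(z)$ and the sum over $z$ pulled outside.

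For the quadratic part (the $\dG$-term and the $b^*b^*,bb$ pairing term) I would use the canonical commutation relations --- so that $[\dG(M),\dG(P)]=\dG([M,P])$ for the $\dG$-term, while for the pairing term each commutator contracts one annihilation leg against one creation leg via the modified $b$-commutators recalled in Section~\ref{sec:flucdyn}. Each resulting monomial is again quadratic, with one leg pinned to a site carrying $u_{x,z}(t)$ and the other to a site carrying $u_{y,z}(t)$, while the surviving $\ell^2$-coefficient --- bounded pointwise by $\abs{\varphi_t(\cdot)}^2$ since $K_{2,t}(x)=\varphi_t(x)^2$ by \eqref{tkDef2} and $q_t$ has norm $\le1$ --- produces the $\abs{\varphi_t(y)}^2$ weight. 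Estimating these monomials in $\psi_t$ by Cauchy--Schwarz and the elementary identities $\norm{a_x\psi_t}=\norm{n_x^{1/2}\psi_t}$, $\norm{a_x^*\psi_t}=\norm{(n_x+1)^{1/2}\psi_t}$ (and $\norm{b^{\#}(f)\psi_t}\le\norm{a^{\#}(f)\psi_t}$ because $\norm{\sqrt{1-\cN/N}}\le1$), supplemented by $\norm{n_x^{1/2}\psi_t}\le\norm{(n_x+1)^{1/2}\psi_t}$ and $\norm{(n_x+1)^{1/2}\psi_t}\le\norm{n_x^{1/2}\psi_t}+1$ to recast the $b^*b^*,bb$ contributions, yields the first group on the right-hand side of \eqref{412}.

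The remainders $\cR^{(j)}_{N,t}$ carry explicit prefactors $N^{-1}$ (all of $\cR^{(3)}$ and the $\tfrac{1-\cN^+}{N}$-part of $\cR^{(1)}$) or $N^{-1/2}$ (the remaining parts of $\cR^{(1)}$ and $\cR^{(2)}$), and are treated by the same scheme: commuting the cubic and quartic monomials in $a^{\#}(q_{t,x})$, $q_{t,x}=q_t\delta_x$, with $\sum_{x,y}u_{x,z}(t)\overline{u_{y,z}(t)}a_x^*a_y$ again produces lower-degree monomials, whose expectations I would bound using in addition $\norm{a(f)\psi_t}\le\norm{f}_{\ell^2}\norm{\cN^{1/2}\psi_t}$, $\norm{a^*(f)\psi_t}\le\norm{f}_{\ell^2}\norm{(\cN+1)^{1/2}\psi_t}$, and the splitting $\abs{(q_t\delta_x)(w)}\le\delta_x(w)+\abs{\varphi_t(x)}\abs{\varphi_t(w)}$ of the projected modes into local and condensate parts. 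Pulling out the weights $\abs{T(z)}\abs{u_{x,z}(t)}\abs{u_{y,z}(t)}$ and the sum over $x,y,z$ then produces the second and third groups and finishes \eqref{412}.

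The step I expect to be the main obstacle is the bookkeeping for these remainder terms: $\cR^{(2)}_{N,t}$ and $\cR^{(3)}_{N,t}$ are cubic and quartic in operators built on the $\varphi_t$-dependent modes $q_{t,x}$, so each commutator spawns several monomials, and one must (i) track which legs absorb the kernels $u_{\cdot,z}(t)$ and which absorb the $\abs{\varphi_t(\cdot)}$ factors, (ii) decide, for each monomial, which factor is retained as $\norm{\cN^{1/2}\psi_t}$ or $\norm{\cN\psi_t}$ and which is localized via an $\norm{n_x^{1/2}\psi_t}$-type norm, and (iii) account for every power of $N^{-1/2}$, including those hidden inside the modified commutation relations for the $b$-operators. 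The quadratic part is comparatively routine.
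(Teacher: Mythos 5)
Your proposal follows essentially the same route as the paper's proof: conjugate the free flow onto $n_z$ so that $e^{-itH_0}n_ze^{itH_0}=\sum_{x,y}u_{x,z}(t)\overline{u_{y,z}(t)}\,a_x^*a_y$, split $\cL_N(t)-H_0$ into the quadratic pairing part and the remainders $\cR^{(j)}_{N,t}$, and bound each commutator with $a_x^*a_y$ monomial by monomial via Cauchy--Schwarz and the identities $\norm{a_x\psi_t}=\norm{n_x^{1/2}\psi_t}$, $\norm{a_x^*\psi_t}=\norm{(n_x+1)^{1/2}\psi_t}$, tracking the $N^{-1/2}$ and $N^{-1}$ prefactors. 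The only minor divergence is that you propose to split the projected modes $q_t\delta_x$ by hand, whereas the paper simplifies $\cR^{(2)}_{N,t}$ directly on $\Fl$; this does not change the argument.
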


	The proof of this lemma is postponed to Sect.~\ref{sec44}, and 
	we first derive the desired bound on interaction dynamics from \eqref{412}.

	\begin{lemma}\label{lem42}
		Let $r>0$, $R\ge2r$, and $\rho=R-r$. 
		Let $\norm{\varphi_0}_{\ell^2}=1$, $\varphi_0(x)=0$ for $\abs{x}\ge R$, and assume \ref{C2} holds. Then, given any $M,\,\eps>0$, there exist  $\l_*=\l_*(M,\,\eps, \rho)>0$  such that for $\l\ge\l_*$ and all $0\le t \le \rho/\eps $, the following operator inequality holds on $\Fl$:
		\begin{align}
			\label{MBL2}
			i {\big[H_{\rm int} (t), \sum_{z }\sum_{\abs{w}\le r} e^{2\abs{w}-\g\abs{z-w}} n_z \big]} \le &   C_{r,d}\del{ e^{-\frac{M}{\tilde \eps}\rho}{(\cN+1)}+ \frac{1}{N}{(\cN+1)^2}}.
		\end{align}
		Here and in what follows, we set $\tilde \eps := 3\eps$. 
	\end{lemma}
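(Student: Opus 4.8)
The plan is to feed the commutator estimate \eqref{412} of Lemma~\ref{lemMain} the weight $T(z)=\sum_{\abs w\le r}e^{2\abs w-\g\abs{z-w}}$ and then bound the three groups of terms on its right-hand side --- those carrying the prefactors $1$, $N^{-1/2}$, $N^{-1}$ --- against the right-hand side of \eqref{MBL2}. Everything rests on two elementary observations. First, $T$ is bounded by a constant $C_{r,d}$ on $B_r$, whereas for $z\notin B_r$ one has $T(z)\le C_{r,d}e^{-\g\max(1,\,\abs z-r)}$; the factor $e^{-\g}$ makes this uniformly small once $\g$ is large, and the factor $e^{-\g(\abs z-r)}$ makes it summable in $z$ against the exponential-in-$\abs z$ growth produced below. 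Second, by \eqref{1partDL}, $\abs{u_{x,z}(t)}\le C_1e^{\abs x}e^{-\g\abs{x-z}}$, and for $z\in B_r$, $x\notin B_r$ the lattice forces $\abs{x-z}\ge1$, so writing $e^{\abs x}\le e^{r}e^{\abs{x-z}}$ yields $\abs{u_{x,z}(t)}\le C_1e^{r}e^{-(\g-1)\abs{x-z}}\le C_1e^{r}e^{-(\g-1)}$, again small for $\g$ large. Throughout one uses that $H_0=\dG(h)$ preserves particle number, so $e^{-itH_0}$ commutes with $\cN$ and $\br{\psi_t,(\cN+1)^k\psi_t}=\br{\psi,(\cN+1)^k\psi}$.

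The main term is the one with prefactor $1$: $\sum_{x,y,z}\abs{T(z)}\,\abs{u_{x,z}(t)}\,\abs{u_{y,z}(t)}\,\abs{\varphi_t(y)}^2\,\norm{n_x^{1/2}\psi_t}\,\norm{(n_y+1)^{1/2}\psi_t}$. For fixed $z$ I would apply Cauchy--Schwarz in $x$ and in $y$; using $\sum_x\abs{u_{x,z}(t)}\le C_de^{\abs z}$ and $\sup_x\abs{u_{x,z}(t)}\le C_1e^{\abs z}$, the $x$-sum is $\le Ce^{\abs z}\br{\psi_t,\cN\psi_t}^{1/2}$, and the $y$-sum is $\le\mathrm S(z)^{1/2}(C_1e^{\abs z})^{1/2}\br{\psi_t,(\cN+1)\psi_t}^{1/2}$ with $\mathrm S(z):=\sum_y\abs{u_{y,z}(t)}\abs{\varphi_t(y)}^2$. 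I would then split $\mathrm S(z)$ at $\abs y=r$: on $\abs y\le r$ it is controlled via Condition~\ref{C2} applied at radius $r$ --- with a decay exponent $M'$ that, since $R\ge2r$ and $\varphi_0$ vanishes on $B_R$, may be chosen as large as needed while still covering the window $t\le\rho/\eps$, the rescaling $\tilde\eps=3\eps$ providing the corresponding slack --- giving $\le Ce^{r}e^{-M'\rho/\eps}$; on $\abs y>r$ it is controlled by the SUDL decay above, giving $\le C_{r,d}e^{r}e^{-(\g-1)}$. Multiplying the two Cauchy--Schwarz factors and summing $\abs{T(z)}(\cdots)$ over $z$ --- separating $z\in B_r$ (finitely many points, each producing the small factor $\mathrm S(z)^{1/2}$) from $z\notin B_r$ (where $\abs{T(z)}\le C_{r,d}e^{-\g(\abs z-r)}$ dominates the $e^{2\abs z}$ left over from the preceding bounds once $\g>2$) --- bounds this term by $C_{r,d}\,e^{-\frac12 M'\rho/\eps}\br{\psi,(\cN+1)\psi}$. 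Choosing $\l_*$ large enough that $\g(\l)$ and $M'$ beat $M/\tilde\eps$ then gives the first term on the right-hand side of \eqref{MBL2}.

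For the terms with prefactor $N^{-1/2}$ the same splitting of the $\varphi_t$-sum produces a factor $e^{-\frac12M'\rho/\eps}$, while the extra powers of $\cN$ are absorbed using $\cN\le N$ on $\Fl$: e.g.\ $N^{-1/2}\norm{\cN\psi_t}=N^{-1/2}\br{\psi_t,\cN^2\psi_t}^{1/2}\le\br{\psi_t,\cN\psi_t}^{1/2}$ and $N^{-1/2}\norm{(n_x+1)\psi_t}\le N^{-1/2}\br{\psi_t,(\cN+1)^2\psi_t}^{1/2}\le\sqrt2\,\br{\psi_t,(\cN+1)\psi_t}^{1/2}$, so these contributions are again of the form $(\text{small})\cdot\br{\psi,(\cN+1)\psi}$. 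Finally the $N^{-1}$ term carries no Hartree factor, but Cauchy--Schwarz in $x,y$ together with $\sum_x n_x^2\le\cN^2$ and the same bounds on $u$ and $T$ controls it directly by $\tfrac{C_{r,d}}{N}\br{\psi,(\cN+1)^2\psi}$. Adding the three contributions gives \eqref{MBL2}.

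The step I expect to be most delicate is the bookkeeping near $\partial B_r$: Condition~\ref{C2} only controls $\varphi_t$ strictly inside $B_r$, and no large spatial separation is available there, so one must rely entirely on the discreteness bound $\abs{x-z}\ge1$; and in each of the many terms of \eqref{412} one must verify that the decay of $T$ away from $B_r$ genuinely outpaces the exponential-in-$\abs z$ growth produced by the SUDL prefactor $e^{\abs x}$. It is precisely this requirement that forces $\g$, hence $\l$, to be large and fixes the dependence $\l_*=\l_*(M,\eps,\rho)$.
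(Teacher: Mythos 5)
Your proposal is correct, and it follows the paper's overall skeleton: apply Lemma~\ref{lemMain} with the weight $T(z)=\sum_{\abs w\le r}e^{2\abs w-\g\abs{z-w}}$, bound the three groups of terms (prefactors $1$, $N^{-1/2}$, $N^{-1}$) by Cauchy--Schwarz, use \ref{C2} on the condensate factors and the SUDL decay of $u$ elsewhere, and absorb extra powers of $\cN$ via $\cN\le N$ on $\Fl$. The difference lies in how the smallness is extracted. The paper isolates the quantity $\Phi(z,w)$ and proves Lemma~\ref{lem44} by a two-scale geometric splitting ($\abs z\lessgtr r+\rho/3$ and $\abs{y-z}\lessgtr\rho/3$), so every ``far'' contribution decays like $e^{-c\g\rho}$; consequently the largeness requirement on $\g$ in \eqref{477} is independent of $\rho$, at the price of applying \ref{C2} on the inflated ball $B_{r+2\rho/3}$, which is exactly what produces the $\tilde\eps=3\eps$ bookkeeping. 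You instead split both $z$ and the $\varphi_t$-carrying variable at $\partial B_r$: inside, \ref{C2} applies directly at radius $r$, with the exponent $M'$ free and the window exactly $t\le\rho/\eps$, so no $\tilde\eps$ gymnastics is needed; outside, you rely on the lattice bound $\abs{x-z}\ge1$, which yields only factors of size $e^{-(\g-1)}$ or $e^{-\g}$ with no gain in $\rho$. This is simpler, but to beat the target $e^{-M\rho/\tilde\eps}$ it forces $\g(\l)\gtrsim 1+M\rho/\eps$, i.e.\ a $\rho$-dependent largeness condition on $\g$, whereas the paper only needs $\g\ge 96+12M/\tilde\eps$. Both are admissible for the lemma as stated, since $\l_*$ is allowed to depend on $(M,\eps,\rho)$, but the paper's route is quantitatively sharper and is what supports Remark~\ref{remSize} (effective velocity $\eps\sim 1/\log\l$ rather than $\sim\rho/\log\l$). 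Two harmless points: the hypothesis should read $\varphi_0(x)=0$ for $\abs x\le R$ (you interpreted it the intended way), and your bound $\abs{u_{x,z}(t)}\le C_1e^{\abs x}e^{-\g\abs{x-z}}$ is the literal consequence of \eqref{1partDL} and \eqref{def:u}, which serves the argument just as well as the form \eqref{uEst} used in the paper.
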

	\begin{proof}
		Take 	$\psi\in\Fl$ with $\norm{\psi}=1$, and write  $\psi_t := e^{-itH_0} \psi $.
		As the starting point, we apply commutator estimate \eqref{412} with the choice
		$$T(z)=\sum_{\abs{w}\le r} e^{2\abs{w}-\g\abs{z-w}} $$
		to obtain
		\begin{align}
			\label{415}
			&\abs{\wndel{\big[H_{\rm int} (t), \sum_{z }\sum_{\abs{w}\le r} e^{2\abs{w}-\g\abs{z-w}} n_z \big]}} \notag\\\le& C \sum_z\sum_{\abs{w}\le r}e^{-\frac{\g}{2}\abs{z-w}} ( G (z,w) + F(z,w)+ H(z,w)),
		\end{align}
		where 

		\begin{align}
			\label{GzDef}
			G(z,w):=& e^{2\abs{w}-\frac{\g}{2}\abs{z-w}} 	\sum_{x,y}\; \vert  u_{x,z} (t) \vert \; \vert u_{y,z} (t) \vert \vert \varphi_t (y) \vert^2  \|n_x^{1/2}  e^{-i t H_0}\psi \|\; \|(n_y + 1)^{1/2} e^{-i t H_0}\psi \| ,\\
			F(z,w):=& \frac{e^{2\abs{w}-\frac{\g}{2}\abs{z-w}} }{\sqrt{N}}	\sum_{x,y}\; \vert  u_{x,z} (t) \vert \; \vert u_{y,z} (t) \vert \notag\\
			&\times\del{\vert \varphi_t (x) \vert^3  \| \mathcal{N} \psi_t \|    \; \|n_y^{1/2} \psi_t \| + \vert \varphi_t (x) \vert \; \|n_y^{1/2} \psi_t \| \; \|(n_x+1) \psi_t \|  }  ,\label{Fdef}\\
			H(z,w):=&\frac{e^{2\abs{w}-\frac{\g}{2}\abs{z-w}} }N \sum_{x,y} \vert  u_{x,z} (t) \vert \; \vert u_{y,z} (t) \vert   \|(n_x(n_x+1))^{1/2} \psi_t \|\; \| n_y^{1/2} ( \mathcal{N}+1)^{1/2} \psi_t \|.\label{HzDef}
		\end{align}

		In what follows, we derive uniform bounds for $z\in\Zb^d,\,\abs{w}\le r$ on the contributions of $G,\,F$ and $H$ separately.

		\textbf{Bound on $G(z,w)$.}  
		
		Using the relation $\norm{(n_x+a)^{1/2}u}^2=\br{u,(n_x+a)u}$ for $a\ge0$ and Cauchy-Schwarz inequality, we find
		
		\begin{align}
			\label{418}
			G(z,w)	
			\le&   G_1(z,w)^{1/2}G_2(z,w)^{1/2}, \\
			G_1(z,w):=& e^{2\abs{w}-\frac{\g}{2}\abs{z-w}} {\sum_{x,y}\; \vert  u_{x,z} (t) \vert \; \vert u_{y,z} (t) \vert \abs{\varphi_t(y)}^2\wtdel{n_x}} ,
			\\
			G_2(z,w):=&e^{2\abs{w}-\frac{\g}{2}\abs{z-w}} {\sum_{x,y}\; \vert  u_{x,z} (t) \vert \; \vert u_{y,z} (t) \vert \abs{\varphi_t(y)}^2\wtdel{(n_y+1)}}  .
		\end{align}
		
		We start with bounding $G_2$. The bound for $G_1$ is simpler and is obtained similarly.

		To begin with, by  Fubini's theorem, we have
		\begin{align}
			\label{4200}
			G_2(z,w)
			=&e^{2\abs{w}-\frac{\g}{2}\abs{z-w}}  \del{\sum_x  \abs{u_{x,z} (t)} }\del{\sum_y \abs{u_{y,z} (t)}\abs{\varphi_t(y)}^2\wtdel{ (n_y+1) } }.
		\end{align}
		Since $\wtdel{(n_y+1)}\le\wtdel{(\cN+1)}$ for every $y$ and $[H_0,\cN]=0$, the second term is generously bounded as
		\begin{align}
			\label{420}
			&\sum_y \abs{u_{y,z} (t)}\abs{\varphi_t(y)}^2\wtdel{ (n_y+1) }\notag\\
			\le& \wtdel{(\cN+1)}	\sum_y \abs{u_{y,z} (t)}\abs{\varphi_t(y)}^2\notag\\
			=			 & \wndel{(\cN+1)}	\sum_y \abs{u_{y,z} (t)}\abs{\varphi_t(y)}^2.
		\end{align}
		Thus \eqref{4200} becomes
		\begin{align}
			\label{4201}
			G_2(z,w)\le& \wndel{(\cN+1)}\Phi(z,w),\\
			\label{PhiDef}\Phi(z,w):=&e^{2\abs{w}-\frac{\g}{2}\abs{z-w}} \del{\sum_x  \abs{u_{x,z} (t)} }\del{\sum_y \abs{u_{y,z} (t)}\abs{\varphi_t(y)} }.
		\end{align}
		To get \eqref{PhiDef}, which will be convenient for future reference, we have used that $\norm{\varphi_t}_{\ell^\infty}\le \norm{\varphi_0}_{\ell^2}\equiv1$. 
		
		We now claim the following:
		\begin{lemma}\label{lem44}Let the assumptions of \lemref{lem42} hold.
			Let $M,\,\eps>0$. Then there exists $\l_*=\l_*(M,\eps)>0$ such that for every $\l\ge\l_*$,  
			\begin{align}
				\label{PhiEst}
				\sup_{z\in\Zb^d,\abs{w}\le r}	\Phi(z,w) \le C_{d} e^{-M\rho/  \eps},\quad 0\le t \le \rho/\eps.
			\end{align}
		\end{lemma}
		The proof of this lemma is rather technical and is postponed to \secref{secPfLem44}. 
		Below we show how to use \eqref{PhiEst} to derive bounds for $G$ and $F$.

		Applying \eqref{PhiEst} to the bound \eqref{4201}, we find that 
		\begin{align}
			\label{G2est}
			G_2(z,w)\le C_{d}e^{-\frac{M}{\tilde \eps}\rho} \wndel{(\cN+1)}.
		\end{align}
		%	This completes the estimate for $G_2$. 
		To bound $G_1$, we use, in place of \eqref{4200}, that
		\begin{align}
			\label{433}
			G_1(z,w)=&e^{2\abs{w}-\frac{\g}{2}\abs{z-w}} \del{\sum_x \abs{u_{x,z} (t)} \wtdel{n_x } }\del{\sum_y  \abs{u_{y,z} (t)} \abs{\varphi_t(y)}^2}.
		\end{align}
		A straightforward adaption of the estimates for $G_2$ above shows that 
		\begin{align}
			\label{4202}
			G_1(z,w)\le \wndel{\cN}\Phi(z,w).
		\end{align}
		This, together with \lemref{lem44}, shows that  for $\l$ sufficiently large,
		\begin{align}
			\label{G1est}
			G_1(z,w)\le C_{d}e^{-\frac{M}{\tilde \eps}\rho} \wndel{\cN}.
		\end{align}
		
		Finally, to conclude the bound on $G(z,w)$, we combine
		\eqref{G1est}, \eqref{G2est} in \eqref{418} to obtain
		\begin{align}
			\label{mainEst0}
			\sup_{z\in\Zb^d,\abs{w}\le r}G(z,w)\le C_{d}e^{-\frac{M}{\tilde \eps}\rho} \wndel{(\cN+1)}. 
		\end{align}
		%	Using  H\"older's inequality, we conclude that
		%	\begin{align}
			%		\sum_z \sum_{\abs{w}\le r} G(z,w) \le& C_{d}\del{\sum_z e^{-\frac{\g}{2}\abs{z} }}\sup_{z\in\Zb^d,\abs{w}\le r} G(z,w) \notag\\
			%		\le& C_{d} e^{-\frac{M}{\tilde \eps}\rho}\wndel{(\cN+1)}. \label{mainEst0}
			%	\end{align}
		
		\textbf{Bound on $F(z,w)$.}  
		Recall 
		\begin{align}
			\label{}
			F(z,w):=& \frac{e^{2\abs{w}-\frac{\g}{2}\abs{z-w}} }{\sqrt{N}}	\sum_{x,y}\; \vert  u_{x,z} (t) \vert \; \vert u_{y,z} (t) \vert \notag\\
			&\times\del{\vert \varphi_t (x) \vert^3  \| \mathcal{N} \psi_t \|    \; \|n_y^{1/2} \psi_t \| + \vert \varphi_t (x) \vert \; \|n_y^{1/2} \psi_t \| \; \|(n_x+1) \psi_t \|  }  ,\notag
		\end{align}
		$F$ has similar structure as $G$ and is controlled similarly. Indeed, by the Cauchy-Schwarz inequality and the fact that $\sup_x(n_x+1)^2\le (\cN+1)^2$, similar to \eqref{418} we have
		\begin{align}
			\label{Fest}
			F(z,w)\le&  2 F_1 (z,w)^{1/2}F_2(z,w)^{1/2} , \\
			F_1 (z,w):=&	  e^{2\abs{w}-\frac{\g}{2}\abs{z-w}}  {\sum_{x,y}\; \vert  u_{x,z} (t) \vert \; \vert u_{y,z} (t) \vert \abs{\varphi_t(x)} \frac{\wtdel{(\cN+1)^2}}{N}} , \notag\\
			F_2 (z,w):=&	 e^{2\abs{w}-\frac{\g}{2}\abs{z-w}}  {\sum_{x,y}\; \vert  u_{x,z} (t) \vert \; \vert u_{y,z} (t) \vert \abs{\varphi_t(x)}  {\wtdel{n_y}}  }.\notag
		\end{align}
		Similar to the bound for $G_1$ and $G_2$ (see \eqref{4202} and\eqref{4201}), we have 
		$$
		F_1(z,w)\le \frac{\wndel{(\cN+1)^2}}{N}\Phi(z,w),\quad F_2(z,w)\le \wndel{\cN}\Phi(z,w).
		$$
		These, together with \lemref{lem44},    yields
		\begin{align}
			\label{mainEst1}
			\sup_{z\in\Zb^d,\abs{w}\le r}F(z,w)\le C_{d}e^{-\frac{M}{\tilde \eps}\rho} \frac{\wndel{(\cN+1)^2}^{3/4}}{\sqrt{N}}. 
		\end{align}

		\textbf{Bound on $H(z,w)$.}  
		Recall
		\begin{align}
			\label{}
			H(z,w):=&\frac{e^{2\abs{w}-\frac{\g}{2}\abs{z-w}} }N \sum_{x,y} \vert  u_{x,z} (t) \vert \; \vert u_{y,z} (t) \vert   \|(n_x(n_x+1))^{1/2} \psi_t \|\; \| n_y^{1/2} ( \mathcal{N}+1)^{1/2} \psi_t \|.\notag
		\end{align}
		Compared with $G$ and $F$, the key difference is that $H(z,w)$ does not carry any factor of $\varphi_t(x)$ but instead an overall factor of $1/N$. Thus we use the trivial bounds $ e^{-\frac{\g}{2}\abs{z-w}},\,\abs{u_{x,z}(t)},\,\abs{u_{y,z}(t)}\le1$ and H\"older's inequality to obtain 
		\begin{align}
			\label{mainEst2}
			\sup_{z\in\Zb^d,\abs{w}\le r}	H(z,w)\le & \frac{e^{2r}}{N}\del{\sum_{x}\wtdel{n_x(n_x+1)}}^{1/2}\del{\sum_{y}\wtdel{n_y(\cN+1)}}^{1/2} \notag\\
			\le&  \frac{e^{2r}}{N}\wndel{(\cN+1)^2}.
		\end{align}

		\textbf{Completing the proof.} Using \eqref{mainEst0}, \eqref{mainEst1}, and \eqref{mainEst2} to bound $G,\,F,\,H$ respectively in the r.h.s.~of \eqref{415}, we arrive at	\begin{align}
			\label{}
			& \abs{\wndel{\big[H_{\rm int} (t), \sum_{z }\sum_{\abs{w}\le r} e^{2\abs{w}-\g\abs{z-w}} n_z \big]}}\\
			\le& C \sum_z\sum_{\abs{w}\le r}e^{-\frac{\g}{2}\abs{z-w}}\notag\\
			&\times \del{\sup_{z\in\Zb^d,\abs{w}\le r}	G(z,w)+\sup_{z\in\Zb^d,\abs{w}\le r}	F(z,w)+\sup_{z\in\Zb^d,\abs{w}\le r}	H(z,w)}\notag\\
			\le &   C_{d}\del{ e^{-\frac{M}{\tilde \eps}\rho}\wndel{(\cN+1)}+ e^{-\frac{M}{\tilde \eps}\rho} \frac{\wndel{(\cN+1)^2}^{3/4}}{\sqrt{N}}}+\frac{e^{2r}}{N}\wndel{(\cN+1)^2}.
		\end{align}
		Since $\eps$, $\psi$ are arbitrary, and $\cN\le N$ on $\Fl$, the desired result follows from here.

	\end{proof}

	\subsection{Completing the proof of \thmref{thm1}}\label{secPfThm1}
	Within this subsection, we let $\xi_0=\cU_{N,0}\psi_{N,0}$ stand for the initial fluctuation  and write, for any bounded operator on $\Fl$,
	$$\br{A}_t = \br{\xi_0 ,\tau_t(A)\xi_0},$$   
	where, recall, $\tau_t(\cdot)$ stands for the fluctuation dynamics \eqref{31}.
	Compared with definition  \eqref{def:Ast}, we have
	\begin{align}
		\label{rel1}
		\br{\cN_X}_t\equiv \br{\cN_X}_{(t;0)},\quad X\subset\Zb^d.
	\end{align}

	First, by relation \eqref{31}, we have
	\begin{align}
		\label{51}
		\br{\cN_{B_r}}_t = \br{\xi_0, \tau_t^\Int \circ \tau_t^{(0)}(\cN_{B_r})\xi_0}.
	\end{align}
	Using \lemref{lem31} to bound the free evolution $\tau_t^{(0)}(\cN_{B_r})$, we find 
	\begin{align}
		\label{52}
		\br{\xi_0, \tau_t^\Int \circ \tau_t^{(0)}(\cN_{B_r})\xi_0}\le C_d \xzdel{\tau_t^\Int\del{\sum_{\abs{w}\le r} {\sum_{z\in\Zb^d}  e^{2\abs{w}-\g\abs{z-w}}{n_z}}}}.
	\end{align}
	
	Next, we claim that the interaction dynamics $\tau_t^\Int$ defined in \eqref{tauIntDef} satisfies the Heisenberg equation
	\begin{align}
		\label{HeisE}
		\di_t\tau_t^\Int(A) = \tau_t^\Int(i[H_\Int(t),A]),
	\end{align} 
	where, recall, $H_\Int(t)$ is the interaction Hamiltonian given by \eqref{HintDef}. Indeed, using relation \eqref{Weq}, a direct calculation shows that
	$\di_t(e^{itH_0}\W_N(t;0))=e^{itH_0}(iH_0-i\cL_N(t))\W_N(t)=-iH_\Int(t) e^{itH_0}\W_N(t;0)$, and similarly that $\di_t(\W_N(0;t)e^{-itH_0})=i\W_N(0;t) e^{-itH_0}H_\Int(t)$. Consequently, eq.~\eqref{HeisE} follows.

	Using   the Heisenberg equation \eqref{HeisE} and the fundamental theorem of calculus, for every $z\in\Zb^d$, we have the identity \begin{align}
		\label{}
		\xzdel{\tau_t^\Int(n_z)} - \xzdel{n_z} =   \int _0^t \xzdel{\tau_s ^{\Int}(i[H_\Int(t),n_z])}\,ds. 
	\end{align}
	This, together with \lemref{lem42}, the fact that $[H_0,\cN]=0$, and relation \eqref{51},   implies
	\begin{align}
		\label{}
		&\xzdel{\tau_t^\Int\del{ \sum_{\abs{w}\le r} {\sum_{z\in\Zb^d}  e^{2\abs{w}-\g\abs{z-w}}{n_z}}}} -\xzdel{\del{ \sum_{\abs{w}\le r} {\sum_{z\in\Zb^d}  e^{2\abs{w}-\g\abs{z-w}}{n_z}}}}
		\label{55}\\
		=&\int_0^t \xzdel{\tau_s^\Int \del{i {\big[H_{\rm int} (t), \sum_{z }\sum_{\abs{w}\le r} e^{2\abs{w}-\g\abs{z-w}} n_z \big]} }}\,ds\notag\\
		\le& C_{r,d}\int_0^t \xzdel{\tau_s^\Int \del{e^{-M\rho/\eps}(\cN+1)+\frac{(\cN+1)^2}{N}} }\,ds\notag\\
		=& C_{r,d}\int_0^t \left\langle{  {e^{-M\rho/\eps}(\cN+1)+\frac{(\cN+1)^2}{N}} }\right\rangle_s\,ds.\notag
	\end{align}
	
	By the   global fluctuation bound from \lemref{lem23}, we have for some $\tilde K=\tilde K(U)>0$ that $$\br{(\cN+1)^j }_t\le  e^{\tilde Kt}\br{(\cN+1)^j }_0,\qquad j=1,2.$$
	Furthermore, by the assumption \eqref{psi0Cond}  on the initial state, we have $\br{\psi_0,\cN_{B_R}\psi_0}=0$. These, together with  H\"older's inequality, show that the integral in line \eqref{55} is bounded as 
	\begin{align}
		\label{56''}
		&\int_0^t \left\langle{  {e^{-M\rho/\eps}(\cN+1)+\frac{(\cN+1)^2}{N}} }\right\rangle_s\,ds
		\le Ct \del{e^{\tilde K t-M\rho/\eps }+\frac{e^{\tilde K t}}{N}}\br{(\cN_{B_R^\cp}+1)^2 }_0.
	\end{align}
	Similarly,  again due to  assumption \eqref{psi0Cond}, the substracted term in the l.h.s.~of \eqref{55} is bounded as
	\begin{align}
		\label{56'''}
		&\xzdel{\del{ \sum_{\abs{w}\le r} {\sum_{z\in\Zb^d}  e^{2\abs{w}-\g\abs{z-w}}{n_z}}}}\notag\\
		=&\xzdel{\del{ \sum_{\abs{w}\le r} {\sum_{\abs{z}\ge R}  e^{2\abs{w}-\g\abs{z-w}}{n_z}}}}\notag\\\le& C_{r,d}e^{-\g\rho}\br{\cN_{B_R^\cp}}_0.
	\end{align}
	
	Plugging \eqref{56''}---\eqref{56'''}  back to \eqref{55} yields
	\begin{align}
		\label{}
		&\xzdel{\tau_t^\Int\del{ \sum_{\abs{w}\le r} {\sum_{z\in\Zb^d}  e^{2\abs{w}-\g\abs{z-w}}{n_z}}}}\notag\\&\le C_{r,d}\sbr{t\del{e^{\tilde K t-M\rho/\eps }+\frac{e^{\tilde K t}}{N}}\br{(\cN_{B_R^\cp}+1)^2 }_0+{e^{-\g\rho}\br{\cN_{B_R^\cp}}_0}}.
	\end{align}
	This, together with \eqref{52},   the fact that $te^{\tilde K t}\le e^{ (\tilde K +1)t}$, and the choice %
	\begin{align}
		\label{Mcond}
		M:= K:= \tilde K+1,
	\end{align}  yields the desired estimate \eqref{slowPE}. This completes the proof of \thmref{thm1}.

	\subsection{Proof of \lemref{lemMain}}\label{sec44}
	
	We write 
	\begin{align}
		i \big[H_{\rm int} (t), \sum_{z \in {\Zb^d}} T(z) n_z \big] =& e^{itH_0} \big[\mathcal{L}_N(t) - H_0, \sum_{z \in {\Zb^d}} T(z) e^{-it H_0 }n_z e^{it H_0 }\big] e^{-i t H_0}  \label{eq:comm-Hint-1}
	\end{align}
	and observe, using the canonical commutation relations $[a(f), a^*(g)] = \langle f, g \rangle$, that
	\begin{align}
		e^{-it H_0 }n_z e^{it H_0 } = e^{-it H_0 } a_z^* a_z e^{it H_0 } = e^{-it H_0 } a^*( \delta_z) a( \delta_z) e^{it H_0 } = a^*( e^{-ith} \delta_z) a( e^{-ith} \delta_z) 
	\end{align}
	where we introduced the notation $\delta_z = \delta ( \cdot - z)$ that we can further rewrite as 
	\begin{align}
		e^{-it H_0 }n_z e^{it H_0 } 
		=  \sum_{x,y \in {\Zb^d}}   \langle \delta_z, e^{-ith} \delta_x \rangle \; \overline{\langle \delta_z, e^{-ith} \delta_y \rangle} a_x^*a_y = \sum_{x,y \in {\Zb^d}}  u_{x,z} (t) \overline{u_{y,z} (t)} a_x^*a_y 
	\end{align}
	with the definition $u_{x,z} (t) = \langle \delta_z, e^{-ith} \delta_x \rangle$. Plugging this back into \eqref{eq:comm-Hint-1}, we are left with computing 
	\begin{align}
		i \big[H_{\rm int} (t),  & \sum_{z \in {\Zb^d}} T(z) n_z \big] \notag \\
		=& \sum_{x,y,z \in {\Zb^d}} T(z)  u_{x,z} (t)  \; \overline{u_{y,z} (t)} e^{itH_0} \big[\mathcal{L}_N(t) - H_0, a_x^*a_y\big] e^{-i t H_0}  
	\end{align} 
	and in the following we estimate the commutator on the r.h.s. For this, we recall the decomposition of the generator $\mathcal{L}_N(t) = \mathbb{H} + \sum_{j=1}^3 \mathcal{R}_{N,t}^{(j)}$ in \eqref{211} resp. \eqref{213} and thus arrive at 
	\begin{align}
		i \big[H_{\rm int} (t),  & \sum_{z \in {\Zb^d}} T(z) n_z \big] \notag  \notag \\
		=& \sum_{x,y,z \in {\Zb^d}} T(z)   u_{x,z} (t)  \; \overline{u_{y,z} (t)}  \; e^{itH_0} \big[\mathbb{H} - H_0, a_x^*a_y\big] e^{-i t H_0} \notag \\
		&+ \sum_{x,y,z \in {\Zb^d}} T(z)   u_{x,z} (t)  \overline{u_{y,z} (t)}  \; e^{itH_0} \big[\sum_{i=1}^3 \mathcal{R}_{N,t}^{(j)}, a_x^*a_y\big] e^{-i t H_0} \label{eq:comm-Hint-1'}
	\end{align} 
	and we are left with estimating the two terms of the r.h.s. of \eqref{eq:comm-Hint-1'}. We start with the first one, for which we compute 
	\begin{align}
		\big[\mathbb{H}   - H_0, a_x^*a_y\big] =& \frac{1}{2} \sum_{w \in {\Zb^d} } \bigg[ \widetilde{K}_{2,t} (w) b_w^*b_w^*+ \overline{\widetilde{K}}_{2,t} (w) b_wb_w, a_x^*a_y ] \notag \\
		=& \widetilde{K}_{2,t} (y)  b_x^*b_y^*  -  \overline{\widetilde{K}}_{2,t} (x) b_x b_y 
	\end{align}
	yielding with the notation $\psi_t := e^{-iH_0 t} \psi $ for any $\psi \in \mathcal{F}_{\perp \varphi}^{\leq N}$ to 
	\begin{align}
		\vert \langle  &\psi, 	\sum_{x,y,z \in {\Zb^d}}   T(z)   u_{x,z} (t)  \; \overline{u_{y,z} (t)} e^{itH_0} \big[\mathbb{H} - H_0, a_x^*a_y\big] e^{-i t H_0} \psi \rangle \vert \notag \\
		& \leq  \sum_{x,y,z \in {\Zb^d}} \vert T(z) \vert \; \vert  u_{x,z} (t) \vert \; \vert u_{y,z} (t) \vert \bigg( \|\widetilde{K}_{2,t} (y) \|\; \|b_x   \psi_t \|\; \|b_y^* \psi_t \| + \|\widetilde{K}_{2,t} (x) \|\; \|b_y  \psi_t \|\; \|b_x^* \psi_t \| \bigg) \notag \\
		& \leq  \sum_{x,y,z \in {\Zb^d}} \vert T(z) \vert \; \vert  u_{x,z} (t) \vert \; \vert u_{y,z} (t) \vert \bigg( \vert \varphi_t (y) \vert^2  \|n_x^{1/2}  \psi_t \|\; \|(n_y + 1)^{1/2} \psi_t \| \notag \\
		&\hspace{6.5cm} + \vert \varphi_t (x) \vert^2\; \|n_y^{1/2}  \psi_t \|\; \| (n_x + 1)^{1/2} \psi_t \| \bigg) \; \label{eq:estimate-R0}
	\end{align}
	that is of the desired form.

	We continue with the second term of the r.h.s. of \eqref{eq:comm-Hint-1'} for which we consider the single contributions of the remainder term separately. For the first we get with the notation $G_t= q_t \big[ \vert \varphi_t \vert^2 \varphi_t + \widetilde{K}_{1,t} \big] q_t $ and $g_t = q_t \vert \varphi_t \vert^2 \varphi_t $ 
	\begin{align}
		\big[ \mathcal{R}_{N,t}^{(1)}, a_x^*a_y] = \big( G_t(x) a_x^* a_y - G_t(y) a_x^*a_y \big) \frac{1-\mathcal{N}_+}{N} -  \frac{\mathcal{N}_+}{\sqrt{N}} g_t(x) b_y  + g_t(y) b_x^* \frac{\mathcal{N}_+}{\sqrt{N}}  \; . 
	\end{align}
	Since  $\vert G_t (x) \vert \leq  C \vert \varphi_t (x) \vert^2 $ we get on the one hand for any $\psi \in \mathcal{F}_{\perp \varphi}^{\leq N}$ 
	\begin{align}
		\vert \langle & \psi_t, \sum_{x,y,z \in {\Zb^d}} T(z) u_{x,z} (t) \overline{u_{y,z} (t)} \big( G_t (x) - G_t (y) \big) \frac{1-\mathcal{N}}{N} a_x^*a_y \psi_t \rangle \notag \\
		\leq& C  \sum_{x,y,z \in {\Zb^d} } \vert T(z) \vert \; \vert u_{x,z} (t) \vert \; \vert u_{y,z} (t) \vert \; \big( \vert \varphi_t (x)\vert^2 + \vert \varphi_t (y) \vert^2 \big) \| a_x \psi_t \| \; \|a_y \psi_t \| 	\notag \\
		\leq& C \sum_{x,y,z \in {\Zb^d}} \vert T(z) \vert \; \vert u_{x,z} (t) \vert \; \vert u_{y,z} (t) \vert \;  \big( \vert \varphi_t (x) \vert^2 + \vert \varphi_t (y) \vert^2 \big) \| n_x^{1/2} \psi_t \| \; \|n_y^{1/2} \psi_t \| \; . \label{eq:estimate-R1,1}
	\end{align}
	On the other hand, since $\vert g_t (x) \vert \leq C \vert \varphi_t ( x) \vert^3 $, we find 
	\begin{align}
		\vert 	\langle &  \psi_t, \sum_{x,y,z \in {\Zb^d}}   T(z)   u_{x,z} (t)  \; \overline{u_{y,z} (t)} \bigg(  \frac{\mathcal{N}}{\sqrt{N}} g_t (x) b_y + g_t (y) b_x^* \frac{\mathcal{N}}{\sqrt{N}} \bigg) \psi_t \rangle \vert \notag \\
		& \leq \frac{C}{\sqrt{N}}   \sum_{x,y,z \in {\Zb^d}}   \vert T(z) \vert    \; \vert u_{x,z} (t) \vert \; \vert u_{y,z} (t) \vert \; \vert \varphi_t (x) \vert^3 \big\| \mathcal{N} \psi_t \|\; \|b_y \psi_t \| \notag \\
		&\quad + \frac{C}{\sqrt{N}} \sum_{x,y,z \in {\Zb^d}} \vert T(z) \vert \; \vert u_{x,z} (t) \vert \; \vert u_{y,z} (t) \vert \; \vert \varphi_t (y) \vert^3 \|\mathcal{N} \psi_t \| \; \| b_x \psi_t \| \notag \\
		& \leq \frac{C}{\sqrt{N}}  \sum_{x,y,z \in {\Zb^d}} \vert T(z) \vert \; \vert u_{x,z} (t) \vert \; \vert u_{y,z} (t) \vert \; \| \mathcal{N} \psi_t \|  \bigg( \vert \varphi_t (x) \vert^3 \, \| n_y^{1/2} \psi_t \|  +  \vert \varphi_t (y) \vert^3 \; \|n_x^{1/2} \psi_t \| \bigg) \label{eq:estimate-R1,2}.
	\end{align}

	For the second term of the remainder in \eqref{213} we recall that on $\mathcal{F}_{\perp \varphi_t}^{\leq N}$ we have 
	\begin{align}
		\mathcal{R}_{N,t}^{(2)} = \frac{1}{\sqrt{N}} \sum_{w \in {\Zb^d} } \varphi_t (w) a_w^* a_w b_w + {\rm h.c.}
	\end{align}
	and thus 
	\begin{align}
		\big[\mathcal{R}_{N,t}^{(2)}, a_x^*a_y] =& \frac{1}{\sqrt{N}} \sum_{w \in {\Zb^d}} \bigg( \varphi_t (w) \big[ a_w^*a_wb_w, a_x^*a_y] + \overline{\varphi}_t (w) \big[ b_w^*a_w^*a_w, a_x^*a_y]  \bigg)  \notag \\
		=& \frac{1}{\sqrt{N}} \sum_{w \in {\Zb^d}} \varphi_t (w)\bigg(  \big[ a_w^*a_w, a_x^*a_y]b_w + a_w^*a_w\big[b_w , a_x^*a_y]  \bigg) \notag \\
		&+  \frac{1}{\sqrt{N}} \sum_{w \in {\Zb^d}} \overline{\varphi}_t (w)\bigg( b_w^* \big[ a_w^*a_w, a_x^*a_y] + \big[b_w^* , a_x^*a_y]  a_w^*a_w\bigg) \notag \\
		=& \frac{1}{\sqrt{N}} \bigg( \varphi_t (x) \;  a_x^* a_y b_x  -  \varphi_t (y) \; a_x^*a_y b_y+  \varphi_t (x) \; a_x^*a_x b_y  \bigg) \notag \\
		&+  \frac{1}{\sqrt{N}}\bigg( \overline{\varphi}_t (x) \;   b_x^*a_x^*a_y -  \overline{\varphi}_t (y)  \; b_y^* a_x^* a_y - \overline{\varphi}_t (y) \;  b_x^*  a_y^*a_y \bigg) \notag\\
		=& \frac{1}{\sqrt{N}} \bigg( \varphi_t (x) \;  b_x a_x^* a_y - \varphi_t (x) \; b_y  -  \varphi_t (y) \; a_x^*a_y b_y+  \varphi_t (x) \; a_x^*a_x b_y  \bigg) \notag \\
		&+  \frac{1}{\sqrt{N}}\bigg( \overline{\varphi}_t (x) \;   b_x^*a_x^*a_y -  \overline{\varphi}_t (y)  \; a_x^* a_y b_y^* -\overline{\varphi}_t (y)  \, b_x^* - \overline{\varphi}_t (y) \;  b_x^*  a_y^*a_y \bigg).
	\end{align}
	Since $[b_w^*, a_x^*a_y] = - \delta (w-y) b_x^* $, we arrive at 
	\begin{align}
		\big[\mathcal{R}_{N,t}^{(2)}, a_x^*a_y] =& \frac{1}{\sqrt{N}} \sum_{w \in {\Zb^d}} \bigg( \varphi_t (w) \big[ a_w^*a_wb_w, a_x^*a_y] + \overline{\varphi}_t (w) \big[ b_w^*a_w^*a_w, a_x^*a_y]  \bigg)  \notag \\
		=& \frac{1}{\sqrt{N}} \bigg( \varphi_t (x) \;  b_x a_x^* a_y - \varphi_t (x) \; b_y  -  \varphi_t (y) \; a_x^*a_y b_y+  \varphi_t (x) \; a_x^*a_x b_y  \bigg) \notag \\
		&+  \frac{1}{\sqrt{N}}\bigg( \overline{\varphi}_t (x) \;   b_x^*a_x^*a_y -  \overline{\varphi}_t (y)  \; a_x^* a_y b_y^* -\overline{\varphi}_t (y)  \, b_x^* - \overline{\varphi}_t (y) \;  b_x^*  a_y^*a_y \bigg) \label{467}
	\end{align}
	that yields for any $\psi \in \mathcal{F}_{\perp \varphi_t}^{\leq N}$ and recalling the notation $\psi_t = e^{-iH_0 t} \psi $ to 
	\begin{align}
		&   \big\vert   \langle \psi_t, \big[ \mathcal{R}_{N,t}^{(2)}, a_x^*a_y ] \psi_t \rangle \big\vert \notag \\
		& \quad  \leq   \frac{1}{\sqrt{N}} \bigg(  \vert \varphi_t (x) \vert \; \|a_x b_x^* \psi_t \| \; \|a_y  \psi \| + \vert \varphi_t (x) \vert \; \| b_y \psi_t \| \; \| \psi \| + \vert \varphi_t (y) \vert \; \| a_x \psi_t \| \; \| a_yb_y \psi_t \| \notag \\
		& \hspace{2cm} + \vert \varphi_t (x) \vert \; \|a_x^*a_x \psi_t \| \; \|b_y \psi_t \|  + \vert \varphi_t (x) \vert \; \|a_x  b_x \psi_t \|\; \| a_y \psi_t \|  \notag \\
		&\hspace{2cm}+ \vert \varphi_t (y) \vert \; \|a_x \psi_t \| \; \| a_y b_y^* \psi_t \| + \vert \varphi_t (y) \vert \; \|b_x \psi_t \| \; \| \psi_t \| + \vert \varphi_t (y) \vert \| b_x \psi_t \| \; \|a_y^* a_y \psi_t \| \bigg) .
	\end{align}
	To estimate the single terms of the r.h.s. we use that, for example the first term is bounded by 
	\begin{align}
		\|a_xb_x^* \psi_t \|^2 \leq \|n_x^{1/2} b_x^* \psi_t \|^2 = \langle \psi, b_x n_x b_x^* \psi \rangle  \; . 
	\end{align}
	With $n_xb_x^* = b_x^* (n_x +1 )$ resp $b_x n_x = (n_x +1 ) b_x$, 
	we furthermore get 
	\begin{align}
		\|a_xb_x^* \psi_t \|^2 \leq \langle \psi, b_x n_x b_x^* \psi \rangle = \langle \psi_t, (n_x +1)^{1/2} b_x b_x^* (n_x+1)^{1/2} \psi \rangle \leq \| (n_x + 1)^{1/2} \psi_t \|
	\end{align}
	and, we thus arrive at 
	\begin{align}
		&   \big\vert   \langle \psi_t, \big[ \mathcal{R}_{N,t}^{(2)}, a_x^*a_y ] \psi_t \rangle \big\vert \notag \\
		& \leq \frac{C}{\sqrt{N}} \bigg( \vert \varphi_t (x) \vert \; \|n_y^{1/2} \psi_t \| \; \|(n_x+1) \psi_t \| +  \vert \varphi_t (x) \vert \| n_y^{1/2} \psi_t \| \; \| \psi_t \| \notag \\
		& \hspace{2cm} + \vert \varphi_t (y) \vert \; \|n_x^{1/2} \psi_t \| \; \|(n_y +1 ) \psi_t \| +  \vert \varphi_t (y) \vert \| n_x^{1/2} \psi_t \| \; \| \psi_t \|\ \bigg) \; . \label{eq:estimate-R2}
	\end{align}
	
	It remains to control the commutator with the third remainder 
	\begin{align}
		\mathcal{R}_{N,t}^{(3)} = \frac{1}{2N}\sum_{w \in {\Zb^d}} a_w^*a_w^*a_wa_w  \; . 
	\end{align}
	We compute 
	\begin{align}
		\big[\mathcal{R}_{N,t}^{(3)}, a_x^*a_y\big] =& \frac{1}{N} \bigg( a_x^*a_x^*a_x a_y - a_x^*a_y^*a_y a_y \bigg)  \notag \\
		=& \frac{1}{N}  \bigg( (\mathcal{N}+1)^{-1/2} a_x^*a_x^*a_x a_y (\mathcal{N}+1)^{1/2}- (\mathcal{N}+1)^{1/2}a_x^*a_y^*a_y a_y(\mathcal{N}+1)^{-1/2}  \bigg)
	\end{align}
	where for the last equality we used that $\mathcal{N} a_x^*a_y^*a_ya_x  = a_x^*a_y^*a_ya_x \mathcal{N}$. With similar ideas as before, we get 
with the notation $\psi_t = e^{-iH_0 t} \psi$ for any $\psi \in \mathcal{F}_{\perp \varphi_t}^{\leq N}$, bounded  by
\begin{align}
\big\vert \langle \psi_t, \big[ \mathcal{R}_{N,t}^{(3)}, a_x^*a_y] \; \psi_t \rangle \big\vert \leq& \frac{1}{N}  \|a_x^*a_xa_x( \mathcal{N}+1)^{-1/2} \psi_t \|\; \| a_y ( \mathcal{N}+1)^{1/2} \psi_t \| \notag \\
&+ \frac{1}{N}  \|a_y^*a_ya_y( \mathcal{N}+1)^{-1/2} \psi_t \|\; \| a_x (\mathcal{N}+1)^{1/2} \psi_t \|  
\end{align}   
We recall that $a_x^*a_x = n_x $ that leads to 
\begin{align}
\|a_x^*a_xa_x ( \mathcal{N} + 1)^{-1/2} \psi_t \| = \|n_x a_x ( \mathcal{N} + 1)^{-1/2} \psi_t \| = \|a_x (n_x-1) ( \mathcal{N} + 1)^{-1/2} \psi_t \| ,
\end{align}
and since $n_x \leq \mathcal{N}$ we arrive at 
\begin{align}
\|a_x^*a_xa_x ( \mathcal{N} + 1)^{-1/2} \psi_t \| \leq  \|n_x^{1/2} (n_x-1) ( \mathcal{N} + 1)^{-1/2} \psi_t \| \leq \| n_x^{1/2} ( n_x+1)^{1/2} \psi_t \| 
\end{align}
so that we finally get 
\begin{align}
\big\vert \langle \psi_t, \big[ \mathcal{R}_{N,t}^{(3)}, a_x^*a_y] \; \psi_t \rangle \big\vert 
\leq& \frac{1}{N}  \|(n_x(n_x+1))^{1/2} \psi_t \|\; \| n_y^{1/2} ( \mathcal{N}+1)^{1/2} \psi_t \| \notag \\
&+ \frac{1}{N}  \|(n_y(n_y+1))^{1/2} \psi_t \|\; \| n_x^{1/2}(\mathcal{N}+1)^{1/2} \psi_t \| \; . \label{eq:estimate-R3}
\end{align}   
Summing up \eqref{eq:estimate-R0}, \eqref{eq:estimate-R1,1}, \eqref{eq:estimate-R1,2}, \eqref{eq:estimate-R2} and \eqref{eq:estimate-R3} and using that $\sum_{x,y \in {\Zb^d}} \big( g(x,y) + g(y,x) \big) = 2 \sum_{x,y \in {\Zb^d}} g(x,y)$, we get  the desired bound of Lemma \ref{lem42}.

\qed

\subsection{Proof of \lemref{lem44}}\label{secPfLem44}
Recall the definition 
$$\Phi(z,w):= e^{2\abs{w}-\frac{\g}{2}\abs{z-w}} \del{\sum_x  \abs{u_{x,z} (t)} }\del{\sum_y \abs{u_{y,z} (t)}\abs{\varphi_t(y)} }.$$
Owning to \eqref{uEst}, for $\g\ge1$, the $x$-sum is bounded as
\begin{align}
\label{418'}
\sum_x  \abs{u_{x,z} (t)} \le C_1e^{\abs{z}}\sum_x e^{-\g\abs{x-z}} \le C_de^{\abs{z}}. 
\end{align}

To bound the $y$-sum, let $\tilde \eps = 3\eps$ and 	$\tilde M$ to be determined later.
Consider the   decomposition of $\Zb^d$ as shown in  Fig.~\ref{figSplit}:

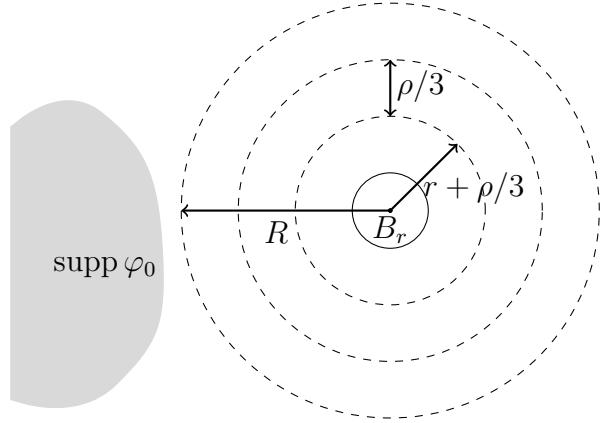
\begin{figure}[H]
\centering
\begin{tikzpicture}[scale=.5]
	% Draw the circle labeled B_r
	\draw (0,0) circle (1) node at (0,-.5) {$B_r$};
	
	\filldraw (0,0) circle (0.05);
	
	\draw[dashed] (0,0) circle (5.5);
	
	\draw[dashed] (0,0) circle (2.5);
	
	\draw[dashed] (0,0) circle (4);

	%			\draw[dashed] (0,0) circle (6);

	%	% Support region (hashed area)
	%	\foreach \i in {0.5, 1, 1.5} {
		%		\draw[thick] (6.5, \i) -- (7.5, \i-1);
		%	}

	% label for rho
	%				\draw[->] (1,0) to node[above] {$\rho$} (3.5,0);
	\draw[thick, ->] (0,0) to (1.414*1.25,1.414*1.25);
	
	\draw[thick, ->] (0,0) to (-5.5,0);
	
	\node [below] at (-3,0) {$R$};
	
	\node at (2.2,.6 ) {$r+\rho/3$};
	
	\draw[thick, <->] (0,2.5) to (0,4);
	
	\node at (0.8,3.3) {$\rho/3$};
	
	%	% Arrowheads pointing at B_r
	%	\draw[<-] (1.2, 0.8) -- (1,0);
	%	\draw[<-] (1.2, -0.8) -- (1,0);
	
	%			% Draw a filled rectangle without a boundary
	%			\fill[gray!88] (6.5,-3) rectangle (9.5,3);
	
	% Draw a filled rectangle without a boundary
	%			\draw[fill=gray!88]  plot[shift={(-8,-2)},xscale=.6,smooth, tension=.7] coordinates {(-3,0.5) (-2.5,2.5) (-.5,3.5) (1.5,3) (3,3.5) (4,2.5) (4,0.5) (2,-2) (0,-1.5) (-2.5,-2) (-3,0.5)};
	
	\filldraw[gray!30] plot[shift={(0,-1)},scale = 2, smooth, tension=1] coordinates {(-5,1.6) (-3.8,1.8) (-3,0) (-3.6,-1.8) (-5,-2)};
	
	% Labels for supports
	%			\node at (8, 1.5) {$\supp\varphi_0$};
	\node at (-7.5,- 1.5) {$\supp\varphi_0$ };
	%			\node at (-5.4,- 2) { fluctuation};
\end{tikzpicture}
\caption{Schematic diagram for the geometric splitting.} 
\label{figSplit}
\end{figure}
We distinguish two cases: 

\underline{Case 1.$\abs{z}> r+ \rho/3$.}

In this case, we use the fact  $\norm{\varphi_t}_{\ell^\infty}\le \norm{\varphi_t}_{\ell^2}\equiv 1$ to obtain
$$\sum_y \abs{u_{y,z} (t)}\abs{\varphi_t(y)} \le C_de^{\abs{z}} .$$
This, together with \eqref{PhiDef}, \eqref{418'}, yields 
\begin{align}
\label{423}
\Phi(z,w)\le& C_de^{2\abs{z}+2\abs{w}-\frac{\g}{2}\abs{z-w}}.
\end{align}
By assumption, we have $R\ge 2r$ and so $\rho\ge r$. Thus for every $\abs{w}\le r$ and $\abs{z}>r+\rho/3\ge 4r/3$, we have $\abs{z-w}\ge \frac14\abs{z}, \frac13\abs{w}.$ Plugging this back to \eqref{423} shows that
\begin{align}
\label{}
\Phi(z,w)\le C_de^{-\frac\g{12}\rho} ,\qquad \abs{z}> r+ \rho/3,\g\ge96.
\end{align}

\underline{Case 2. $\abs{z}\le r+ \rho/3$.}

In this case, we split the sum in \eqref{420} as
\begin{align}	
\sum_y \abs{u_{y,z} (t)}\abs{\varphi_t(y)} 	=&  {\sum_{\abs{y-z}\le \rho/3} \abs{u_{y,z} (t)}\abs{\varphi_t(y)}+\sum_{\abs{y-z}> \rho/3} \abs{u_{y,z} (t)}\abs{\varphi_t(y)}}.\label{425}
\end{align}
For the first sum in the r.h.s., note that $\abs{y}\le r+2\rho/3$. By the localization bound \eqref{NLSDL}, the assumption $\varphi_0(y)=0$ for $\abs{y}\ge R$, and the   bound \eqref{uEst}, there exists $\l_1=\l_1(\tilde M,\,\tilde\eps,\rho)>0$ s.th. for all $\l\ge \l_1$, 
\begin{align}
\label{426}
&\sum_{\abs{y-z}\le \rho/3} \abs{u_{y,z} (t)}\abs{\varphi_t(y)}
\notag\\	
\le&				\del{\sum_{\abs{y-z}\le \rho/3} \abs{u_{y,z} (t)}^2}^{1/2}\del{\sum_{\abs{y-z}\le \rho/3} \abs{\varphi_t(y)}^2}^{1/2}\notag\\
\le&   C_de^{\abs{z}}e^{-\frac{\tilde M}{2\tilde\eps}\rho},\quad  0\le t\le  {\rho}/{\tilde\eps}.
\end{align}
For the second sum in the r.h.s.~of \eqref{425}, 
recall that $\rho\ge r$ by the assumption $R\ge 2r$. Thus we have $\abs{z}\le \frac43 \rho$ and therefore,  by \eqref{uEst},  we have $\abs{u_{y,z}(t)}< C_1 e^{\frac{ 4-\g }{3}\rho }$ for $\abs{y-z}>\rho/3$,
Since, moreover, $\norm{\varphi_t}_{\ell^\infty}\le \norm{\varphi_t}_{\ell^2}\equiv1$, we find
\begin{align}
\label{427}
\sum_{\abs{y-z}> \rho/3} \abs{u_{y,z} (t)}\abs{\varphi_t(y)}\le \sum_{\abs{y-z}> \rho/3} \abs{u_{y,z} (t)}    < C_d e^{\abs{z}-\frac{\g }{3}\rho }.
\end{align}
Combining \eqref{426}--\eqref{427}, \eqref{PhiDef}, and \eqref{418'}, and using that $\abs{w}\le r\le \rho$ and $\abs{z}\le \frac43\rho$, we arrive at    
\begin{align}
\label{}
\Phi(z,w)\le& C_de^{2\abs{w} +2\abs{z}}(e^{-\frac{\tilde M}{2\tilde\eps}\rho}+e^{ -\frac{\g }{3}\rho })\notag\\
\le&C_d (e^{-(\frac{\tilde M}{2\tilde\eps}-\frac{14}3)\rho}+e^{-(\frac{\g }{3}-\frac{14}3)\rho }),  \qquad 0\le t\le \rho/\tilde\eps,\abs{z}\le r+ \rho/3.\notag
\end{align}
Choosing  $\tilde M = \frac{28}{3}\tilde\eps+2M$ yields
\begin{align}
\label{}
\Phi(z,w)\le C_d(e^{-\frac{M}{\tilde \eps}\rho}+e^{-\frac\g{12}\rho}), \qquad 0\le t\le \rho/\tilde\eps,\abs{z}\le r+ \rho/3,\g\ge \frac{56}3.
\end{align}			
This completes the bound for the $y$-sum in \eqref{PhiDef}.

Combining   cases 1--2 above, we conclude that for any $z\in\Zb^d$, 
\begin{align}
\label{318}
\Phi(z,w) \le C_{d} (e^{-\frac{M}{\tilde \eps}\rho} + e^{-\frac{\g}{12}\rho})\wndel{ (\cN+1) },\qquad 0\le t\le \rho/\tilde\eps, \g\ge96 . 
\end{align}
Choose now $\l_2\ge\l_1$ such that for $\l\ge\l_2$, $\g=\g(\l)$ in \eqref{1partDL} satisfies 
\begin{align}
\label{477}
\g \ge 96+12M/\tilde \eps.
\end{align}
Indeed, this is possible by assumption \ref{C1}, which requires 	 $\lim_{\l\to\infty}\g(\l)=\infty$.
Then it follows from \eqref{318} that $\Phi(z,w)\le  C_{d}e^{-\frac{M}{\tilde \eps}\rho}$, as desired. Note that \eqref{477} determines the threshold for $\l_*$ as a function of $\eps$ in the main results. 

\section{Proof of \thmref{thm2}}\label{secPfThm2}
At this stage, we can follow the line of arguments of \cite[Theorem 1 and Lemma 6.1]{LRZa} (see  \cite[Section 6]{LRZa}) and arrive at, with $\br{\cdot}_{t;s}$ defined in \eqref{def:Ast},
\begin{align}\label{115}
\big\vert &  \Tr \big( O ( \gamma_{\psi_{N,t}} - N \vert \varphi_t \rangle \langle \varphi_t  \vert ) \big) \big\vert \notag \\
&\leq C \|O \|_{\rm op} \bigg( \langle \mathcal{N}_{B_r}^+ \rangle_{(t;0)}+\| \varphi_t \|^2_{\ell^2( B_r)} \langle \mathcal{N}^+ \rangle_{(t;0)}\\
&\qquad\qquad\qquad + \| \varphi_t \|_{\ell^2( B_r)} \int_0^t ds \; e^{\vert U \vert \int_s^t dr \| \varphi_r \|_{\ell^\infty ( \mathbb{Z}^d)} ^2} \langle \mathcal{N}^+ + 1 \rangle_{(t;s)} \bigg)  \; . 
\end{align}
To estimate the first term of the r.h.s. of \eqref{115}, we use that by definition \eqref{rel1} and the slow propagation bound \eqref{slowPE} in Theorem \ref{thm1}. More precisely, by Theorem \ref{thm1} there exists $\lambda_0 = \lambda_0 (\eps, \rho, d)>0$, $C = C(r,d)>0 $ and $K = K(d, U)>0$ such that for all $\lambda \geq \lambda_0$ we have 
\begin{align}
\langle \mathcal{N}_{B_r}^+ \rangle_{(t;0)} \leq e^{K (t-\rho/\eps )} + \frac{e^{K t}}{N} \; .
\end{align}
Compare \eqref{slowPEsim}. 
For the second term of the r.h.s. of \eqref{115}, we find, since $\| \varphi_t \|_{\ell^\infty (B_r )} \leq \|\varphi_t \|_{\ell^2( B_r)}$, from the localization condition \eqref{NLSDL},  with $M=C(\abs{U} +K)$ for a generic constant $C>0$, that 
\begin{align}
\sup_{0\leq t \leq \rho/\eps} \| \varphi_t \|_{\ell^\infty (B_r)} \leq e^{-M \rho/\eps} \; 
\end{align}
so that we arrive with the global estimate on the number of excitations in Lemma \ref{lem23} at 
\begin{align}
\| \varphi_t \|^2_{\ell^2( B_r)} \langle \mathcal{N}^+ \rangle_{(t;0)} \leq e^{ C \vert U \vert t - M \rho / \eps} 
\end{align}
for a generic constant $C>0$. The remaining third term of the r.h.s. of  r.h.s. of \eqref{115} can be estimated similarly and we finally arrive at the desired estimate \eqref{114}.

\section{Proofs of Propositions \ref{Ex1} and \ref{Ex2}}\label{secPfEx1}

\begin{proof}[Proof of \propref{Ex1}]
We first verify the SUDL condition \ref{C1} with the quasi-periodic potential
$$v_{\theta,\al}(x)= \cos(2\pi (\theta+x\cdot \al)),\quad (\theta,\al)\in [0,1]\times[0,1]^d.$$  
Let $d\mu$ be a Borel probability measure on $[0,1]$, and let  $h_{\l,\theta,\al} = -\Lap + \l v_{\theta,\al}$. The following exponential dynamical localization result is proved in \cite{GYZa}: 
\begin{theorem}[\cite{GYZa}, Theorem 1.2]
If $\al\in[0,1]^d$ is Diophantine, then there exists $\l_0=\l_0(\al,d)>0$ s.th. for all $\l\ge\l_0$, there holds 
\begin{align}
	\label{EDL}
	\int_{[0,1]} \sup _{t \in \mathbb{R}}\left|\left\langle e^{-i t h_{\l,\theta,\al}}\delta_x,  \delta_y\right\rangle\right| d \mu(\theta) \leq C e^{-\tilde \gamma|x-y|}.
\end{align}
Furthermore, the exponent   $\tilde \g$ in the r.h.s.~satisfies $\tilde \g \ge \frac34\log {\l}$. 
\end{theorem}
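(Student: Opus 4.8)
The plan is to deduce exponential dynamical localization (EDL) from a quantitative Anderson localization statement, in three stages: (i) a Green's-function estimate at large coupling with decay rate of order $\log\l$; (ii) the passage to semi-uniformly localized eigenfunctions (SULE); (iii) an eigenfunction expansion that converts SULE together with a $\theta$-average into the integrated bound \eqref{EDL}. An alternative route in $d=1$ is through Aubry duality and quantitative reducibility of the associated analytic cocycle; since the statement is multidimensional I would instead run the Green's-function multiscale scheme of Bourgain--Goldstein--Schlag and Bourgain.

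First I would set up the induction at the initial scale. For a box $\Lambda_L(n)=\{m\in\Zb^d:|m-n|\le L\}$ write $G_{\Lambda}(E)=(h_{\l,\theta,\al}|_\Lambda-E)^{-1}$. One exploits that $-\Lap$ has norm $2d$ while the potential is multiplication by $\l\cos(2\pi(\theta+m\cdot\al))$: a site $m$ is non-resonant for energy $E$ once $|\l\cos(2\pi(\theta+m\cdot\al))-E|\ge \l^{1/2}$, and at such sites a Neumann/Combes--Thomas expansion in $\l^{-1/2}(-\Lap)$ produces off-diagonal decay at rate $(1-o(1))\log\l$. The Diophantine condition on $\al$ is used twice: it bounds the number of resonant sites in a box of size $L$, and, more importantly, it forbids them from clustering, because $\{\theta+m\cdot\al\bmod 1:|m|\le L\}$ is $\gtrsim L^{-C}$-separated, so outside a $\theta$-set of measure $\le e^{-L^c}$ the resonant sites are $\ge L^\delta$-apart, which suffices for a one-step resolvent expansion. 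One records: $|G_{\Lambda_L(n)}(E;m,m')|\le e^{-\g_L|m-m'|}$ with $\g_L\ge(1-o(1))\log\l$ whenever $\dist(E,\sigma(h|_{\Lambda_L(n)}))\ge e^{-L^\sigma}$.

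Next I would iterate over scales $L_{k+1}=L_k^{A}$, the induction hypothesis being that, off a $\theta$-set of measure $\le e^{-L_k^c}$, every box of size $L_k$ is either $E$-good (both the distance-to-spectrum and the decay estimate hold) or isolated. Propagation pastes good sub-boxes via the resolvent identity, and the elimination of double resonances across scales is handled by Cartan/semialgebraic-set estimates for the sublevel sets of $E\mapsto\dist(E,\sigma(h|_\Lambda))$ combined with the translation structure $\theta\mapsto\theta+m\cdot\al$. Passing to the limit and invoking Shnol's theorem (generalized eigenfunctions grow at most polynomially) gives: for $\theta$ outside a null set, $h_{\l,\theta,\al}$ has pure point spectrum with eigenfunctions satisfying the SULE bound $|u_j(m)|\le C_\theta\, e^{\eta|j|}\, e^{-\g|m-\nu_j|}$, where $\nu_j$ is the localization centre of $u_j$, $\eta>0$ is arbitrary, and $\g=\g(\l)\ge\log(\l/2)\ge\tfrac34\log\l$ once $\l\ge\l_0(\al,d)$; a Thouless/subharmonicity-type bound gives the matching $\g\le\log\l+O(1)$, so the rate is genuinely of order $\log\l$.

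Finally, expanding in the eigenbasis,
\begin{align}
\sup_{t\in\Rb}\big|\langle e^{-ith_{\l,\theta,\al}}\delta_x,\delta_y\rangle\big|\le\sum_j|u_j(x)|\,|u_j(y)|\le C_\theta^2\sum_j e^{2\eta|j|}\,e^{-\g|x-\nu_j|}\,e^{-\g|y-\nu_j|}.
\end{align}
Using $|x-\nu_j|+|y-\nu_j|\ge|x-y|$ and that the localization centres are quantitatively spread out (each site hosts only $O(1)$ of them, up to the $e^{\eta|j|}$ weight), the $j$-sum is $\le C e^{-\g|x-y|/2}$ on the good $\theta$-set. To integrate over $\theta$, the $\theta$ for which this fails, or for which $C_\theta$ is large, form a set of measure $\le e^{-c|x-y|}$ by the measure bounds from the multiscale step; combining with the trivial bound $\sup_t|\langle\cdots\rangle|\le 1$ there yields $\int_{[0,1]}\sup_t|\langle e^{-ith_{\l,\theta,\al}}\delta_x,\delta_y\rangle|\,d\mu(\theta)\le C e^{-\tilde\g|x-y|}$ for any $\tilde\g<\min(\g/2,c)$, in particular $\tilde\g\ge\tfrac34\log\l$ after enlarging $\l_0$. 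The main obstacle is the multiscale step: with no transfer-matrix cocycle available in $d>1$, positivity of the Lyapunov exponent must be replaced by the initial-scale Green's-function estimate, and the resonance elimination requires the full semialgebraic machinery (Yomdin--Gromov parametrization, Cartan-type estimates) married to the Diophantine translation structure; keeping the decay rate uniform in $E$ and as large as $\tfrac34\log\l$ through all scales, rather than bleeding a fixed fraction per scale, is the quantitative crux and is what forces $\l$ to be large.
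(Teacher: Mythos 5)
The paper itself offers no proof of this statement: it is imported verbatim from \cite{GYZa} and used as a black box to verify Condition (C1), so your proposal has to be measured against the cited proof in the literature. That proof proceeds by quantitative Aubry duality and (almost) reducibility of the dual quasi-periodic cocycle, which is what produces a bound that survives averaging in $\theta$ with an explicit rate of order $\log\lambda$; your route through a Bourgain--Goldstein--Schlag-type Green's function multiscale scheme is a genuinely different method, and the difficulty is that it does not obviously deliver the statement in the form claimed here.

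The concrete gap is in the final averaging step. Your multiscale induction produces, at scale $L_k$, bad $\theta$-sets of measure at most $e^{-L_k^{c}}$ with $c<1$ (sub-exponential in the scale), and the SULE constant $C_\theta$ is only finite for a.e.\ $\theta$, with no quantitative tail bound. But the bound \eqref{EDL} requires, for each pair $(x,y)$, a bad $\theta$-set of measure at most $e^{-c|x-y|}$ \emph{and} uniform control of $C_\theta$ on its complement; with only $e^{-|x-y|^{c}}$ available, the bad-set contribution to $\int_{[0,1]}\sup_t|\langle e^{-ith}\delta_x,\delta_y\rangle|\,d\mu(\theta)$ decays sub-exponentially and the claimed $Ce^{-\tilde\gamma|x-y|}$ with $\tilde\gamma\ge\frac34\log\lambda$ does not follow. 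This ``in expectation'' form of dynamical localization, with the supremum over $t$ inside the $\theta$-integral, is exactly the hard content of the cited theorem and is not a routine corollary of MSA plus SULE plus eigenfunction expansion; it is the reason the actual proof goes through duality and quantitative reducibility, which give $\theta$-averaged estimates directly. Relatedly, you flag but do not resolve how the decay rate $(1-o(1))\log\lambda$ is preserved through all scales; since the explicit exponent $\tilde\gamma\ge\frac34\log\lambda$ is part of the statement (and is what the present paper uses in Remark \ref{remSize} to get $\eps=\cO(1/\log\lambda)$), that loss accounting cannot be left as an acknowledged ``quantitative crux.''
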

Estimate \eqref{EDL}, together the simple argument in the proof of Theorem 7.6 in \cite{dRJLSa},  implies the SUDL condition \eqref{SUDL}  for a.e.~$\theta$ and $\g \ge \frac12 \log\l$. To be self-contain, we reproduce this argument here. 
Fix $\g<\tilde \g$ and consider 
$$
	Q(\theta)=\sum_{x,y}(1+|x|)^{-(d+1)} e^{\g|x-y|} \sup _t\left|\left\langle e^{-i t h_{\l,\theta,\al}}\delta_x,  \delta_y\right\rangle\right| .
$$
Plugging this to \eqref{EDL} shows that $\int_{[0,1]}Q(\theta)\,d\mu<\infty$, and so $Q(\theta)<\infty$ for a.e.~$\theta$. Therefore we have  $\sup _t\left|\left\langle e^{-i t h_{\l,\theta,\al}}\delta_x,  \delta_y\right\rangle\right| \leq C_\theta(1+|x|)^{d+1} e^{-\g|x-y|}$ for a.e.~$\theta$. For  $0<b\le1$, this gives the desired bound \eqref{SUDL} with $\g := \frac23 \tilde \g$.  This proves the claim, and \ref{C1} is verified.

Next, to verify \ref{C2}, we  recall the main result in \cite{CSW}. Consider the cubic NLS
\begin{align}
	\label{CSWeq}
	i\di_T\phi_T =v_{\theta,\al}\phi_T+a_1 \Lap\phi_T+a_2\abs{\phi_T}^2\phi_T.
\end{align}

\begin{theorem}[\cite{CSW}, Theorem 1.1]\label{thmCSZ}
	Let $\delta,\,\delta_*>0$ and $\rho\gg1$. Assume the initial state $\phi_0\in\ell^2(\Zb^d)$ satisfies
	\begin{align}
		\label{}
		\norm{\phi_0}_{\ell^2(B_r^\cp)}<\delta.
	\end{align}
	Then there exists $a_*=a_*(\delta_*,\rho,r,d)$ s.th. for $0<a:=a_1+a_2<a_*$,
	\begin{align}
		\label{andLoc}
		\norm{\phi_T}_{\ell^2(B_{r+\rho}^\cp)}<2\delta,\qquad  T\le \delta \cdot \eps^{-\rho^{1/2}},
	\end{align}
	on a set of $(\theta,\al)$ of measure at least $1-\delta_*$. 
\end{theorem}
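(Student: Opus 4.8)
\emph{Proof sketch.} The plan is to treat \eqref{CSWeq} as a perturbation of the \emph{atomic limit} $a=a_1+a_2=0$, in which the flow is $\phi_T(x)=e^{-iv_{\theta,\al}(x)T}\phi_0(x)$ and hence perfectly localized, and to show that switching on the small coupling $a$ triggers transport only on the time scale $\eps^{-\rho^{1/2}}$. The obstruction to a naive perturbative expansion is small divisors: for the hopping $a_1\Lap$ these are the differences $v_{\theta,\al}(x)-v_{\theta,\al}(y)$, while for the cubic term $a_2\abs{\phi}^2\phi$ they are the multilinear combinations $\pm v_{\theta,\al}(x_1)\pm\cdots\pm v_{\theta,\al}(x_{2k+1})$. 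Using the identity $v_{\theta,\al}(x)-v_{\theta,\al}(y)=-2\sin(2\pi\theta+\pi(x+y)\cdot\al)\sin(\pi(x-y)\cdot\al)$ together with the Diophantine condition $\norm{(x-y)\cdot\al}_{\Rb/\Zb}\gtrsim\abs{x-y}^{-\sigma}$, one gets --- \emph{after discarding a set of $(\theta,\al)$ of measure $<\delta_*$} --- a quantitative lower bound $\abs{v_{\theta,\al}(x)-v_{\theta,\al}(y)}\gtrsim\abs{x-y}^{-\sigma'}$ off a sparse ``resonant'' set; a Borel--Cantelli plus measure estimate handles the multilinear divisors and shows that, inside any fixed box, every resonant cluster has diameter $o(\rho)$.

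Concretely, I would work in a large finite box $\Lambda\supset B_{r+\rho}$ with a buffer, setting up the whole argument as a bootstrap over a scale of nested boxes so that the leakage of $\ell^2$ mass past $\partial\Lambda$ is itself controlled by the spreading bound at the next scale. On $\Lambda$ I would run a normal-form iteration: a sequence of near-identity, mass-preserving changes of variable $\phi\mapsto\Phi^{(n)}$ that successively conjugate away the \emph{non-resonant} monomials of degree $\le n$, solving at each step a cohomological equation inverted by the divisor bounds above. After $n$ steps one arrives at
\[
i\di_T\Phi^{(n)}_T=D^{(n)}_T\,\Phi^{(n)}_T+\mathcal{R}^{(n)}_T,
\]
where $D^{(n)}$ is a multiplication operator in $x$ plus resonant terms coupling only sites within a common cluster, and the remainder $\mathcal{R}^{(n)}$ is a spatially local operator of operator norm $\le(C\eps)^{n}$. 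Optimizing the number of steps $n\sim\rho^{1/2}$ against the unavoidable growth of the scheme's constants and of the cluster sizes with $n$ is exactly what produces the suppression $(C\eps)^{\rho^{1/2}}$ instead of $(C\eps)^{\rho}$.

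To conclude, I would differentiate $\od{}{T}\norm{\Phi^{(n)}_T}_{\ell^2(B_{r+\rho}^\cp)}^2$. The multiplication part of $D^{(n)}$ commutes with the cutoff and drops out; its resonant part cannot transport mass across the gap separating $B_r$ from $B_{r+\rho}^\cp$ once $\rho$ exceeds the maximal cluster diameter, so it too drops out; what survives is $\mathcal{R}^{(n)}$, contributing at most $(C\eps)^{\rho^{1/2}}$ per unit time. Integrating in $T$ and transferring back through the uniformly near-identity change of variables yields $\norm{\phi_T}_{\ell^2(B_{r+\rho}^\cp)}^2\le\norm{\phi_0}_{\ell^2(B_{r}^\cp)}^2+CT(C\eps)^{\rho^{1/2}}$, which is $<(2\delta)^2$ whenever $T\le\delta\,\eps^{-\rho^{1/2}}$ after adjusting $C$. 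Throughout, the conserved quantity $\norm{\phi_T}_{\ell^2}=\norm{\phi_0}_{\ell^2}$ keeps $\norm{\phi_T}_{\ell^\infty}$ bounded, so the cubic term is genuinely an $O(\eps)$ perturbation at every step.

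\textbf{Main obstacle.} The hard part is incorporating the cubic nonlinearity into the normal-form scheme. Unlike the linear hopping, the degree-$n$ contribution of $a_2\abs{\phi}^2\phi$ is a sum over tree/Feynman-type diagrams with $\sim2n+1$ legs, so both the combinatorial count of resonant monomials and the spatial bookkeeping --- showing the surviving resonant terms still couple only nearby sites --- become delicate, and one must simultaneously propagate mass conservation and the $\ell^\infty$-bound through the transformed variables so that the iteration closes. Making the divisor and measure estimates uniform in time for these multilinear expressions, and controlling how the resonant-cluster sizes grow with $n$ --- which is what pins the time exponent to $\rho^{1/2}$ --- is the technical crux.
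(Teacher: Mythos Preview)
The paper does \emph{not} prove this statement: \thmref{thmCSZ} is quoted verbatim from the external reference \cite{CSW} and used as a black box to verify condition \ref{C2} in the proof of \propref{Ex1}. There is therefore no ``paper's own proof'' to compare your attempt against.

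That said, your sketch is a reasonable outline of the Nekhoroshev-type strategy that underlies results of this kind (going back to \cite{WZ2008} in the random setting and extended to quasi-periodic potentials in \cite{CSW,GYZ}): perturb off the atomic limit, run a Birkhoff normal-form iteration to kill non-resonant monomials, control small divisors via Diophantine/measure estimates on $(\theta,\al)$, and optimize the number of normal-form steps against the growth of constants to land on the sub-exponential time scale $\eps^{-\rho^{1/2}}$. Whether the details you describe (cluster-diameter control, the precise divisor lower bounds, the $\ell^\infty$ bootstrap) match the actual implementation in \cite{CSW} cannot be judged from the present paper, which neither reproduces nor summarizes that argument.
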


We now shows how \thmref{thmCSZ} implies \ref{C2}, assuming $\supp \phi_0\subset B_r(x_0)$  with $\abs{x_0}>r+R$.
To this end, given $\l,\,U$ as in \eqref{NLS}, let 
$$a_1=\l^{-1}, \quad a_2=Ua_1,$$
and  consider the change of variable
\begin{align}
	\label{}
	\varphi_{t(T)}(x) :=\phi_T(x),\quad t(T)=\l^{-1} T.
\end{align}
Then direct computation shows that $\varphi_t$ solves \eqref{NLS} if and only if $\phi_T$ solves \eqref{CSWeq}. By \thmref{thmCSZ}, it follows that there exists $\l_1=\l_1 (\delta_*,\rho,r,d,U)>0$ s.th.~for all $\delta>0$ and  $\l>\l_1$, there holds
\begin{align}
	\label{122}
	\norm{\varphi_{t}}_{\ell^2([B_{r+\rho}(x_0)]^\cp)}<2\delta, 
\end{align}
in the time interval
\begin{align}
	\label{tauCond}
	0\le t\le  \delta   \l^{\rho^{1/2} {-1}}.
\end{align}
Now we show that \eqref{122} implies condition \ref{C2}. Given $\eps,\,M>0$,    choose
\begin{align}
	\label{}
	\delta = \frac12 e^{- M\rho/\eps}.
\end{align}
To ensure \ref{NLSDL}, we need to show that \eqref{122} holds with this choice of $\delta$ for all $t\le \rho/\eps$. In view of \eqref{tauCond}, we seek condition on $\l$ that gives
\begin{align}
	\label{514}
	e^{-M\rho/\eps} \l ^{\rho^{1/2}-1}\ge 2\rho/\eps .
\end{align}
Taking logarithm on both sides of \eqref{514} shows that a sufficient condition for \eqref{514} is $\l\ge \l_2$ with
\begin{align}
	\label{eCond}
	\l_2 =\max\left(\l_1,\exp (\frac{\log(\frac{2\rho}{\eps})+\frac{M\rho}{\eps}}{\sqrt{\rho} -1})\right).
\end{align}
Note that, as expected, for fixed $\eps$, the r.h.s.~of \eqref{eCond} tends to $\infty$ as $\rho\to \infty$.  Thus \ref{C2} is verified.

Finally, since the set of $\al$ satisfying the Diophantine condition has full measure, the set of pairs $(\theta,\al)$ that ensures \ref{C1}, while satisfying the assumption of \thmref{thmCSZ}, has measure at least $1-\delta_*$. 
Thus the proof of \propref{Ex1} is complete.
\end{proof}

\begin{remark}[On the size of $\eps$ for quasi-periodic models]\label{remSize}
From the proof above, we see that for given $\l$, we can choose $\eps$ to be of the order $\frac{1}{\log\l}$ in our main results, Thms.~\ref{thm2} and \ref{thm1}. 

Indeed, in   exactly two places in the proof of the main results are   constraints on $\l$ imposed:
\begin{itemize}
	\item In \eqref{477}, we require $\g=\g(\l)$ to be at least $96+12M/\tilde \eps$, where  $\tilde \eps = 3\eps$ and  $M$ is an $\cO(1)$ constant chosen later in \eqref{Mcond}, depending only on the interaction strength $U$ in \eqref{HNdef}. This, together with the lower bound $\g \ge \frac12 \log\l$ proved above, shows that a sufficient condition for \eqref{477} is $\eps \ge \frac{ c}{\log\l} $, for some constant $c>0$ depending only on $U$.  
	\item In \eqref{eCond}, we require $\l$ to be larger than $\l_1$ and $g_{\rho,M}(\eps):=\exp (\frac{\log(\frac{2\rho}{\eps})+\frac{M\rho}{\eps}}{\sqrt{\rho} -1})$.  Here, $\l_1$ is independent of $\eps$. For fixed $\rho$ and $M$, the function $\log g_{\rho,M}(\eps)=\cO(\eps^{-1})$ for $\eps\ll1$. Therefore, a sufficient condition for \eqref{eCond} is    $\eps \ge \frac{ \tilde c}{\log\l} $ for some $\tilde c>0$ depending only on $\delta_*,\rho,r,d$ and $U$.
\end{itemize}
\end{remark}
\begin{proof}[Proof of \propref{Ex2}]
To verify condition \ref{C1} in the random setting above, one combines Theorem 7.6 in \cite{dRJLSa} with Theorem 10.2 in \cite{aizenman2015random}.
%, noting also that condition (10.11) in the latter allows to choose the decay rate roughly as $\log {\lambda}$. 
To verify \ref{C2}, one uses the long-time Anderson localization for random NLS \cite{WZ2008}, which has been recently generalized to $\Zb^d,\,d\ge1$ in \cite{CSWa}. 
The rest of the proof is completely analogous to that of Prop.~\ref{Ex1} and we skip the details. 
\end{proof}

\section*{Acknowledgments}
The research of M.~L.~is supported by the DFG through the grant TRR 352 – Project-ID 470903074 and by
the European Union (ERC Starting Grant MathQuantProp, Grant Agreement 101163620)\footnote{Views and opinions expressed are however those of the authors only and do not necessarily reflect those of the European Union or the
European Research Council Executive Agency. Neither the European Union nor the granting authority can be held responsible for
them.}.  S.~R.~is supported by the European Research Council via the ERC CoG RAMBAS–Project–Nr.~10104424.
J.~Z.~is supported by National Key R \& D Program of China Grant 2022YFA100740, China Postdoctoral Science Foundation Grant 2024T170453, National Natural Science Foundation of China Grant 12401602, and the Shuimu Scholar program of Tsinghua University. 
He thanks L.~Ge, Y.~Sun, J.~You, and Q.~Zhou for helpful discussions.

\section*{Declarations}
\begin{itemize}
\item Conflict of interest: The authors have no financial or proprietary interests in any material discussed in this article.
\item Data availability: Data sharing is not applicable to this article as no datasets were generated or analyzed during the current study.
\end{itemize}

\bibliographystyle{alpha}
\bibliography{SlowFlucEst}

\end{document}